\numberwithin{equation}{section}
\newtheorem{theorem}{Theorem}[section]
\newtheorem{proposition}[theorem]{Proposition}
\newtheorem{lemma}[theorem]{Lemma}
\newtheorem{corollary}[theorem]{Corollary}
\def\Sph{\mathbb{S}}
\DeclareMathOperator{\Tr}{Tr}
\newcommand{\R}{\mathbb R}
\newcommand{\Z}{\mathbb Z}
\def\N{\mathbb{N}}
\def\C{\mathbb{C}}
\def\U{\mathcal{U}}
\def\Id{\mathrm{Id}}
\def\D{\partial}
\newcommand{\bzd}{{\det}_{b}}
\newcommand{\bz}{{}^b\! \zeta}
\def\dbar{d\hspace*{-0.08em}\bar{}\hspace*{0.1em}}
\begin{document}

\title[Analytic surgery]
{Analytic surgery of the zeta function}

\author{Klaus Kirsten}
\address{Department of Mathematics\\ Baylor University\\
         Waco\\ TX 76798\\ U.S.A. }
\email{Klaus$\_$Kirsten@baylor.edu}

\author{Paul Loya}
\address{Department of Mathematics \\Binghamton University\\
Vestal Parkway East\\Binghamton\\NY 13902\\ U.S.A. }
\email{paul@math.binghamton.edu}

\thanks{2000 Mathematics Subject Classification.
Primary: 58J28, 58J52}

\date{{\today. file name}: KL-CMP.tex}

\begin{abstract}
In this paper we study the asymptotic behavior (in the sense of meromorphic functions) of the zeta function of a Laplace-type operator on a closed manifold when the underlying manifold is stretched in the direction normal to a dividing hypersurface, separating the manifold into two manifolds with infinite cylindrical ends. We also study the related problem on a manifold with boundary as the manifold is stretched in the direction normal to its boundary, forming a manifold with an infinite cylindrical end. Such singular deformations fall under the category of ``analytic surgery'', developed originally by Hassell, Mazzeo and Melrose \cite{mazz95-5-14,hass95-3-115,hass98-6-255} in the context of eta invariants and determinants.
\end{abstract}

\maketitle

%\tableofcontents

%%%%%%%%%%%%%%%%%%%%%%%%%%%%%%%%%%%%%%%%%%%%%%%%%%%%%%%%%%%%%%%%%
\section{Introduction}
%%%%%%%%%%%%%%%%%%%%%%%%%%%%%%%%%%%%%%%%%%%%%%%%%%%%%%%%%%%%%%%%%

The behavior of global spectral invariants of Laplace-type operators under various deformations plays an important role in different areas of mathematics and physics. For example, the behavior of effective actions under conformal transformations has been intensively studied \cite{blau89-4-1467,dett92-377-252,dowk86-33-3150,dowk94-162-633,dowk78-11-895,dowk90-31-808}. The main reason for these studies is that exact results for a given operator may sometimes be obtained by transforming to a simpler operator where the answer is known \cite{bran94-344-479}. This has applications in quantum field theories in curved space times \cite{blau88-209-209,blau89-4-1467,buch85-162-92,buch89-12-1} and in finite temperature theories in static spacetimes \cite{dowk88-38-3327,dowk89-327-267,kirs91-8-2239}. Also the change of the effective action when deforming the boundary of a region can be studied this way \cite{dett92-377-252,dowk94-162-633}.

Mathematically the analysis of effective actions amounts to the evaluation of functional determinants as they have been introduced by Ray and Singer \cite{ray71-7-145} to give a definition of the Reidemeister-Franz torsion \cite{fran35-173-245}. The above mentioned conformal transformation properties have been crucial in the proof of extremal properties of determinants \cite{bran92-149-241,osgo88-80-148}. But also completely different transformation properties have been analyzed. In particular, the behavior of determinants of Laplace-type operators with respect to certain singular deformations has been analyzed in great detail. One type of deformation is a literal \emph{cut and paste}
decomposition formula for the determinant when the underlying manifold is cut along a dividing hypersurface into two manifolds with boundary.  This was initiated by Burghelea, Friedlander and Kappeler \cite{BuD-FrL-KaT92} with further developments in, e.g., \cite{CarG02,hass98-6-255,hass99-18-971,lee03-355-4093,loya04-48-1279,MulJ-MulW06,vish95-167-1}.

The main focus of this paper is on a different type of deformation, called \emph{analytic surgery} (although we do have something to say about ``cutting and pasting'' --- see Section \ref{sec-glueform}). This method was introduced to study the behavior of the eta and functional determinant invariants of Dirac- and Laplace-type operators when a collar neighborhood of a dividing hypersurface of a closed manifold is stretched to a cylinder of infinite length, or when a collar neighborhood of the boundary of a manifold with boundary is stretched to an infinite cylinder. The limit manifolds under analytic surgery are noncompact complete manifolds and there are additional complications due to the presence of a continuous spectrum, which can be addressed using techniques such as Melrose's $b$-calculus \cite{BMeR93}. To our knowledge, analytic surgery was first discussed geometrically, in the context of the eta invariant, by Singer \cite{ISiI88}, and the first papers to systematize the analysis of such degenerations for the eta invariant and functional determinant were provided by Douglas and Wojciechowski \cite{doug91-142-139,WoK94,WoK95} (who named the process \emph{taking the adiabatic limit}) and by Hassell, Mazzeo and Melrose \cite{hass98-6-255,hass95-3-115,mazz95-5-14}, from which we get the terminology \emph{analytic surgery}. Later related developments are given by various authors in \cite{IDaiX06,lee03-355-4093,ILeeY05,loya05-15-285,MulJ-MulW06,PaJ-WoK00,PaJ-WoK02,PaJ-WoK02II,PaJ-WoK06}.
The methods of \cite{doug91-142-139} and \cite{hass98-6-255,hass95-3-115,mazz95-5-14} are quite different. The former is based on heat kernel estimates with a systematic use of the Duhamel principle for the heat kernel. The latter is based on encoding the degeneracy of the Schwartz kernel of the resolvent uniformly as the cylinder is stretched in an appropriate blown-up manifold \cite{BMeR93}. It is this latter method on which the present paper is based, with the exception that a gluing-type formula is used to bypass the surgery calculus of Hassell-Mazzeo-Melrose to directly analyze resolvents using the $b$-calculus \cite{BMeR93}.

A seemingly different area where the change of spectral properties is relevant is the Casimir effect; see, e.g., \cite{bord01-353-1,eliz94b,emig07-99-170403,hert05-95-250402,milt01b,milt04-37-209}. Calculations of Casimir energies are often plagued by divergencies and suitable subtractions need to be made. This need is based on the fact that only energy differences between two states have a physical meaning. By comparing suitable configurations infinities cancel and finite answers are obtained. In this context it would be most desirable to know how the Casimir energy changes when one of these configurations is deformed into the other, where typical deformations would be a change
in the geometry of an object or a boundary condition. The change between configurations in suitable classes would be finite by construction and no ambiguities would arise \cite{scha06-73-042102}.

Several approaches to analyze the Casimir energy are available. Technically closest related to the topic of functional determinants is the zeta function method. Assuming a discrete spectrum $\lambda_1\leq \lambda_2 \leq ...\to\infty$ of the Laplace-type operator $\Delta$, the zeta function is defined by
\[
\zeta (\Delta, s) = \Tr(\Delta^{-s}) = \sum_i \lambda_i^{-s},
\]
where $\Re s $ of the complex parameter $s$ needs to be sufficiently large such as to make this sum convergent \cite{weyl12-71-441}. The determinant is then defined by $(d/ds)|_{s=0} \zeta (s)$, whereas the Casimir energy is related to (the finite part of) $\zeta(s=-1/2)$. If in addition to the above mentioned transformation properties for determinants analogous properties were to exist for the Casimir energies, it would seem natural to assume that suitable relations should not just hold at $s=0$ and $s=-1/2$, but in fact for all values of $s$. This is exactly what the present article is about. Although our initial goal was to find relations between Casimir energies for different configurations, the just mentioned observation led us to analytic surgery formulas for zeta functions valid for all values of the complex parameter $s$. The results obtained are described and summarized in the following.

%***************************************************************%
\subsection{Stretching manifolds with boundary}
%***************************************************************%

Let $\Delta$ be a Laplace-type operator on $M_0$, a compact Riemannian manifold with boundary, and let $Y = \partial M_0$. Throughout this paper, `Laplace-type' means a symmetric (formally self-adjoint), nonnegative, second order differential operator acting between sections of a Hermitian vector bundle whose principal symbol is the underlying Riemannian metric. For notational simplicity we will always leave out vector bundles from our notations and pretend our operators are acting on functions. We assume that $M_0$ has a collar neighborhood
\[
M_0 \cong [-1,0] \times Y
\]
over which $\Delta = -\D_x^2 + \Delta_Y$ where $\Delta_Y$ is a Laplace-type operator on $Y$. Here, we identify the original boundary of $M_0$ with $\{0\} \times Y$. Let $N_r = [0,r] \times Y$ and let $M_r$ be the manifold obtained from $M_0$ by attaching the cylinder $N_r$ to $\partial M_0$,
\[
M_r = M_0 \cup_Y N_r ;
\]
see Figure \ref{fig-glue1}.

\begin{figure}[h!] \centering
\includegraphics{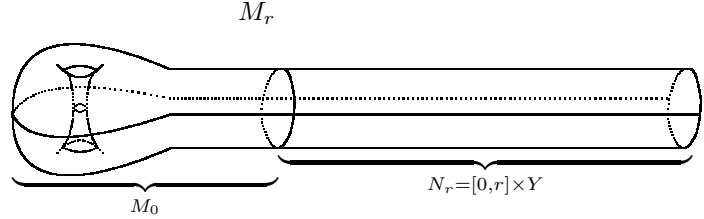} \caption{Sticking the cylinder $N_r = [0,r] \times Y$ onto $M_0$ forms $M_r$.} \label{fig-glue1}
\end{figure}

The Laplace-type operator $\Delta$ has a canonical extension to $M_r$ (as do all the geometric structures on $M_0$) and putting Dirichlet boundary conditions on $\partial M_r = \{r\}\times Y$, we denote the corresponding Dirichlet Laplacian by $\Delta_{M_r}$. We put
\[
M_{\infty} = M_0 \cup_Y \big([0,\infty) \times Y\big),
\]
which is a manifold with cylindrical end. We let $\Delta_{\infty}$ be the canonical extension of $\Delta$ to $M_{\infty}$. Let $\zeta(\Delta_{M_r},s)$ denote the zeta function of the operator $\Delta_{M_r}$ and let $\bz(\Delta_{M_{\infty}}, s)$ denote the $b$-zeta function of $\Delta_{M_{\infty}}$, which was introduced by Piazza \cite{PiP93} and is a natural generalization of the zeta function of compact manifolds to manifolds with cylindrical ends \cite{BMeR93}. An equally natural generalization is the relative zeta function studied in \cite{hass99-18-971,mull98-192-309}. As in \cite{MulJ-MulW06} we put
\[
\xi_Y(s) := \frac{\Gamma\big(s - 1/2\big)}{\sqrt{\pi}\, \Gamma(s)} \, \zeta\big(\Delta_Y,s-1/2\big),
\]
where $\zeta(\Delta_Y,s)$ is the zeta function of $\Delta_Y$. The following is our first result.

\begin{theorem} \label{thm-main0} Assume that $\ker \Delta_Y = \{0\}$ and $\ker \Delta_\infty = \{0\}$. Then for $r \geq r_0$ for some $r_0 > 0$, as meromorphic functions of $s \in \C$ we have
\[
\zeta(\Delta_{M_r}, s) - \frac{r}{2} \xi_Y(s) \equiv \bz(\Delta_{M_{\infty}},s) - \frac14 \zeta(\Delta_Y,s)
\]
modulo an entire function of $s$ that vanishes exponentially fast as $r \to \infty$ uniformly on compact subsets of $\C$. In particular, as meromorphic functions of $s \in \C$,
\[
\lim_{r \to \infty} \left[\zeta(\Delta_{M_r}, s) - \frac{r}{2} \xi_Y(s)\right] = \bz(\Delta_{M_{\infty}},s) - \frac14 \zeta(\Delta_Y,s) .
\]
\end{theorem}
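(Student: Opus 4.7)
The plan is to convert the claimed meromorphic identity of zeta functions into an identity of (regularized) heat traces via the Mellin transform, and then establish that identity by a method-of-images parametrix for the heat kernel on $M_r$ built from the heat kernel on $M_\infty$. For $\Re s$ large,
\[
\zeta(\Delta_{M_r},s) = \tfrac{1}{\Gamma(s)}\!\int_0^\infty\! t^{s-1}\Tr(e^{-t\Delta_{M_r}})\,dt,\qquad
\bz(\Delta_\infty,s) = \tfrac{1}{\Gamma(s)}\!\int_0^\infty\! t^{s-1}\,\bTr(e^{-t\Delta_\infty})\,dt,
\]
where $\bTr$ is Melrose's regularized trace. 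Using the Mellin identity $\int_0^\infty t^{s-3/2}\Tr(e^{-t\Delta_Y})\,dt = \Gamma(s-1/2)\zeta(\Delta_Y,s-1/2)$, the theorem reduces, modulo meromorphic continuation, to the heat-trace identity
\begin{equation}\label{eq:heatid-plan}
\Tr(e^{-t\Delta_{M_r}}) - \frac{r}{2\sqrt{\pi t}}\Tr(e^{-t\Delta_Y}) \;=\; \bTr(e^{-t\Delta_\infty}) - \tfrac{1}{4}\Tr(e^{-t\Delta_Y}) + E_r(t),
\end{equation}
where $E_r(t)$ is exponentially small in $r$, with $t$-bounds uniform enough that its Mellin transform is entire in $s$ and decays exponentially in $r$ uniformly on compact subsets of $\C$.

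To prove \eqref{eq:heatid-plan} I would build a parametrix for $K_{M_r}$ by reflecting across the Dirichlet boundary $\{r\}\times Y$. Viewing $M_r\subset M_\infty$ and letting $\sigma_r$ denote the reflection $(x',y')\mapsto(2r-x',y')$ extended trivially off the cylinder, the ansatz
\[
K_{M_r}(t;p,q)\approx K_\infty(t;p,q)-K_\infty(t;p,\sigma_r(q))
\]
vanishes to leading order at $\{r\}\times Y$. Standard Duhamel arguments together with Gaussian off-diagonal bounds for $K_\infty$ uniform out the cylindrical end, available from the $b$-calculus construction, show the parametrix error is $O(e^{-cr^2/t})$ along the diagonal. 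Taking the trace and using the cylindrical-end asymptotic $K_\infty(t;(x,y),(x,y))\to\tfrac{1}{\sqrt{4\pi t}}K_Y(t;y,y)$ as $x\to\infty$, the bulk contribution satisfies, by the very definition of the $b$-trace,
\[
\int_{M_r}K_\infty(t;p,p)\,dp \;=\; \bTr(e^{-t\Delta_\infty})+\frac{r}{2\sqrt{\pi t}}\Tr(e^{-t\Delta_Y})+O(e^{-cr}),
\]
while the reflection contribution, after the substitution $u=r-x$ and the identity $\int_0^\infty e^{-u^2/t}\,du=\tfrac12\sqrt{\pi t}$, is
\[
\int_0^r\!\!\int_Y K_\infty\bigl(t;(x,y),(2r-x,y)\bigr)\,dy\,dx \;=\; \frac{\Tr(e^{-t\Delta_Y})}{\sqrt{4\pi t}}\int_0^r e^{-u^2/t}\,du+O(e^{-cr^2/t}) \;=\; \tfrac{1}{4}\Tr(e^{-t\Delta_Y})+O(e^{-cr^2/t}).
\]
Subtracting these two contributions produces \eqref{eq:heatid-plan} with exactly the stated constants.

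The main obstacle is controlling $E_r(t)$ uniformly in $t>0$ so that the Mellin transform behaves well and the exponential decay in $r$ survives. For small $t$ one needs uniform-in-$r$ Gaussian off-diagonal bounds for both $K_{M_r}$ and $K_\infty$; the former follow from the latter via the parametrix and the reflection estimates. For large $t$ one exploits the hypotheses $\ker\Delta_Y=\{0\}$ and $\ker\Delta_\infty=\{0\}$: these produce a spectral gap on $M_\infty$ that, by Dirichlet monotonicity and a compactness argument, lifts to a uniform positive lower bound on the smallest eigenvalue of $\Delta_{M_r}$ for $r\ge r_0$. This yields the required exponential decay of the heat traces in $t$ uniformly in $r$, so that the Mellin transform of the exponentially small $E_r(t)$ is an entire function of $s$ whose restrictions to compact subsets of $\C$ vanish exponentially as $r\to\infty$, giving the conclusion.
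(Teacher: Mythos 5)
Your route is genuinely different from the paper's. The paper never touches heat kernels: it proves a Burghelea--Friedlander--Kappeler-type gluing formula for the zeta function itself (Theorems \ref{thm-Cas} and \ref{thm-Cas2}), expressing $\zeta(\Delta_{M_0},s)+\zeta(\Delta_{N_r},s)-\zeta(\Delta_{M_r},s)$ and its $b$-analogue for $M_\infty$ as contour integrals of $\Tr_Y\big[R(\lambda)^{-1}R'(\lambda)-\widehat R(\lambda)^{-1}\widehat R'(\lambda)\big]$ over $\Re\lambda=c$, computes the model terms explicitly ($\zeta(\Delta_{N_r},s)=\frac r2\xi_Y(s)-\frac12\zeta(\Delta_Y,s)+\kappa(r,s)$ from Proposition \ref{prop-cylinder}, and $\bz(\Delta_Z,s)=-\frac14\zeta(\Delta_Y,s)$), and reduces everything to one estimate: $R_r(\lambda)-R_\infty(\lambda)=C_r(\lambda)$ is an exponentially small family of parameter-dependent smoothing operators, on a contour chosen independently of $r$ --- which is exactly where $\ker\Delta_\infty=\{0\}$ enters, keeping $R_\infty(\lambda)^{-1}$ analytic near $\lambda=0$ so that $\Id+C_r(\lambda)R_\infty(\lambda)^{-1}$ is uniformly invertible for $r\ge r_0$. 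Your proposal instead works on the heat-trace side in the Douglas--Wojciechowski/Duhamel style: Mellin transform, a reflection parametrix for $K_{M_r}$ built from $K_\infty$, the $b$-trace regularization producing the $\frac{r}{2\sqrt{\pi t}}\Tr(e^{-t\Delta_Y})$ counterterm, and the image term producing $\frac14\Tr(e^{-t\Delta_Y})$. Your constants are correct (they check against the pure cylinder case of Section \ref{sec-simple}), and the hypotheses enter in the analogous place. What the paper's route buys is that all the delicate uniformity is concentrated in a single algebraic operator estimate ($C_r(\lambda)\in\widetilde{\Psi}^{-\infty}_{\Lambda}(Y)$, Lemma \ref{lem-ideal} and Proposition \ref{propCr}) with no long-time heat kernel analysis at all; what your route would buy is independence from the Dirichlet-to-Neumann machinery.

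The genuine gap is the large-$t$ uniformity. You argue that the spectral gap gives exponential decay of the heat traces in $t$ uniformly in $r$, and conclude that the Mellin transform of ``the exponentially small $E_r(t)$'' is entire and $O(e^{-cr})$; but uniform-in-$r$ decay in $t$ of the individual traces does not make $E_r(t)$ small in $r$ for large $t$. For $t\gtrsim r$ the Gaussian error $e^{-cr^2/t}$ from the Duhamel/image comparison degenerates, while $\Tr(e^{-t\Delta_{M_r}})$ and the counterterm each carry a factor of order $r$ (the eigenvalue count below a fixed level grows linearly in $r$), so the bound you actually get there is $|E_r(t)|\le C(1+r)e^{-\delta t}$, whose Mellin contribution is $O(r)$, not $O(e^{-cr})$. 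Likewise your bulk identity $\int_{M_r}K_\infty(t;p,p)\,dp=\bTr(e^{-t\Delta_\infty})+\frac{r}{2\sqrt{\pi t}}\Tr(e^{-t\Delta_Y})+O(e^{-cr})$ is not ``by the very definition of the $b$-trace'': it requires estimating the tail over $x\ge r$ of the deviation of the diagonal of $K_\infty$ from the model cylinder kernel, and for large $t$ that deviation is not Gaussian-small in $x$ alone. Both points are repairable by interpolating the two regimes: split at $t=\epsilon r$, using $e^{-cr^2/t}\le e^{-cr/\epsilon}$ for $t\le\epsilon r$ and $(1+r)e^{-\delta t}\le Ce^{-\delta t/2}e^{-\delta\epsilon r/2}$ for $t\ge\epsilon r$; equivalently, combine Davies--Gaffney off-diagonal bounds with the spectral gap via $\min\big(e^{-\delta t},Ce^{-d^2/Ct}\big)\le Ce^{-\delta t/2}e^{-d^2/(2Ct)}$, and bound the deviation on the end by $Ce^{-\mu_0 t}e^{-x^2/Ct}$ so its tail integral is $O\big(e^{-\mu_0 t/2}e^{-cr}\big)$. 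With those estimates written out, your reduction does yield an entire Mellin transform that is $O(e^{-cr})$ uniformly on compact $s$-sets, so the approach can be completed; as written, though, the decisive uniform estimate is asserted rather than proved.
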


By ``modulo an entire function of $s$ that vanishes exponentially fast as $r \to \infty$ $\ldots$'' we mean that
\[
\zeta(\Delta_{M_r}, s) - \frac{r}{2} \xi_Y(s) = \bz(\Delta_{M_{\infty}},s) - \frac14 \zeta(\Delta_Y,s) + f(r,s),
\]
where $f(r,s) \in C^\infty((r_0 , \infty) \times \C)$ is an entire function of $s \in \C$ such that given any compact subset $K \subseteq \C$ there are constants $c,C > 0$ such that for all $r > r_0$ and $s \in K$,
\[
|f(r,s)| \leq C e^{-cr}.
\]
We remark that Theorem \ref{thm-main0} also holds if $M_0$ has boundary components other than $Y$, but only $Y$ is stretched leaving the other ones fixed; at the other boundary components we put local boundary conditions such as Dirichlet boundary conditions.

Taking the derivative of both sides of the equality in Theorem \ref{thm-main0} and setting $s = 0$, we recover Lee \cite{ILeeY05} and M\"uller and M\"uller's \cite{MulJ-MulW06} analytic surgery formulas for $\zeta$-regularized determinants. (The formulas in \cite{ILeeY05,MulJ-MulW06} were not in terms of $b$-zeta functions but our formula is equivalent to theirs.)

\begin{corollary} With the same assumptions as in Theorem \ref{thm-main0}, we have the following analytic surgery formula for zeta-regularized determinants:
\[
\lim_{r \to \infty} e^{\frac{r}{2} \xi_Y'(0)} \det(\Delta_{M_r}) = (\det(\Delta_{Y} ))^{-1/4}  \, \bzd (\Delta_{M_{\infty}}).
\]
\end{corollary}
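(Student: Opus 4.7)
The plan is to differentiate the meromorphic identity of Theorem \ref{thm-main0} in $s$, evaluate at $s = 0$, and exponentiate. Writing that identity explicitly as
\[
\zeta(\Delta_{M_r}, s) - \tfrac{r}{2}\xi_Y(s) = \bz(\Delta_{M_\infty}, s) - \tfrac14 \zeta(\Delta_Y, s) + f(r,s),
\]
with $f(r,\cdot)$ entire in $s$ and $|f(r,s)| \leq C e^{-cr}$ uniformly on compact subsets of $\C$, I would differentiate in $s$. Every term on either side is holomorphic at $s = 0$ (the ordinary zeta functions by standard heat-kernel asymptotics on a compact manifold, the $b$-zeta function by the Melrose--Piazza framework alluded to in the introduction), so evaluation at $s = 0$ produces
\[
\zeta'(\Delta_{M_r}, 0) - \tfrac{r}{2}\xi_Y'(0) = \bz'(\Delta_{M_\infty}, 0) - \tfrac14 \zeta'(\Delta_Y, 0) + \partial_s f(r,0).
\]

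Next I would invoke the standard definitions $\det(\Delta) = e^{-\zeta'(\Delta,0)}$ and $\bzd(\Delta_{M_\infty}) = e^{-\bz'(\Delta_{M_\infty},0)}$. Multiplying the previous identity by $-1$ and exponentiating yields
\[
e^{\frac{r}{2}\xi_Y'(0)} \det(\Delta_{M_r}) = (\det \Delta_Y)^{-1/4}\, \bzd(\Delta_{M_\infty}) \cdot e^{-\partial_s f(r,0)}.
\]
It remains only to verify that $\partial_s f(r,0) \to 0$ as $r \to \infty$. Since $s \mapsto f(r,s)$ is entire and bounded by $C e^{-cr}$ on the closed unit disk (a particular compact subset of $\C$), Cauchy's integral formula for the derivative gives
\[
|\partial_s f(r,0)| = \Big|\frac{1}{2\pi i}\oint_{|s|=1}\frac{f(r,s)}{s^2}\,ds\Big| \leq C e^{-cr},
\]
which tends to $0$. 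Passing to the limit $r \to \infty$ produces the claimed formula.

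There is essentially no conceptual obstacle at this stage: all the substantive analytic work has been absorbed into Theorem \ref{thm-main0}. The only subtlety worth flagging is the passage from the uniform bound on $f$ to the corresponding bound on its $s$-derivative at $0$; this is handled by the fact that $f(r,\cdot)$ is genuinely entire (not merely meromorphic) in $s$, so Cauchy's estimate applies directly and preserves the same exponential rate $e^{-cr}$.
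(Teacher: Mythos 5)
Your proposal is correct and follows exactly the route the paper intends: the paper derives the corollary simply by differentiating the identity of Theorem \ref{thm-main0} at $s=0$ and exponentiating, with the exponentially small remainder $f(r,s)$ (and hence, by Cauchy's estimate as you note, $\partial_s f(r,0)$) vanishing in the limit. Your explicit treatment of the derivative of the error term via the Cauchy integral formula is a reasonable way to make precise what the paper leaves implicit.
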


%***************************************************************%
\subsection{Stretching closed manifolds along an interior cylinder}
%***************************************************************%

Now let $\Delta$ be a Laplace-type operator on $M$, a closed ($=$ compact without boundary) Riemannian manifold, and let $Y \subseteq M$ be an embedded codimension one hypersurface that divides $M$ into two connected components, the closures of which are smooth manifolds with boundary $M_1$ and $M_2$ with a common boundary $Y := \partial M_1 = \partial M_2$; see Figure \ref{fig-glue}.  We assume that $M$ has a collar neighborhood
\[
M \cong [-1,1] \times Y
\]
over which $\Delta = -\D_x^2 + \Delta_Y$ where $\Delta_Y$ is a Laplace-type operator on $Y$. Here, we identify the original dividing hypersurface with $\{0\} \times Y$.

\begin{figure}[h!] \centering
\includegraphics{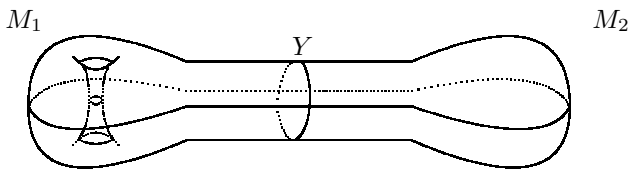} \caption{$M = M_1 \cup_Y M_2$.} \label{fig-glue}
\end{figure}

Now let $N_r = [-r,r] \times Y$ and put
\[
M_r = M_1 \cup_{\{-r\} \times Y} N_r \cup_{\{r\} \times Y} M_2;
\]
in other words, we replace the dividing hypersurface $Y$ in the manifold $M$ by the cylinder $N_r$ and then glue along the ends; see Figure \ref{fig-glue2}.
\begin{figure}[h!] \centering
\includegraphics{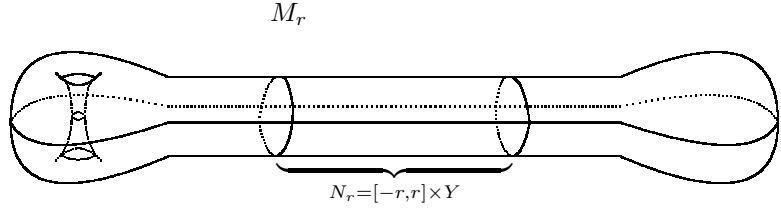} \caption{Replacing the hypersurface $Y$ in $M$ by the collar $N_r = [-r,r] \times Y$ forms the stretched manifold $M_r$.} \label{fig-glue2}
\end{figure}
The Laplace-type operator $\Delta$ has a canonical extension to $M_r$, which we denote by $\Delta_{M_r}$. For $i = 1,2$, we put
\[
M_{i,\infty} = M_i \cup_Y \big([0,\infty) \times Y\big),
\]
which is a manifold with cylindrical end. We let $\Delta_{i,\infty}$ be the canonical extension of $\Delta_{M}|_{M_i}$ to $M_{i,\infty}$. Let $\zeta(\Delta_{M_r},s)$ denote the zeta function of the operator $\Delta_{M_r}$ and $\bz(\Delta_{M_{i,\infty}}, s)$ ($i = 1,2$) denote the $b$-zeta function of $\Delta_{M_{i,\infty}}$. The following is our next result.

\begin{theorem} \label{thm-main} Assume that $\ker \Delta_Y = \{0\}$ and $\ker \Delta_{i,\infty} = \{0\}$, $i = 1,2$. Then for $r \geq r_0$ for some $r_0 > 0$, as meromorphic functions of $s \in \C$ we have
\[
\zeta(\Delta_{M_r}, s) - r \xi_Y(s) \equiv \bz(\Delta_{M_{1,\infty}},s) + \bz(\Delta_{M_{2,\infty}},s)
\]
modulo an entire function of $s$ that vanishes exponentially fast as $r \to \infty$ uniformly on compact subsets of $\C$. In particular, as meromorphic functions of $s \in \C$,
\[
\lim_{r \to \infty} \left[\zeta(\Delta_{M_r}, s) - r \xi_Y(s)\right] = \bz(\Delta_{M_{1,\infty}},s) + \bz(\Delta_{M_{2,\infty}},s).
\]
\end{theorem}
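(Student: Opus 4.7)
I would reduce Theorem \ref{thm-main} to Theorem \ref{thm-main0} by cutting $M_r$ along the midline $\{0\}\times Y$ of the inserted cylinder $N_r=[-r,r]\times Y$ and imposing Dirichlet boundary conditions there. Each of the two resulting pieces is isometric to the manifold $M_i\cup_Y([0,r]\times Y)$ (with $M_i$ in the role of $M_0$ from the previous theorem), and its Dirichlet Laplacian is exactly the operator to which Theorem \ref{thm-main0} applies, with cylinder length $r$.

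The first step is to establish a meromorphic gluing identity of the form
\[
\zeta(\Delta_{M_r},s)=\zeta_1(s)+\zeta_2(s)+B_r(s),
\]
where $\zeta_i(s)$ is the zeta function of the Dirichlet Laplacian on the $i$th half and $B_r(s)$ is a boundary correction encoding the Dirichlet-to-Neumann operators of the two halves at $\{0\}\times Y$. This is supplied by the gluing framework developed in Section \ref{sec-glueform}, the same mechanism used to bypass the surgery calculus in the proof of Theorem \ref{thm-main0}. Granted the identity, applying Theorem \ref{thm-main0} to each of the two halves (each contributes a cylinder of length $r$, so $\tfrac{r}{2}\xi_Y(s)$ per half) and summing yields
\[
\zeta(\Delta_{M_r},s)-r\,\xi_Y(s) \equiv \bz(\Delta_{M_{1,\infty}},s)+\bz(\Delta_{M_{2,\infty}},s)-\tfrac12 \zeta(\Delta_Y,s)+B_r(s)
\]
modulo an entire function of $s$ decaying like $e^{-cr}$. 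The theorem therefore reduces to the statement $B_r(s)\equiv \tfrac12 \zeta(\Delta_Y,s)$ modulo the same type of exponentially small, entire error.

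For this last step, I would analyze the two Dirichlet-to-Neumann operators at $\{0\}\times Y$ via separation of variables on the stretched cylinder, combined with a Neumann/reflection argument: both operators converge, as $r\to\infty$, to $\sqrt{\Delta_Y}$ with pseudodifferential error of order $e^{-2r\sqrt{\Delta_Y}}$ (the functional calculus on $Y$ is legitimate because $\ker\Delta_Y=\{0\}$). Plugging this expansion into the explicit form of $B_r(s)$ coming from the gluing formula and tracking normalizations produces the required factor $\tfrac12\zeta(\Delta_Y,s)$ with a remainder of the asserted exponential type.

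The main obstacle is the first step: establishing the gluing identity as a meromorphic statement valid for \emph{all} $s\in\C$ (not merely at $s=0$ as in classical Burghelea-Friedlander-Kappeler), with $B_r(s)$ in a form explicit enough for the long-cylinder asymptotics to be extracted with exponentially small error uniform on compact subsets of $\C$. This demands uniform control of the resolvent of $\Delta_{M_r}$ in a neighborhood of the midline as $r\to\infty$, which is where the $b$-calculus enters through the model problems on $M_{i,\infty}$; the hypotheses $\ker\Delta_Y=\{0\}$ and $\ker\Delta_{i,\infty}=\{0\}$ are precisely what guarantee invertibility of the boundary operators and of the $b$-resolvents at zero energy, so that all analytic continuations in $s$ align. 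Once the gluing identity and the asymptotics of $B_r$ are in place, the theorem follows from the cancellation $-\tfrac12\zeta(\Delta_Y,s)+\tfrac12\zeta(\Delta_Y,s)=0$.
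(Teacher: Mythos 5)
Your route is genuinely different from the paper's. The paper does not cut $M_r$ at the midline: it applies the compact gluing formula (Theorem \ref{thm-Cas}) to the decomposition $M_r=(M_1\sqcup M_2)\cup_{Y\sqcup Y}N_r$, i.e.\ it cuts at the two ends $\{\pm r\}\times Y$ of the inserted cylinder, applies the $b$-gluing formula (Theorem \ref{thm-Cas2}) to each $M_{i,\infty}=M_i\cup_Y Z_i$, and then eliminates everything explicit using Proposition \ref{prop-cylinder} for $\zeta(\Delta_{N_r},s)$ and $\bz(\Delta_{Z_i},s)=-\frac14\zeta(\Delta_Y,s)$; the whole $r$-dependence is then concentrated in one Dirichlet-to-Neumann comparison $R_r(\lambda)=R_\infty(\lambda)+T_r(\lambda)$ with $T_r\in\widetilde{\Psi}^{-\infty}_\Lambda(Y)$ (Propositions \ref{propRrinfty}, \ref{propKr}, \ref{lem-rho2}). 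Your plan instead cuts at $\{0\}\times Y$, invokes Theorem \ref{thm-main0} on each half $M_{i,r}$, and pushes all the work into the boundary term $B_r(s)$ at the midline. This is attractive because it recycles Theorem \ref{thm-main0}, but note that what you must prove about $B_r(s)$ is exactly Theorem \ref{thm-main1}, which in the paper is \emph{deduced} from Theorems \ref{thm-main0} and \ref{thm-main}; you cannot cite it without circularity, so your step three carries the full burden.

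That step, as written, has a gap. Knowing that both midline Dirichlet-to-Neumann operators equal $\sqrt{\Delta_Y(\lambda)}$ up to an error in $\widetilde{\Psi}^{-\infty}_\Lambda(Y)$ (which does hold, and does require $\ker\Delta_{i,\infty}=\{0\}$ for the uniform invertibility near $\lambda=0$, exactly as in the proof of Proposition \ref{lem-rho}) only shows that $B_r(s)$ converges exponentially fast to the limit $-f(s)-I_\infty(s)$, where $f(s)$ is the collar term of Theorem \ref{thm-Cas} and $I_\infty(s)$ is the contour integral with $R_r(\lambda)$ replaced by $2\sqrt{\Delta_Y(\lambda)}$. ``Tracking normalizations'' does not by itself identify this limit with $\frac12\zeta(\Delta_Y,s)$: you must actually evaluate $f(s)$ (via Proposition \ref{prop-cylinder} applied to $\widehat{M}=[-1,1]\times Y$ and its halves) \emph{and} the limiting integral $I_\infty(s)$. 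The cleanest way to do the latter within the paper's toolkit is to recognize $2\sqrt{\Delta_Y(\lambda)}$ as the Dirichlet-to-Neumann operator of the model splitting $\R\times Y=Z_-\cup_Y Z_+$ and apply Theorem \ref{thm-Cas2} to that decomposition, using $\bz(\Delta_{Z_\pm},s)=-\frac14\zeta(\Delta_Y,s)$ and $\bz(\Delta_{\R\times Y},s)=0$; alternatively one can compute the integral directly from the explicit model operators as in Section \ref{sec-simple}. With that model computation supplied, and with the exponential estimates for the midline Dirichlet-to-Neumann corrections carried out in the parameter-dependent class $\widetilde{\Psi}^{-\infty}_\Lambda(Y)$ (so that the $\lambda$-integral stays entire in $s$ and the decay is uniform on compact subsets of $\C$, as in Lemma \ref{lem-ideal} and Proposition \ref{lem-rho}), your argument closes and gives an alternative proof; without it, the identification $B_r(s)\equiv\frac12\zeta(\Delta_Y,s)$ is asserted rather than proved.
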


We remark that this theorem holds as stated if $M$ has a boundary as long as $\partial M$ does not intersect $Y$, and at $\partial M$ we put suitable boundary conditions such as Dirichlet boundary conditions.

Taking the derivative of both sides of the equality in Theorem \ref{thm-main} and setting $s = 0$, we recover Lee \cite{ILeeY05} and M\"uller and M\"uller's \cite{MulJ-MulW06} analytic surgery formula for $\zeta$-regularized determinants.

\begin{corollary} With the same assumptions as in Theorem \ref{thm-main}, we have the following analytic surgery formula for zeta-regularized determinants:
\[
\lim_{r \to \infty} e^{r \xi_Y'(0)} \det(\Delta_{M_r}) = \bzd(\Delta_{M_{1,\infty}} ) \cdot \bzd (\Delta_{M_{2,\infty}}).
\]
\end{corollary}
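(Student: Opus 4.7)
The plan is to deduce this corollary directly from Theorem \ref{thm-main} by differentiating at $s=0$ and exponentiating, with the main technical point being to propagate the exponential decay of the remainder to its $s$-derivative at the origin. Recall that $\det(\Delta_{M_r}) := e^{-\zeta'(\Delta_{M_r},0)}$ and analogously $\bzd(\Delta_{M_{i,\infty}}) := e^{-\bz{}'(\Delta_{M_{i,\infty}},0)}$; since $\zeta(\Delta_{M_r},\cdot)$, $\xi_Y$, and the $b$-zeta functions are all regular at $s=0$ (a standard fact for the zeta and $b$-zeta of Laplace-type operators with trivial kernel), it suffices to track the $s$-derivative at $0$ of the identity in Theorem \ref{thm-main}.

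Explicitly, write the content of Theorem \ref{thm-main} in the form
\[
\zeta(\Delta_{M_r}, s) - r\, \xi_Y(s) = \bz(\Delta_{M_{1,\infty}},s) + \bz(\Delta_{M_{2,\infty}},s) + f(r,s),
\]
where, by hypothesis, $f(r,\cdot)$ is entire on $\C$ and $|f(r,s)| \leq C_K e^{-c_K r}$ for all $r \geq r_0$ and $s$ in any fixed compact $K \subseteq \C$. Differentiating in $s$ at $s=0$, multiplying by $-1$, and exponentiating yields
\[
e^{r \xi_Y'(0)} \det(\Delta_{M_r}) = \bzd(\Delta_{M_{1,\infty}}) \cdot \bzd(\Delta_{M_{2,\infty}}) \cdot e^{-\partial_s f(r,0)},
\]
so the corollary reduces to the assertion that $\partial_s f(r,0) \to 0$ as $r \to \infty$.

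The remaining step, which is the only real content of the argument, is to promote the uniform bound on $f(r,s)$ to a bound on its derivative at $0$. Since $f(r,\cdot)$ is entire, Cauchy's integral formula gives, for any fixed $\rho > 0$,
\[
\partial_s f(r,0) = \frac{1}{2\pi i} \oint_{|s| = \rho} \frac{f(r,s)}{s^2}\, ds,
\]
and the exponential bound on the compact disk $\{|s| \leq \rho\}$ then produces $|\partial_s f(r,0)| \leq \rho^{-1} C e^{-c r}$, which tends to zero (exponentially) as $r \to \infty$. Hence $e^{-\partial_s f(r,0)} \to 1$, which is the claim. The main (and essentially only) conceptual obstacle is that Theorem \ref{thm-main} a priori controls only the values of the remainder and not its derivatives; that obstacle is removed in one line by the Cauchy estimate above, made possible precisely because the theorem asserts entirety of $f(r,\cdot)$ together with locally uniform exponential decay.
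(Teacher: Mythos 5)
Your argument is correct and follows essentially the same route as the paper, which obtains the corollary simply by differentiating the identity of Theorem \ref{thm-main} at $s=0$ and exponentiating, using $\det = e^{-\zeta'(0)}$. The only addition is your Cauchy-estimate step converting the locally uniform exponential bound on the entire remainder $f(r,\cdot)$ into exponential decay of $\partial_s f(r,0)$, which the paper leaves implicit but which is exactly the justification needed.
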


For our last result, let us call $M_{1,r}$ and $M_{2,r}$ the left and right-hand manifolds with boundary obtained by slicing $M_r$ at $\{0\} \times Y$; see Figure \ref{fig-glue0}.
\begin{figure}[h!] \centering
\includegraphics{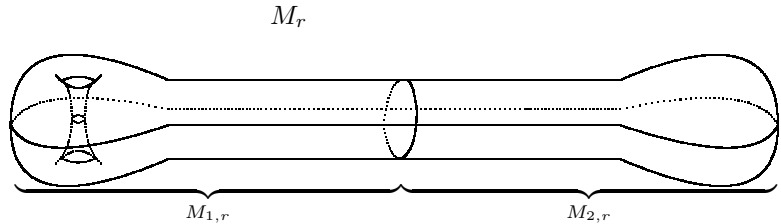} \caption{Separating $M_r$ into $M_{1,r}$ and $M_{2,r}$.} \label{fig-glue0}
\end{figure}
Put Dirichlet boundary conditions at $\{0\} \times Y$ and let $\Delta_{M_{i,r}}$ denote the corresponding Dirichlet Laplacians on $M_{i,r}$ ($i = 1,2$). The following theorem is our final result, which follows trivially from Theorems \ref{thm-main0} and \ref{thm-main}.

\begin{theorem} \label{thm-main1} Assume that $\ker \Delta_Y = \{0\}$ and $\ker \Delta_{i,\infty} = \{0\}$, $i = 1,2$. Then for $r \geq r_0$ for some $r_0 > 0$, as meromorphic functions of $s \in \C$ we have
\[
\zeta(\Delta_{M_r}, s) - \zeta(\Delta_{M_{1,r}}, s) - \zeta(\Delta_{M_{2,r}}, s) \equiv \frac12 \zeta(\Delta_Y,s)
\]
modulo an entire function of $s$ that vanishes exponentially fast as $r \to \infty$ uniformly on compact subsets of $\C$. In particular, as meromorphic functions of $s \in \C$,
\[
\lim_{r \to \infty} \left[\zeta(\Delta_{M_r}, s) - \zeta(\Delta_{M_{1,r}}, s) - \zeta(\Delta_{M_{2,r}}, s)\right] = \frac12 \zeta(\Delta_Y,s).
\]
\end{theorem}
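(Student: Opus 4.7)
The plan is to derive Theorem \ref{thm-main1} by a purely algebraic combination of Theorems \ref{thm-main0} and \ref{thm-main}; no fresh analysis is needed, which is consistent with the paper's remark that the result follows trivially from them. The key observation is that the two halves $M_{1,r}$ and $M_{2,r}$ produced by slicing $M_r$ at $\{0\} \times Y$ are precisely manifolds of the type treated in Theorem \ref{thm-main0}: $M_{1,r} = M_1 \cup_{\{-r\}\times Y}\bigl([-r,0]\times Y\bigr)$ is $M_1$ with a cylinder of length $r$ attached and Dirichlet conditions at the sliced face, with limiting cylindrical-end manifold $M_{1,\infty}$, and similarly (with the coordinate reversed) for $M_{2,r}$.

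First I would verify the hypotheses of Theorem \ref{thm-main0} for each half. The spectral assumptions $\ker \Delta_Y = \{0\}$ and $\ker \Delta_{i,\infty} = \{0\}$ for $i = 1,2$ are exactly those assumed in the present theorem, and the product structure $\Delta = -\partial_x^2 + \Delta_Y$ on the collar, inherited from $M$, descends to the corresponding collars in $M_{i,r}$ and $M_{i,\infty}$.

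Next, I would apply Theorem \ref{thm-main0} to each of $M_{1,r}$ and $M_{2,r}$ and add the two resulting identities, obtaining
\[
\zeta(\Delta_{M_{1,r}},s) + \zeta(\Delta_{M_{2,r}},s) - r\,\xi_Y(s) \equiv \bz(\Delta_{M_{1,\infty}},s) + \bz(\Delta_{M_{2,\infty}},s) - \frac12 \zeta(\Delta_Y,s),
\]
modulo an entire function vanishing exponentially fast as $r \to \infty$. Subtracting this from the identity of Theorem \ref{thm-main},
\[
\zeta(\Delta_{M_r},s) - r\,\xi_Y(s) \equiv \bz(\Delta_{M_{1,\infty}},s) + \bz(\Delta_{M_{2,\infty}},s),
\]
the $r\,\xi_Y(s)$ terms cancel and so do the two $b$-zeta functions, leaving exactly the claimed identity; the limiting form then follows by letting $r \to \infty$.

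There is essentially no obstacle; the one thing to check is that the error term preserves the required uniform properties after these combinations. But a finite signed sum of entire functions of $s$ each bounded on a compact $K \subset \C$ by $C_j e^{-c_j r}$ is again entire and bounded on $K$ by $\bigl(\sum_j C_j\bigr) e^{-(\min_j c_j)r}$, so the resulting $f(r,s)$ satisfies an exponential estimate of the same form as in the statements of Theorems \ref{thm-main0} and \ref{thm-main}, completing the derivation.
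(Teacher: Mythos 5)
Your derivation is correct and is exactly the route the paper intends: the paper states Theorem \ref{thm-main1} ``follows trivially from Theorems \ref{thm-main0} and \ref{thm-main}'', and your argument simply spells out that trivial combination (applying Theorem \ref{thm-main0} to each sliced half $M_{i,r}$, adding, and subtracting from Theorem \ref{thm-main}), including the harmless bookkeeping for the exponentially small error terms.
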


Taking the derivative of both sides of the equality in Theorem \ref{thm-main1} and setting $s = 0$, we recover a particular case of Park and Wojciechowski's adiabatic decomposition formula \cite{PaJ-WoK02II,PaJ-WoK05,PaJ-WoK06}.

\begin{corollary} With the same assumptions as in Theorem \ref{thm-main1}, we have the following analytic surgery formula for zeta-regularized determinants:
\[
\lim_{r \to \infty} \frac{\det(\Delta_{M_r})}{\det(\Delta_{M_{1,r}}) \cdot \det(\Delta_{M_{2,r}})} = \det(\Delta_{Y} )^{1/2}.
\]
\end{corollary}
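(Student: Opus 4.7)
The plan is to obtain this identity purely algebraically, by combining the two main theorems already proved in the paper. First I would apply Theorem \ref{thm-main0} separately to each piece $M_1$ and $M_2$. Since $M_{1,r}$ is nothing but $M_1$ with a cylinder of length $r$ attached at its boundary $Y$ and Dirichlet conditions imposed at $\{0\}\times Y$ (i.e.\ at the new end of that cylinder), Theorem \ref{thm-main0} applies verbatim and yields, modulo an entire function of $s$ that decays exponentially in $r$,
\[
\zeta(\Delta_{M_{1,r}},s) - \tfrac{r}{2}\,\xi_Y(s) \equiv \bz(\Delta_{M_{1,\infty}},s) - \tfrac14\,\zeta(\Delta_Y,s),
\]
and similarly for $M_{2,r}$ in terms of $\bz(\Delta_{M_{2,\infty}},s)$. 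Adding these two relations gives
\[
\zeta(\Delta_{M_{1,r}},s) + \zeta(\Delta_{M_{2,r}},s) - r\,\xi_Y(s) \equiv \bz(\Delta_{M_{1,\infty}},s) + \bz(\Delta_{M_{2,\infty}},s) - \tfrac12\,\zeta(\Delta_Y,s),
\]
again modulo an entire, exponentially $r$-decaying function.

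Second, I would invoke Theorem \ref{thm-main} for the closed stretched manifold $M_r$, which states
\[
\zeta(\Delta_{M_r},s) - r\,\xi_Y(s) \equiv \bz(\Delta_{M_{1,\infty}},s) + \bz(\Delta_{M_{2,\infty}},s),
\]
with the same kind of error term. Subtracting the previous displayed identity from this one makes the $r\,\xi_Y(s)$ divergent terms and the two $b$-zeta contributions cancel exactly, leaving
\[
\zeta(\Delta_{M_r},s) - \zeta(\Delta_{M_{1,r}},s) - \zeta(\Delta_{M_{2,r}},s) \equiv \tfrac12\,\zeta(\Delta_Y,s)
\]
modulo an entire, exponentially $r$-decaying function, which is exactly the statement to be proved. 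The uniform-on-compacts exponential decay of the error is preserved by this finite sum and difference.

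Since everything reduces to matching the cylinder-length bookkeeping (length $r$ on each side of the cut equals the total length $2r$ between $-r$ and $r$) and observing that the two $-\tfrac14\,\zeta(\Delta_Y,s)$ contributions add to $-\tfrac12\,\zeta(\Delta_Y,s)$, the argument is genuinely only algebraic and there is no serious obstacle once Theorems \ref{thm-main0} and \ref{thm-main} are in hand. The only points worth double-checking are (i) that the hypotheses $\ker\Delta_Y=\{0\}$ and $\ker\Delta_{i,\infty}=\{0\}$ are exactly those needed to invoke both earlier theorems, and (ii) that the Dirichlet condition at $\{0\}\times Y$ in $M_{i,r}$ matches the Dirichlet condition at the stretched end in Theorem \ref{thm-main0}, both of which are immediate from the setup.
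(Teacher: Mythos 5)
Your algebraic derivation of the identity
\[
\zeta(\Delta_{M_r}, s) - \zeta(\Delta_{M_{1,r}}, s) - \zeta(\Delta_{M_{2,r}}, s) \equiv \tfrac12\, \zeta(\Delta_Y,s)
\]
modulo an entire, exponentially $r$-decaying error is correct, and it is exactly how the paper obtains Theorem \ref{thm-main1} from Theorems \ref{thm-main0} and \ref{thm-main} (applying Theorem \ref{thm-main0} to each half $M_{i,r}=M_i\cup([0,r]\times Y)$ with Dirichlet condition at the cut, adding, and subtracting from Theorem \ref{thm-main}). However, that identity is Theorem \ref{thm-main1} itself, not the statement you were asked to prove: the corollary asserts the limit of a ratio of zeta-regularized \emph{determinants}, and your write-up ends with the claim that the zeta-function identity ``is exactly the statement to be proved,'' which misidentifies the goal. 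The passage from the meromorphic identity in $s$ to the determinant limit is the entire content of the paper's proof of the corollary (``take the derivative of both sides and set $s=0$''), and it is absent from your proposal.

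Concretely, what is missing is the following. Write the identity as $\zeta(\Delta_{M_r}, s) - \zeta(\Delta_{M_{1,r}}, s) - \zeta(\Delta_{M_{2,r}}, s) = \tfrac12 \zeta(\Delta_Y,s) + f(r,s)$ with $f(r,\cdot)$ entire and $|f(r,s)|\le Ce^{-cr}$ uniformly on compact subsets of $\C$. All zeta functions involved are regular at $s=0$, so one may differentiate at $s=0$; since $f(r,\cdot)$ is holomorphic and uniformly exponentially small on a fixed disc about the origin, the Cauchy estimate gives $|\partial_s f(r,0)|\le C'e^{-cr}\to 0$ (uniform smallness of $f$ alone does not formally control its derivative without this holomorphy argument, so the step deserves a sentence). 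Finally, using the convention $\det(\Delta)=e^{-\zeta'(\Delta,0)}$ one gets
\[
\frac{\det(\Delta_{M_r})}{\det(\Delta_{M_{1,r}})\,\det(\Delta_{M_{2,r}})} = e^{-\frac12\zeta'(\Delta_Y,0)-\partial_s f(r,0)} \longrightarrow \det(\Delta_{Y})^{1/2}\qquad (r\to\infty),
\]
since $\det(\Delta_Y)^{1/2}=e^{-\frac12\zeta'(\Delta_Y,0)}$. With this addendum your argument is complete; as it stands it proves the theorem on which the corollary rests but stops one short, yet necessary, step before the stated determinant formula.
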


We now outline this paper. We start in Section \ref{sec-simple} by presenting analytic surgery formulas in the model case of a pure product cylinder, results we will need later.  In the spirit of \cite{BMeR93}, our operators are defined via their Schwartz kernels and for this reason, in Section \ref{sec-trace} we study trace theorems for operators whose Schwartz kernels are continuous (but not necessarily trace-class in the functional analytic sense). Next, in Section \ref{sec-glueform}, following the arguments in \cite{loya04-48-1279} we present a gluing formula for the zeta function when the underlying manifold is cut into two pieces along a dividing hypersurface. We use this gluing formula in Section \ref{sec-anasurg} to prove the analytic surgery theorems in the introduction, modulo some details on Dirichlet-to-Neumann maps which we will present in the Appendix.

%%%%%%%%%%%%%%%%%%%%%%%%%%%%%%%%%%%%%%%%%%%%%%%%%%%%%%%%%%%%%%%%%
\section{Simple examples of analytic surgery} \label{sec-simple}
%%%%%%%%%%%%%%%%%%%%%%%%%%%%%%%%%%%%%%%%%%%%%%%%%%%%%%%%%%%%%%%%%

For pedagogical reasons, before going through the details of our main results, we present these analytic surgery formulas in the simplest possible nontrivial setting, a pure product cylinder,  results we will need later anyhow. This simple situation  illustrates the importance of the invertibility assumptions placed on the cross sectional Laplacians in the main theorems.

\begin{figure}[h!]  \centering
\includegraphics{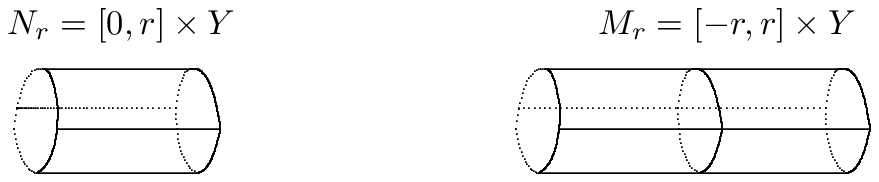} \caption{Product cylinders.} \label{fig-cylinder}
\end{figure}

Let $Y$ be a closed Riemannian manifold and let $\Delta_Y$ be a Laplace-type operator on $Y$, not necessarily invertible; in particular, the spectrum of $\Delta_Y$ consists of nonnegative eigenvalues. For $r > 0$, consider the cylinder $N_r = [0,r]_x \times Y$ (see the left-hand picture in Figure \ref{fig-cylinder}) and the Laplace-type operator
\[
-\partial_x^2 + \Delta_Y
\]
on $N_r$, for which we impose Dirichlet boundary conditions and denote the resulting operator by $\Delta_{N_r}$. In this case we can easily and \emph{explicitly} analyze the $r \to \infty$ behavior of the zeta function on $N_r$. In the following proposition, $\zeta_R(s) := \sum_{n = 1}^\infty n^{-s}$ is the Riemann zeta function.

\begin{proposition} \label{prop-cylinder} As meromorphic functions on $\C$ we have
\begin{equation} \label{zetacyl}
\zeta(\Delta_{N_r},s) = \frac{r}{2} \xi_Y(s) - \frac12 \zeta(\Delta_Y,s)  + \kappa(r,s) + \frac{r^{2s} \dim \ker \Delta_Y}{\pi^{2s}} \,
\zeta_R(2 s) ,
\end{equation}
where $\xi_Y(s) := \frac{\Gamma\big(s - 1/2\big)}{\sqrt{\pi}\, \Gamma(s)} \, \zeta\big(\Delta_Y,s-1/2\big)$ and $\kappa(r,s)$ is an entire function of $s$ that vanishes exponentially fast as $r \to \infty$ uniformly on compact subsets of $\C$; more explicitly, $\kappa(r,s) \in C^\infty((0 , \infty) \times \C)$ and is an entire function of $s \in \C$ such that given any compact subset $K \subseteq \C$ there are constants $c,C > 0$ such that
\[
|\kappa(r,s)| \leq C e^{-cr}\ \ , \ \quad r \geq 1\, , \ \ s \in K.
\]
\end{proposition}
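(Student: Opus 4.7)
The plan is to exploit the product structure of $N_r = [0,r]_x \times Y$, where separation of variables gives the Dirichlet spectrum of $\Delta_{N_r}$ explicitly as $\lambda_{n,k} = (\pi n/r)^2 + \mu_k$ with $n \geq 1$ and $\mu_k$ running over $\mathrm{spec}(\Delta_Y)$. Splitting the sum according to whether $\mu_k = 0$ or $\mu_k > 0$ immediately peels off the zero-mode contribution
\[
\dim \ker \Delta_Y \cdot \sum_{n=1}^\infty \left(\frac{\pi n}{r}\right)^{-2s} = \frac{r^{2s} \dim \ker \Delta_Y}{\pi^{2s}} \, \zeta_R(2s),
\]
which is precisely the last term in \eqref{zetacyl}. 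Everything then comes down to analyzing the reduced sum $S(r,s) := \sum_{n \geq 1,\, \mu_k > 0}((\pi n/r)^2 + \mu_k)^{-s}$.

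Introducing the heat trace over positive modes $K(t) := \Tr(e^{-t\Delta_Y}) - \dim \ker \Delta_Y = \sum_{\mu_k > 0} e^{-t \mu_k}$ and using the Mellin representation $\lambda^{-s} = \Gamma(s)^{-1}\int_0^\infty t^{s-1}e^{-t\lambda}dt$, for $\Re s$ large we have
\[
S(r,s) = \frac{1}{\Gamma(s)} \int_0^\infty t^{s-1} K(t) \, \Theta_r(t) \, dt, \qquad \Theta_r(t) := \sum_{n=1}^\infty e^{-t(\pi n/r)^2}.
\]
The Jacobi theta transformation (Poisson summation) supplies the exact identity
\[
\Theta_r(t) = \frac{r}{2\sqrt{\pi t}} - \frac{1}{2} + \frac{r}{\sqrt{\pi t}} \sum_{n=1}^\infty e^{-r^2 n^2/t}.
\]
Inserting this into the Mellin integral and computing term by term: the $\frac{r}{2\sqrt{\pi t}}$ part gives $\frac{r}{2} \cdot \frac{\Gamma(s-1/2)}{\sqrt{\pi}\,\Gamma(s)}\zeta(\Delta_Y, s-1/2) = \frac{r}{2}\,\xi_Y(s)$; the constant $-\frac{1}{2}$ part gives $-\frac{1}{2}\zeta(\Delta_Y,s)$; and the remaining dual theta sum defines
\[
\kappa(r,s) = \frac{r}{\sqrt{\pi}\,\Gamma(s)}\int_0^\infty t^{s-3/2} K(t) \sum_{n=1}^\infty e^{-r^2 n^2/t}\, dt.
\]

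The final step is to verify that $\kappa(r,s)$ is entire in $s$ and exponentially decaying in $r$, uniformly on compacta in $\C$. I would split the integral at $t = r$. On $(0,r]$, the inequality $r^2 n^2/t \geq r n^2 \geq rn$ yields $\sum_n e^{-r^2 n^2/t} \leq 2 e^{-r}$ for $r \geq 1$; the surplus $e^{-r^2/(2t)}$ extracted from the $n=1$ term tames the small-$t$ blowup coming from the standard heat-expansion bound $K(t) = O(t^{-\dim Y/2})$, making the integral convergent for every $s \in \C$. On $[r,\infty)$, the bound $K(t) \leq C e^{-\mu_1 t}$ with $\mu_1 > 0$ the smallest positive eigenvalue of $\Delta_Y$ produces a factor $e^{-\mu_1 r}$, while half of $e^{-\mu_1 t}$ absorbs the polynomial weight $t^{\Re s - 3/2}$ together with $\sum_n e^{-r^2 n^2/t}$. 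Combining the two regions gives $|\kappa(r,s)| \leq C_K\, e^{-cr}$ for some $c > 0$ and all $s$ in any chosen compact $K \subset \C$. Entireness in $s$ then follows from the absolute convergence of the $t$-integral at every $s$, the $1/\Gamma(s)$ factor killing any apparent poles.

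The main obstacle is precisely this last uniformity check: tracking how the exponential factors $e^{-r^2/t}$ (for small $t$) and $e^{-\mu_1 t}$ (for large $t$) must jointly absorb both the heat-trace singularity $t^{-\dim Y/2}$ at the origin and the polynomial growth $t^{\Re s - 3/2}$ on compact subsets of $s$. Once the split at $t = r$ is set up, however, the rest is routine bookkeeping using the heat expansion of $K(t)$ and the rapid decay of the Jacobi theta tail.
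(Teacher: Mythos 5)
Your argument is correct, but it takes a genuinely different route from the paper. The paper does not prove the proposition from scratch: it quotes the special-case computation of \cite{KiK-LoP-PaJII06}, which is based on the contour-integral representation of the spectral sum (the method of \cite{BKirK01}), and in that approach $\kappa(r,s)$ comes out in the closed form $\sum_k \frac{\sin\pi s}{\pi}\int_0^\infty u^{-2s}\frac{d}{du}\log\bigl(1-e^{-2r\sqrt{\mu_k^2+u^2}}\bigr)\,du$, from which the entireness and the uniform exponential decay in $r$ are read off directly. You instead use separation of variables, the Mellin/heat-trace representation, and the Jacobi theta transformation, so that $\frac{r}{2}\xi_Y(s)$, $-\frac12\zeta(\Delta_Y,s)$ and the zero-mode term $\frac{r^{2s}\dim\ker\Delta_Y}{\pi^{2s}}\zeta_R(2s)$ fall out of the three pieces of the theta identity and $\kappa(r,s)$ is the Mellin integral of the dual theta tail; this is self-contained (no appeal to the external reference) at the cost of having to carry out the uniform estimates yourself, which you essentially do. One small repair in your small-$t$ estimate: extracting the surplus factor $e^{-r^2/(2t)}$ only from the $n=1$ term leaves the $n\geq 2$ tail multiplied against $t^{\Re s-3/2}K(t)\sim t^{\Re s-3/2-\dim Y/2}$, which is not integrable at $t=0$ for $\Re s$ small; instead write $e^{-r^2n^2/t}\leq e^{-r^2/(2t)}\,e^{-rn^2/2}$ for every $n$ when $t\leq r$ and $r\geq 1$, which gives $\sum_{n\geq 1}e^{-r^2n^2/t}\leq C\,e^{-r^2/(2t)}e^{-r/2}$ and makes the $(0,r]$ piece both convergent for all $s$ and $O(e^{-cr})$; the rest of your bookkeeping (the $[r,\infty)$ piece via $K(t)\leq Ce^{-\mu_1 t}$, entireness via locally uniform absolute convergence, and the meromorphic continuation of the identity from $\Re s$ large) is fine.
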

\begin{proof}
In \cite{KiK-LoP-PaJII06} there is a simple proof of the formula for $\zeta(\Delta_{N_r},s)$ using the contour integral methods described in \cite{BKirK01} and developed in \cite{BoM-GeB-KiK-ElE96,BoM-KiK-ElE96,BoM-KiK-DoS96}, with $\kappa(r,s)$ given by
\[
\kappa(r,s) = \sum_{k} \frac{\sin \pi s}{\pi} \int_{0}^\infty u^{-2
s} \frac{d}{d u} \log \left( 1 - e^{-2 r \sqrt{\mu_k^2 + u^2}}
\right)\, d u ,
\]
where the $\mu^2_k$'s are the eigenvalues of $\Delta_Y$. From this explicit formula it is not difficult to verify the decay properties of $\kappa(r,s)$.
\end{proof}

If $\Delta_\infty = - \partial_x^2 + \Delta_Y$ on the infinite cylinder $[0,\infty) \times Y$ with Dirichlet conditions at $\{0\} \times Y$, then in \cite[Sec.\ 2]{loya04-48-1279} it was proved that $ \bz( \Delta_\infty ,s) = - \frac14 \zeta( \Delta_Y ,s)$.  Assume for the rest of this section  that $\ker \Delta_Y = \{0\}$; in particular the last term in equation \eqref{zetacyl} of Proposition \ref{prop-cylinder} vanishes. Hence, equation \eqref{zetacyl} reads: As meromorphic functions on $\C$ we have
\[
\zeta(\Delta_{N_r},s) = \frac{r}{2} \xi_Y(s) + \bz(\Delta_\infty,s) - \frac14 \zeta(\Delta_Y,s) + \kappa(r,s) .
\]
This is exactly Theorem \ref{thm-main0} for the situation at hand.

Now consider $-\partial_x^2 + \Delta_Y$ on the manifold $M_r = [-r,r] \times Y$ where we impose Dirichlet boundary conditions. Then replacing $r$ by $2r$ in equation \eqref{zetacyl} it follows that
\begin{equation} \label{zetaMr}
\zeta(\Delta_{M_r},s) = r \xi_Y(s) - \frac12 \zeta(\Delta_Y,s) + \kappa(2r,s).
\end{equation}
Recalling that $\bz( \Delta_\infty ,s) = - \frac14 \zeta( \Delta_Y ,s)$ we obtain
\[
\zeta(\Delta_{M_r},s) = r \xi_Y(s) + \bz(\Delta_\infty,s) + \bz(\Delta_\infty,s) + \kappa(2r,s),
\]
which is Theorem \ref{thm-main}  in this pure cylinder situation.

Finally, with $M_{1,r} = [-r,0] \times Y$ and $M_{2,r} = [0,r] \times Y$ and denoting by $\Delta_{1,r}$ and $\Delta_{2,r}$ the respective Dirichlet Laplacians, by Equation \eqref{zetacyl} we have
\[
\zeta(\Delta_{M_{1,r}},s) = \zeta(\Delta_{M_{2,r}},s) = \frac{r}{2} \xi_Y(s) - \frac12 \zeta(\Delta_Y,s) + \kappa(r,s) .
\]
Combining this equality with \eqref{zetaMr} we get
\[
\zeta(\Delta_{M_r}, s) - \zeta(\Delta_{M_{1,r}}, s) - \zeta(\Delta_{M_{2,r}}, s) = \frac12 \zeta(\Delta_Y,s) + \kappa(2r,s) - 2 \kappa(r,s),
\]
which implies Theorem \ref{thm-main} in this pure cylinder situation.

%***************************************************************%
\subsection*{Remark}
%***************************************************************%

By equation \eqref{zetacyl} of Proposition \ref{prop-cylinder}, in the case that $\ker \Delta_Y \ne \{0\}$, each of the zeta function decomposition formulas above are off by terms related to the function
\[
g(r,s) := \frac{r^{2s} \dim \ker \Delta_Y}{\pi^{2s}}\,
\zeta_R(2 s)\, ,
\]
where $\zeta_R$ is the Riemann zeta function.
The function $g(r,s)$ is ``bad'' in comparison to $\kappa(r,s)$: The function $g(r,s)$ is not an entire function of $s$ for any $r > 0$ (it has a pole at $s  = 1/2$ for all $r > 0$) and $|g(r,s)|$ does not vanish exponentially fast as $r \to \infty$ (and for $\Re s > 0$, it even increases as $r \to \infty$). This simple example explains why the main results of this paper hold only in the case $\Delta_Y$ is invertible. Throughout the rest of this paper we shall point out various details where the invertibility assumptions are crucial.

There are two possible ways to deal with the non-invertible case. The first way is to try and adapt the logarithmic surgery pseudodifferential calculus of Hassell, Mazzeo and Melrose \cite{hass98-6-255,hass95-3-115,mazz95-5-14}. However, their situation is different from ours as they do not stretch the manifold in the same way we do; they stretch it using a fixed manifold and deform the metric into a cylindrical end (or $b$-) metric. They get very precise results for the resolvent and heat kernel under the deformation and hence can get a precise understanding of the zeta function; see Section 5 of \cite{hass98-6-255}. The second way is to make further assumptions on the Laplacian. For example, one could try eigenvalue assumptions on $\Delta_{M_r}$ as was done in Park and Wojciechowski \cite{PaJ-WoK02II,PaJ-WoK05,PaJ-WoK06} or consider certain types of Laplacians such as connection Laplacians as in M{\"u}ller and  M{\"u}ller's paper \cite{MulJ-MulW06}.

%%%%%%%%%%%%%%%%%%%%%%%%%%%%%%%%%%%%%%%%%%%%%%%%%%%%%%%%%%%%%%%%%
\section{Trace theorems} \label{sec-trace}
%%%%%%%%%%%%%%%%%%%%%%%%%%%%%%%%%%%%%%%%%%%%%%%%%%%%%%%%%%%%%%%%%

In this section we study trace theorems for operators whose Schwartz kernels are continuous (but not necessarily trace-class in the functional analytic sense).

%***************************************************************%
\subsection{Continuous kernels}
\label{sec-tracethm}
%***************************************************************%

Let $M$ be a Riemannian manifold that is either compact with or without boundary, or a manifold with cylindrical end which means that $M$ has a decomposition
\[
M = M_0 \cup_Z \big([0,\infty)_x \times Z\big),
\]
where $M_0$ is compact with boundary $Z = \partial M_0$ and the metric $g$ on $M$ is, on the cylinder, of product type $g = dx^2 +  g_Z$ where $g_Z$ is a metric on $Z$.
We shall denote by $\mathcal C(M)$ the space of linear maps $A : L^2(M) \to L^2(M)$ with a \emph{continuous rapidly decreasing Schwartz kernel} in the sense that the Schwartz  kernel $A(z,z')$ is a continuous density on $M \times M$ that is rapidly decreasing (along the cylinders) in the case $M$ has cylindrical ends. Here, `rapidly decreasing' means the following. Let $x$ denote the variable along the cylinder and extend $x$ to be a smooth function on the rest of $M$. Then using $x$, respectively $x'$, to denote the corresponding variable on the first, respectively second, factor of $M \times M$, `rapidly decreasing' means that for any $a,b \in \N$, the density
 \[
 x^a\, (x')^b A(z,z')
 \]
on $M \times M$ is bounded. For notational convenience, throughout this paper we identify operators with their Schwartz kernels (via the Schwartz kernel theorem --- see \cite{UMelRglobal}). However, it will always be clear from context when we are in the linear map viewpoint or kernel viewpoint; note that Schwartz kernels are usually accompanied by variables such as $A(z,z')$.

Given $A \in \mathcal C(M)$, we define the trace of $A$ by integrating the density $A(z,z)$ over $M$:
\begin{equation} \label{trA}
\Tr_M(A) := \int_M A(z,z) .
\end{equation}
This gives a linear map
\[
\Tr_M : \mathcal C(M) \to \C.
\]
It is well-known that an operator $A \in \mathcal C (M)$ may not be trace-class in the functional analytic sense. For example,
Du Bois-Reymond \cite[p.\ 67]{KorT89} (cf.\ also \cite{DuiJ81}, \cite[p.\ 71]{BGoI-GoS-KrN00}) in 1876 constructed a continuous function $a: \Sph^1 \to \C$ with Fourier series
\[
a(x) \sim \sum_{k \in \Z} c_k e^{ikx},
\]
such that $\sum_{k \in \Z} |c_k| = \infty$. Define $A: L^2(\Sph^1) \to L^2(\Sph^1)$ by its Schwartz kernel
\[
A(x,y) = a(x-y)\, dy;
\]
thus as an operator, $A$ is the convolution operator
\[
Au(x) = \int_{\Sph^1} a(x - y) u(y)\, dy\quad \text{for all}\ u \in L^2(\Sph^1).
\]
For $k \in \Z$, putting $\varphi_k = \frac{1}{\sqrt{2\pi}} e^{ikx}$, observe that $\{\varphi_k\}_{k \in \Z}$ is an orthonormal basis of $L^2(\Sph^1)$ and
\[
A \varphi_k = c_k \varphi_k.
\]
Since
\[
\sum_{k \in \Z} \big| \langle A \varphi_k , \varphi_k \rangle \big| = \sum_{k \in \Z} |c_k| = \infty,
\]
it follows that $A$ is \emph{not} trace class. However, even though operators in $\mathcal C(M)$ may not be trace-class in the functional analytic sense as the simple example showed, the map
\[
\Tr_M : \mathcal C(M) \to \C
\]
has all the nice properties that the functional analytic trace does; for example, it vanishes on commutators and it is continuous with respect to any appropriate topology on continuous functions.

We shall call an operator $A: L^2(M) \to L^2(M)$ \emph{pseudo continuous}\footnote{Pseudo  continuous is supposed to be a continuous version of pseudo locality.} if for any bounded continuous functions $\varphi,\psi$ on $M$ with disjoint supports, one of which with compact support, we have
\[
\varphi A \psi \in \mathcal C(M).
\]
This is equivalent to saying that the Schwartz kernel $A(z,z')$ of $A$ is continuous and rapidly decreasing off the diagonal in $M \times M$, where `rapidly decreasing off the diagonal' only pertains to the case when $M$ has a cylindrical end; thus, the function $\varphi(z) \psi(z') A(z,z')$ is a rapidly decreasing continuous density on $M \times M$.

Let $Y \subseteq M$ be a closed codimension one submanifold of $M$ situated in the interior of $M$. Let $\gamma: C^\infty(M) \to C^\infty(Y)$ be the restriction map and let $\gamma^*$ be its adjoint, which is given by multiplying with the delta function concentrated on $Y$; that is, for $\psi \in C^\infty(Y)$, $\gamma^*(\psi)$ is the distribution on $M$ defined by
\[
 \gamma^*(\psi) := \psi \, \delta_Y
\]
where $\delta_Y$ is the delta function on $Y$.
We use the notation $\Psi^m(M)$ to denote the space of pseudodifferential operators of order $m \in \R$ on the manifold $M$ (for background on pseudodifferential operators, see e.g.\ \cite{UMelRglobal}). We say that an operator $A : L^2(M) \to L^2(M)$ is in $\Psi^m(M)$ \emph{near $Y$} if there is a function $\varphi \in C^\infty(M)$ supported near $Y$ with $\varphi \equiv 1$ near $Y$ such that
\begin{equation} \label{near}
\varphi A \varphi \in \Psi^m(M).
\end{equation}
The notion of `near $Y$' will be used in the sequel in various places.
This is equivalent to saying that on some neighborhood of $Y \times Y$ in $M \times M$, the Schwartz kernel of $A$ agrees with the Schwartz kernel of an element of $\Psi^m(M)$. In the following theorem we relate traces on $M$ to traces on $Y$.

\begin{theorem}  \label{thm-trace}
Let $m,m',m'' \in \R$ with $m,m'\leq -2$ and where at least one of $m,m',m''$ equals $-\infty$.
Let $A,B : L^2(M) \to L^2(M)$ be pseudo continuous operators and suppose that near $Y$ we have $A\in \Psi^m(M)$ and $B \in \Psi^{m'}(M)$ and let $S \in \Psi^{m''}(Y)$. Then $\gamma B A \gamma^* S \in \mathcal C (Y)$, that is,
\[
\gamma B A \gamma^* S : L^2(Y) \to L^2 (Y)
\]
and has a continuous kernel. Moreover, $A \gamma^* S \gamma B \in \mathcal C (M)$, that is,
\[
A \gamma^* S \gamma B : L^2(M) \to L^2(M)
\]
with a continuous rapidly decreasing Schwartz kernel. Furthermore,
\[
\Tr_M \big( A \gamma^* S \gamma B \big) = \Tr_Y \big( \gamma B A \gamma^* S \big).
\]
\end{theorem}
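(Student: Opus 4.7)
My plan is to reduce the statement to the cyclic trace identity $\Tr(QP)=\Tr(PQ)$ with the abbreviations $P:=\gamma B:L^2(M)\to L^2(Y)$ and $Q:=A\gamma^* S:L^2(Y)\to L^2(M)$, so that $QP=A\gamma^* S\gamma B$ on $M\times M$ and $PQ=\gamma B A\gamma^* S$ on $Y\times Y$. The substantive content is then to show that both $QP$ and $PQ$ have continuous Schwartz kernels decaying rapidly on the cylindrical ends of $M$, so that both traces are well defined as diagonal kernel integrals in the sense of \eqref{trA}, and then to apply Fubini on a common triple integral to obtain the equality of traces.

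First I would write out the candidate kernels formally. Using $\gamma^* v=v\,\delta_Y$ and propagating this through the composition of Schwartz kernels gives
\[
(A\gamma^* S\gamma B)(z,z')=\int_Y\!\!\int_Y A(z,w)\,S(w,w')\,B(w',z')\,dw\,dw',
\]
\[
(\gamma B A\gamma^* S)(y,y')=\int_M\!\!\int_Y B(y,z)\,A(z,w)\,S(w,y')\,dw\,dz,
\]
where $A(z,w)$ and $B(y,z)$ denote restrictions of the Schwartz kernels of $A,B$ to $M\times Y$ and $Y\times M$. Because $\dim Y=\dim M-1$ and $A,B$ are $\Psi^m(M),\Psi^{m'}(M)$ near $Y$ with $m,m'\leq-2$, the conormal diagonal singularity of $A(z,w)$ (and of $B(y,z)$) is integrable in $w\in Y$ (resp.\ $y\in Y$). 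Pseudo continuity of $A,B$ supplies continuity and rapid decay away from a neighborhood of $Y\times Y$ and along the cylindrical ends.

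The hypothesis that at least one of $A,B,S$ is smoothing is what promotes ``integrable'' to ``continuous.'' If $S$ is smoothing, its smooth kernel $S(w,w')$ averages in $w$ and regularizes the conormal restriction $A|_{M\times Y}$, and a standard pseudodifferential restriction computation yields joint continuity of the resulting $w$-convolution in $(z,w')$. If instead $A$ (symmetrically $B$) is smoothing near $Y$, then its kernel is smooth on a neighborhood of $Y\times Y$, so $A\gamma^*$ (resp.\ $\gamma B$) already has a continuous kernel on $M\times Y$ (resp.\ $Y\times M$); composing with the $\Psi^{m''}(Y)$ factor $S$, which preserves $C^\infty(Y)$, and with the remaining $\Psi^{m'}$ (resp.\ $\Psi^m$) factor, which is $L^2$-bounded and smoothing off the diagonal, then inherits continuity for $PQ$ and $QP$. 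Rapid decay on the cylindrical ends follows in every case from the rapid-decay clause in the definition of pseudo continuity.

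Once the two kernels are established to be continuous and rapidly decaying, $\Tr_M(QP)$ and $\Tr_Y(PQ)$ are, by definition of $\Tr_M$ and $\Tr_Y$, absolutely convergent iterated integrals of the same scalar integrand $A(z,w)\,S(w,w')\,B(w',z)$ on $M\times Y\times Y$ (after relabeling $y=w'$ on the $\Tr_Y$-side), and Fubini delivers the identity. The principal obstacle I expect is the continuity assertion in the case that $S$ is smoothing but $A$ and $B$ are not: one must verify rigorously that the conormal $\Psi^m(M)$-kernel of $A$, restricted to $M\times Y$ and convolved in $w$ against a smooth $S(w,w')$, yields a jointly continuous function of $(z,w')$. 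This is a standard but delicate application of the pseudodifferential calculus (or, equivalently, of the Sobolev trace theorem), and it is exactly where the hypothesis $m,m'\leq-2$ is used.
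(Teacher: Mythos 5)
Your plan reduces everything to a three-fold scalar Fubini argument with integrand $A(z,w)\,S(w,w')\,B(w',z)$ on $M\times Y\times Y$, and this is where the argument breaks: the hypothesis only requires that \emph{one} of $m,m',m''$ be $-\infty$, and $m''$ may be any real number, including $m''\geq 0$. In the allowed case where the smoothing factor is $A$ (or $B$) while $S$ has nonnegative order, the Schwartz kernel $S(w,w')$ is not a locally integrable function on $Y\times Y$, so your ``same scalar integrand'' does not exist as a Lebesgue integrand and Fubini cannot be invoked; your trace identity is only proved when $S$ itself is smoothing (or of sufficiently negative order). Relatedly, your treatment of the case ``$A$ smoothing'' applies $S$ pointwise to $w\mapsto A(z,(0,w))$; for $z$ away from $Y$ pseudo continuity gives only \emph{continuity} of this function of $w$, so applying a positive-order operator $S$ to it is not justified without first localizing, via pseudo continuity, to the collar where $A$ and $B$ are genuinely pseudodifferential. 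The paper avoids both problems by never writing $S$ as a function: after cutting down to a collar $(-\varepsilon,\varepsilon)\times Y$, it shows (by integrating out the conormal variable $\xi$, which is exactly where $m,m'\leq -2$ enters) that the slices $A\gamma^*(x)$ and $\gamma B(x')$ form continuous families in $\Psi^m(Y)$ and $\Psi^{m'}(Y)$, composes them with $S$ inside the calculus on $Y$ so that the composite is a continuous family of \emph{smoothing} operators on $Y$, and then obtains the trace identity from the standard cyclicity $\Tr_Y(PQ)=\Tr_Y(QP)$ (one factor smoothing) applied slicewise together with a one-dimensional Fubini in the collar variable, via \eqref{JK} and \eqref{KJ}. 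That route is uniform in $m''$.

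The second, related issue is that the joint continuity (and rapid decay) of the composite kernels in the genuinely hard case --- $S$ smoothing but $A,B$ only of order $\leq -2$ --- is the actual content of the theorem, and you defer it as ``standard but delicate.'' Uniform local integrability of the restricted kernels $A(z,w)$ and $B(w',z)$ gives boundedness of the candidate kernel, but not by itself continuity of its diagonal values as $z$ crosses $Y$, nor continuity of $(\gamma B A\gamma^* S)(y,y')$ on $Y\times Y$; some uniform-regularity mechanism is needed. In the paper this mechanism is precisely the slicing construction of Steps 2--4 (continuity of $x\mapsto A\gamma^*(x)$ and $x'\mapsto\gamma B(x')$ as maps into $\Psi^*(Y)$, hence continuity of the composed family into $\Psi^{-\infty}(Y)$), so a complete write-up along your lines would have to reconstruct an equivalent of that step rather than cite it as standard.
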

\begin{proof}
By definition of pseudo continuity (in fact, this is why this notion was introduced) and the fact that $\gamma$ and $\gamma^*$ are only relevant near $Y$,  we can reduce to the case when $A$ and $B$ are supported on a collar $(-\varepsilon, \varepsilon) \times Y$ of $Y$ where we identify $\{0\} \times Y$ with the original hypersurface $Y$. In particular, by taking $\varepsilon > 0$ sufficiently small we may assume $A \in \Psi^m(M)$ and $B \in \Psi^{m'}(M)$. By taking a partition of unity of $Y$, we can further reduce to the case when $Y$ is Euclidean space. To summarize, we may assume that $M = (-\varepsilon, \varepsilon) \times Y$ where $Y$ is Euclidean space, $A \in \Psi^m(M)$, $B \in \Psi^{m'}(M)$, and $S \in \Psi^{m''}(Y)$ all have compact support.

{\bf Step 1:} Some notations in Steps 2--4 are a little confusing so we briefly introduce the notations here in Step 1.  We denote the coordinates on $Y$ by $y$, and we denote by $(x,y)$, respectively, $(x',y')$, the coordinates on the left, respectively right, factor in $M \times M$. Consider integral operators
\[
J : L^2(Y) \to L^2(M)\ , \ K: L^2(M) \to L^2(Y)\ , \ L: L^2(Y) \to L^2(Y)
\]
with continuous Schwartz kernels, and denote their Schwartz kernels by $J(x,y,y')$, $K(y,x',y')$ and $L(y,y')$, respectively. Thus, given $\psi \in L^2(Y)$, the function $J \psi \in L^2(M)$ is given by
\[
(J \psi)(x,y) = \int J(x,y,y')\, \psi(y')\, dy',
\]
the function $L\psi \in L^2(Y)$ is given by $(L \psi)(y) = \int L(y,y')\, \psi(y')\, dy'$. Finally, given $\varphi \in L^2(M)$, the function $K\varphi \in L^2(Y)$ is given by
\[
(K \varphi)(y) = \int K(y,x',y')\, \varphi(x',y')\, dx'\, dy'.
\]
Our first observation is that if $x \in (-\varepsilon,\varepsilon)$ is fixed, then we can define an operator
\[
J(x) : L^2(Y) \to L^2(Y)
\]
in the obvious way: Given $\psi \in L^2(Y)$, $J(x) \psi \in L^2(Y)$ is the function
\[
(J(x)\psi)(y) := (J \psi)(x,y) = \int J(x,y,y')\, \psi(y')\, dy'.
\]
Similarly, if $x' \in (-\varepsilon,\varepsilon)$ is fixed, then we can define an operator
\[
K(x') : L^2(Y) \to L^2(Y)
\]
as follows: Given $\psi \in L^2(Y)$, $K(x') \psi \in L^2(Y)$ is the function
\[
(K(x')\psi)(y) := \int K(y,x',y')\, \psi(y')\, dy'.
\]
Our second observation is that we can relate composition of operators on $M = (-\varepsilon,\varepsilon) \times Y$ to composition of operators on $Y$. Consider, for example, $J$ and $L$. We have $J : L^2(Y) \to L^2(M)$ and $L: L^2(Y) \to L^2(Y)$, so
\[
J \circ L : L^2(Y) \to L^2(M).
\]
We claim that the Schwartz kernel of this operator is
\begin{equation} \label{JL}
(J \circ L)(x,y,y') = (J(x) \circ_Y L)(y,y'),
\end{equation}
where the subscript $Y$ in $J (x) \circ_Y L$ refers to the composition of the operators $J(x): L^2(Y) \to L^2(Y)$ and $L : L^2(Y) \to L^2(Y)$ as operators on $Y$. To prove \eqref{JL} we simply compute: Given $\psi \in L^2(Y)$, $(J \circ L)\psi \in L^2(M)$ is the function
\begin{align*}
(J(L\psi))(x,y) = \int J(x,y,z) \, (L\psi)(z)\, dz &
= \int J(x,y,z) \, \Big( \int L(z,y')\, \psi(y')  dy' \Big) dz \\
&
= \int \Big( \int J(x,y,z) \, L(z,y') \, dz \Big) \psi(y') dy'  .
\end{align*}
It follows that the Schwartz kernel of $J \circ L$ is
\[
(J \circ L)(x,y,y') = \int J(x,y,z) \, L(z,y') \, dz ,
\]
which is exactly $(J(x) \circ_Y L)(y,y')$ as can be readily checked. Similarly, we have the following formulas: $J \circ K : L^2(M) \to L^2(M)$ has the Schwartz kernel
\begin{equation} \label{JK}
  (J \circ K)(x,y,x',y') = (J(x) \circ_Y K(x'))(y,y'),
\end{equation}
and finally, $K \circ J : L^2(Y) \to L^2(Y)$ has the Schwartz kernel
\begin{equation} \label{KJ}
 (K \circ J)(y,y') = \int (K(x) \circ_Y J(x))(y,y')\, dx.
\end{equation}

{\bf Step 2:} We now consider $A$. The Schwartz kernel of $A$ is of the form (dropping density factors for simplicity)
\[
A(x,y,x',y') = \int e^{i(x - x') \xi + i(y - y') \cdot \eta} \, a(x,y,\xi,\eta) \dbar \xi \dbar \eta
\]
where $a(x,y,\xi,\eta)$ is a symbol in $(\xi,\eta)$ (the dual variables to $(x,y)$) of order $m$ and $\dbar \xi = d \xi/2\pi$ and $\dbar \eta = d \eta/(2\pi)^{\dim Y}$. Since $\{0\} \times Y$ is the original hypersurface in $M$,  $\gamma^*$ is multiplication by the delta function at $x = 0$, so it follows that the Schwartz kernel of $A \gamma^*$ is
\[
A \gamma^* (x,y,y') = \int e^{i x \xi + i(y - y') \cdot \eta} \, a(x,y,\xi,\eta) \dbar \xi \dbar \eta.
\]
By assumption, $m \leq -2$, so the integral
\[
\alpha(x,y,\eta) : = \int_\R e^{ix \xi} \, a(x,y,\xi,\eta) \dbar \xi
\]
is absolutely convergent, and moreover it is easy to check that  $\alpha(x,y,\eta)$ is a symbol of order $m$ in $\eta$ that is smooth in $y$ and continuous in $x$ (it may not be smooth in $x$ unless $m = -\infty$, but all we need is that it is continuous in $x$), and
\begin{equation*} \label{Agammas}
A \gamma^* (x,y,y') = \int e^{i(y - y') \cdot \eta} \, \alpha(x,y,\eta) \dbar \eta .
\end{equation*}
By the properties of $\alpha$ it follows that for fixed $x \in (-\varepsilon, \varepsilon)$, in terms of the variables $(y,y')$, $A \gamma^* (x,y,y')$ is the Schwartz kernel of an element of $\Psi^m(Y)$. We denote this element by $A \gamma^*(x)$. It's clear that $x \mapsto A \gamma^*(x) \in \Psi^m(Y)$ is continuous.

Now, since $S \in \Psi^{m''}(Y)$, by \eqref{JL} with $J = A \gamma^*$ and $L = S$, it follows that
\[
(A \gamma^* S)(x,y,y') = ((A \gamma^*) (x) \circ_Y S)(y,y').
\]
We remark that technically speaking, the derivation of \eqref{JL} used the fact that $J$ and $L$ had continuous Schwartz kernels, and $A \gamma^*$ and $S$ may not have continuous Schwartz kernels (unless $m$ and $m''$ were sufficiently negative); however, we can still apply \eqref{JL} by the standard continuity arguments, see Chapter 2 of  \cite{UMelRglobal}.
Since $A \gamma^*(x) \in \Psi^m(Y)$ and $S \in \Psi^{m''}(Y)$, it follows that
\[
(A \gamma^* S)(x) := A \gamma^*(x) \circ_Y S \in \Psi^{m + m''}(Y)
\]
is a family of pseudodifferential operators on $Y$ of order $m + m''$ depending continuously on $x$.

{\bf Step 3:} Now let us consider $\gamma B$. On the collar, the Schwartz kernel of $B$ is of the form
\[
B(x,y,x',y') = \int e^{i(x - x') \xi + i(y - y') \cdot \eta} \, b(x,y,\xi,\eta) \dbar \xi \dbar \eta ,
\]
where $b(x,y,\xi,\eta)$ is a symbol in $(\xi,\eta)$ of order $m'$. Thus, recalling that $\gamma$ is restriction to $\{0\} \times Y$, the Schwartz kernel of $\gamma B$ is
\[
\gamma B(y,x',y') = \int e^{-i x' \xi + i(y - y') \cdot \eta} \, b(0,y,\xi,\eta) \dbar \xi \dbar \eta.
\]
Recalling that $m' \leq -2$, it follows that the integral
\[
\beta(x',y,\eta) : = \int_\R e^{-ix'\xi} \, b(0,y,\xi,\eta) \dbar \xi
\]
is absolutely convergent and defines a symbol of order $m'$ in $\eta$ that is continuous in $x'$ and smooth in $y$, and
\[
(\gamma B)(y,x',y') = \int e^{i(y - y') \cdot \eta} \, \beta(x',y, \eta) \dbar \eta.
\]
Directly from this formula we observe that for fixed $x' \in (-\varepsilon, \varepsilon)$, in terms of the variables $(y,y')$ we have $\gamma B(x') \in \Psi^{m'}(Y)$ and moreover, $x' \mapsto \gamma B(x') \in \Psi^{m'}(Y)$ is continuous.

{\bf Step 4:} We now put Steps 1--3 together to prove our result. First,  by \eqref{JK} with $J = A \gamma^* S$ and $K = \gamma B$, observe that the Schwartz kernel of $A \gamma^* S \gamma B = A \gamma^* S \circ B$ is given by
\begin{equation} \label{Agamma*}
(A \gamma^* S \gamma B)(x,y,x',y') = ((A \gamma^* S)(x) \circ_Y (\gamma B)(x')) (y,y').
\end{equation}
Since $(A \gamma^* S)(x) \in \Psi^{m + m''}(Y)$ and $(\gamma B)(x') \in \Psi^{m'}(Y)$ depend continuously on $x$ and $x'$, respectively, it follows that
\[
(x,x') \mapsto (A \gamma^* S)(x) \circ_Y (\gamma B)(x') \in \Psi^{m + m' + m''}(Y)
\]
is a continuous map. By assumption, at least one of $m,m',m''$ is $-\infty$, so this continuous map is a map into the smoothing operators on $Y$. Therefore, the Schwartz kernel $A \gamma^* S \gamma B(x,y,x',y')$ is smooth in $(y,y')$ (and continuous in $(x,x')$); in particular, we have a map
\[
A \gamma^* S \gamma B : L^2(M) \to L^2(M)
\]
with a continuous Schwartz kernel. Second, in view of \eqref{KJ} with $K = \gamma B$ and $J = A \gamma^* S$, observe that the Schwartz kernel of $\gamma B A \gamma^* S = \gamma B \circ A \gamma^* S$ is given by
\begin{equation} \label{gammaB2}
(\gamma B A \gamma^* S)(y,y') = \int ((\gamma B) (x) \circ_Y (A \gamma^* S)(x)) (y,y')\, dx.
\end{equation}
Since $(\gamma B)(x) \in \Psi^{m'}(Y)$ and $(A \gamma^* S)(x) \in \Psi^{m + m''}(Y)$,
\[
x \mapsto (\gamma B) (x) \circ_Y (A \gamma^* S)(x) \in \Psi^{m + m' + m''}(Y)
\]
is a continuous map into the smoothing operators, since at least one of $m$, $ m'$, $ m''$ is $-\infty$. Therefore,  $\gamma B A \gamma^* S$ is in fact a smoothing operator on $Y$ and hence in particular has a continuous Schwartz kernel. To see the trace property, note that by \eqref{Agamma*}, we have
\begin{align}
\Tr_M (A \gamma^* S \gamma B) & = \int \int_Y (A \gamma^* S \gamma B)(x,y,x,y) \, dy\, dx \nonumber \\ \nonumber &= \int \Big( \int ((A \gamma^* S)(x) \circ_Y (\gamma B)(x))(y,y)\, dy \Big) dx \\ \label{trace1} &= \int \Tr_Y ((A \gamma^* S)(x) \circ_Y (\gamma B)(x)) \, dx.
\end{align}
On the other hand, by \eqref{gammaB2}, we have
\begin{align}
\nonumber \Tr_Y (\gamma B A \gamma^* S) & = \int (\gamma B A \gamma^* S) (y,y)\, dy\\ \nonumber & = \int \Big( \int ((\gamma B) (x) \circ_Y (A \gamma^* S)(x)) (y,y)\, dx\Big) dy\\ \nonumber & = \int \Big( \int ((\gamma B) (x) \circ_Y (A \gamma^* S)(x)) (y,y)\, dy\Big) dx\\
\label{trace2} & = \int \Tr_Y((\gamma B) (x) \circ_Y (A \gamma^* S)(x)) \, dx .
\end{align}
Since for each $x \in (-\varepsilon , \varepsilon)$, $(\gamma B)(x)$ and $(A \gamma^* S)(x)$ are pseudodifferential operators on $Y$, one of which is of order $-\infty$, it is well known that (see \cite[Ch.\ 3]{UMelRglobal})
\[
\Tr_Y((\gamma B) (x) \circ_Y (A \gamma^* S)(x)) = \Tr_Y ((A \gamma^* S)(x) \circ_Y (\gamma B)(x)).
\]
Hence, \eqref{trace1} and \eqref{trace2} are identical, so $\Tr_M (A \gamma^* S \gamma B) =
\Tr_Y (\gamma B A \gamma^* S)$, and our proof is complete.
\end{proof}

%***************************************************************%
\subsection{A relative trace theorem}
\label{sec-R}
%***************************************************************%

Using Theorem \ref{thm-trace} we derive the following result that will be used to prove a gluing formula (Theorem \ref{thm-Cas}). Recall that $M$ is either a compact manifold (with or without boundary) or a manifold with cylindrical end and $Y$ is a closed codimension one submanifold in the interior of $M$.

\begin{theorem} \label{thm-reltrace} Let $A_1,B_1,A_2,B_2 : L^2(M) \to L^2(M)$ be pseudo continuous such that near  $Y$ they define operators in $\Psi^{-2}(M)$ and such that, near $Y$, $A_1 - B_1 \in \Psi^{-\infty}(M)$ and $A_2 - B_2 \in \Psi^{-\infty}(M)$,  and let $S,T \in \Psi^*(Y)$ such that $S - T \in \Psi^{-\infty}(Y)$. Then we have $L^2$ maps
\[
A_1 \gamma^* S \gamma A_2 - B_1 \gamma^* T \gamma B_2 : L^2(M) \to L^2(M)
\]
and
\[
\gamma A_2 A_1 \gamma^* S - \gamma B_2 B_1 \gamma^* T: L^2(Y) \to L^2(Y)
\]
that have continuous Schwartz kernels. Moreover,
\[
\Tr_M \big( A_1 \gamma^* S \gamma A_2 - B_1 \gamma^* T \gamma B_2 \big) = \Tr_Y \big( \gamma A_2 A_1 \gamma^* S - \gamma B_2 B_1 \gamma^* T \big).
\]
\end{theorem}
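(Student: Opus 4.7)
The plan is to reduce Theorem \ref{thm-reltrace} directly to Theorem \ref{thm-trace} via a telescoping decomposition that forces one smoothing factor into each summand. Namely, the purely algebraic identity
\[
A_1 \gamma^* S \gamma A_2 - B_1 \gamma^* T \gamma B_2 = (A_1 - B_1) \gamma^* S \gamma A_2 + B_1 \gamma^* (S - T) \gamma A_2 + B_1 \gamma^* T \gamma (A_2 - B_2)
\]
splits the difference into three terms, each of which contains exactly one factor that is smoothing by hypothesis: the first contains $A_1 - B_1 \in \Psi^{-\infty}(M)$ near $Y$, the second contains $S - T \in \Psi^{-\infty}(Y)$, and the third contains $A_2 - B_2 \in \Psi^{-\infty}(M)$ near $Y$. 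The remaining factors in each term still lie in $\Psi^{-2}(M)$ near $Y$ (for the operators flanking $\gamma^*\!\cdot\gamma$) and in $\Psi^*(Y)$ (for the middle operator on $Y$), so in every case the hypotheses of Theorem \ref{thm-trace} (namely $m, m' \leq -2$ and at least one of $m, m', m''$ equal to $-\infty$) are met.

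Next, I would apply Theorem \ref{thm-trace} termwise. For $(A_1 - B_1) \gamma^* S \gamma A_2$, take $A = A_1 - B_1$ (so $m = -\infty$) and $B = A_2$ (so $m' = -2$); for $B_1 \gamma^* (S - T) \gamma A_2$, take $A = B_1$, $B = A_2$, and the middle factor $S - T \in \Psi^{-\infty}(Y)$ (so $m'' = -\infty$); for $B_1 \gamma^* T \gamma (A_2 - B_2)$, take $A = B_1$ and $B = A_2 - B_2$ (so $m' = -\infty$). Theorem \ref{thm-trace} then guarantees that each summand is an element of $\mathcal{C}(M)$ (continuous, rapidly decreasing Schwartz kernel) and that its cyclic rearrangement --- respectively $\gamma A_2 (A_1 - B_1) \gamma^* S$, $\gamma A_2 B_1 \gamma^* (S - T)$, and $\gamma (A_2 - B_2) B_1 \gamma^* T$ --- is an element of $\mathcal{C}(Y)$ with equal trace. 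Adding the three continuous-kernel conclusions yields that $A_1 \gamma^* S \gamma A_2 - B_1 \gamma^* T \gamma B_2 \in \mathcal{C}(M)$, and a parallel telescoping collapse of the three rearrangements on $Y$ gives $\gamma A_2 A_1 \gamma^* S - \gamma B_2 B_1 \gamma^* T \in \mathcal{C}(Y)$.

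Finally, summing the three trace identities produced by Theorem \ref{thm-trace} and recognising the telescoping sum on the $Y$-side gives
\[
\Tr_M\bigl(A_1 \gamma^* S \gamma A_2 - B_1 \gamma^* T \gamma B_2\bigr) = \Tr_Y\bigl(\gamma A_2 A_1 \gamma^* S - \gamma B_2 B_1 \gamma^* T\bigr),
\]
which is the desired identity. The substantive analytic work (the symbol-level oscillatory-integral reductions and the identification of $x$-dependent families of pseudodifferential operators on $Y$) has already been carried out in Theorem \ref{thm-trace}; the only obstacle here is the combinatorial bookkeeping of choosing the telescoping decomposition so that each of the three summands has its smoothing factor in the correct slot. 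No additional pseudodifferential or trace-class analysis is needed, and in particular no cancellations between individually singular kernels have to be controlled --- each summand is already separately nice.
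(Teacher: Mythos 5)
Your proposal is correct and follows essentially the same route as the paper: the identical telescoping decomposition into three terms, each carrying exactly one smoothing factor ($A_1-B_1$, $S-T$, or $A_2-B_2$), followed by a termwise application of Theorem \ref{thm-trace} and summation of the resulting trace identities. No gaps.
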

\begin{proof} Observe that
\begin{multline*}
A_1 \gamma^* S \gamma A_2 - B_1 \gamma^* T \gamma B_2 = \\
(A_1 - B_1) \gamma^* S \gamma A_2 \ \, +\ \,  B_1 \gamma^* (S - T) \gamma A_2\  \, +\ \, B_1 \gamma^* T \gamma (A_2 - B_2),
\end{multline*}
and
\begin{multline*}
\gamma A_2 A_1 \gamma^* S - \gamma B_2 B_1 \gamma^* T = \\
\gamma A_2(A_1 - B_1) \gamma^* S  \ \, +\ \, \gamma  A_2 B_1 \gamma^* (S - T) \  \, +\ \, \gamma (A_2 - B_2) B_1 \gamma^* T,
\end{multline*}
so the Schwartz kernel properties follow from Theorem \ref{thm-trace} since all the difference operators $(A_i - B_i)$ and $S - T$ are smoothing near $Y$. Also by Theorem \ref{thm-trace}, we have
\begin{align*}
\Tr_M &  \big(A_1 \gamma^* S \gamma A_2 - B_1 \gamma^* T \gamma B_2 \big)
\\ & = \Tr_M \big( (A_1 - B_1) \gamma^* S \gamma A_2 \ \, +\ \,  B_1 \gamma^* (S - T) \gamma A_2\  \, +\ \, B_1 \gamma^* T \gamma (A_2 - B_2) \big)\\
& = \Tr_M\big( (A_1 - B_1) \gamma^* S \gamma A_2\big) \ \, +\ \,  \Tr_M\big(B_1 \gamma^* (S - T) \gamma  A_2\big)\\ & \hspace{6.5cm}  \, +\ \, \Tr_M\big( B_1 \gamma^* T \gamma (A_2 - B_2) \big)\\
& = \Tr_Y\big( \gamma A_2(A_1 - B_1) \gamma^* S \big) \ \, +\ \,  \Tr_Y\big(\gamma  A_2 B_1 \gamma^* (S - T) \big)\\ & \hspace{6.5cm}  \, +\ \, \Tr_Y\big(\gamma (A_2 - B_2) B_1 \gamma^* T  \big)\\
& = \Tr_Y\big( \gamma A_2(A_1 - B_1) \gamma^* S\ \,  +\ \, \gamma  A_2 B_1 \gamma^* (S - T) \ \, +\ \, \gamma (A_2 - B_2) B_1 \gamma^* T  \big)\\
& = \Tr_Y\big( \gamma A_2 A_1 \gamma^* S - \gamma B_2 B_1 \gamma^* T  \big).
\end{align*}
\end{proof}

%%%%%%%%%%%%%%%%%%%%%%%%%%%%%%%%%%%%%%%%%%%%%%%%%%%%%%%%%%%%%%%%%
\section{A gluing formula for the zeta function} \label{sec-glueform}
%%%%%%%%%%%%%%%%%%%%%%%%%%%%%%%%%%%%%%%%%%%%%%%%%%%%%%%%%%%%%%%%%

Let $M$ be a compact manifold with or without boundary decomposed as a union $M = M_- \cup_Y M_+$ where $Y \subseteq M$ is a codimension one submanifold in the interior of $M$, $M_\pm$ are smooth manifolds with boundary such that $Y \subseteq \partial M_\pm$ and $Y = M_-  \cap M_+$; see Figure \ref{fig-Mplusminus}.  Following \cite{loya04-48-1279} we give a formula for the zeta function on $M$ in terms of the zeta functions on $M_-$ and $M_+$ and Dirichlet-to-Neumann maps.

\begin{figure}[h!] \centering
\includegraphics{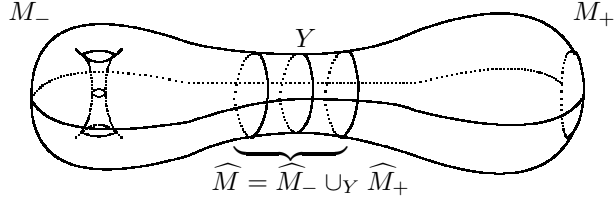} \caption{Here, $M$ is a manifold with boundary (the boundary is at the far right) that is partitioned into submanifolds $M_-$ (to the left of $Y$) and $M_+$ (to the right of $Y$). The submanifold $\widehat{M}$ has a similar partition.} \label{fig-Mplusminus}
\end{figure}

%***************************************************************%
\subsection{Dirichlet-to-Neumann maps} \label{subsec-DN}
%***************************************************************%

Let $\Delta$ be a Laplace-type operator on $M$; we do not assume that $\Delta$ is of product-type near $Y$.  At $\partial M$ we \emph{always} impose the Dirichlet boundary condition if in fact $M$ has a boundary. The Dirichlet-to-Neumann maps for $M_\pm$ are described as follows. We denote by $\Delta_\pm$ the restrictions of $\Delta$ to $M_\pm$ with Dirichlet boundary conditions at $Y$ (and also at $\partial M \cap M_\pm$). Thus, in Figure \ref{fig-Mplusminus}, $\Delta_+$ has Dirichlet conditions both at $Y$ and at the boundary of $M$ on the far right and $\Delta_-$ just has Dirichlet conditions at $Y$.
For simplicity we henceforth shall use the notation $\Delta(\lambda) := \Delta - \lambda$ with similar notations with $\Delta$ replaced by $\Delta_{\pm}$ or with any Laplace-type operator. Consider $M_+$ (just so that we do not have to use the notation $\pm$).
For $\lambda \in \C \setminus \mathrm{spec}(\Delta_+)$ and $\psi \in C^\infty(Y)$ we claim  there is a unique solution $\phi$ on $M_+$ to the boundary value problem
\begin{equation} \label{DP}
\text{$\Delta(\lambda) \phi = 0$ on $M_+$ and $\phi|_Y = \psi$,}
\end{equation}
and where $\phi = 0$ on $\partial M \cap M_+$.
Indeed, with $\widetilde{\psi}$ denoting any smooth extension of $\psi$ to $M_+$ vanishing at $\partial M \cap M_+$, it is easy to check that
\[
\phi := \widetilde{\psi} - \Delta_+(\lambda)^{-1} \Delta(\lambda) \widetilde{\psi}
\]
satisfies $\phi|_Y = \psi$ and $\Delta(\lambda) \phi = 0$. Here, we recall that $\Delta_+(\lambda)^{-1} = (\Delta_+ - \lambda)^{-1}$ is the resolvent of the Dirichlet Laplacian on $M_+$. This proves existence, and uniqueness follows from the fact that if $\phi'$ has the same properties as $\phi$, then $\phi - \phi'$ vanishes on $Y$ and $\partial M \cap M_+$, and therefore $\phi - \phi'$ is in the domain of $\Delta_+(\lambda)$. Moreover, $\Delta_+(\lambda) (\phi - \phi') = 0$, which implies that $\phi - \phi' = 0$ because $\lambda \notin \mathrm{spec}(\Delta_+)$. Now with $\phi$ satisfying \eqref{DP} we define
\[
\mathcal{N}_+(\lambda) \psi := \frac{\partial}{\partial \vec{n}} \, \phi \Big|_{Y} ,
\]
where $\vec{n}$ denotes the outer unit vector field on $Y$ for $M_+$. This defines a map
\[
\mathcal{N}_+(\lambda) : C^\infty(Y) \to C^\infty(Y),
\]
called the \emph{Dirichlet-to-Neumann map} for $M_{+}$.
For $\lambda \in \C \setminus \big(\mathrm{spec}(\Delta_+) \cup \mathrm{spec}(\Delta_-) \big)$, we denote by
\[
R(\lambda) : C^\infty(Y) \to C^\infty(Y),
\]
the sum of the Dirichlet-to-Neumann maps:
\begin{equation} \label{defR}
R(\lambda) := \mathcal{N}_-(\lambda) + \mathcal{N}_+(\lambda) ,
\end{equation}
where $\mathcal{N}_-(\lambda)$ is defined using the outer unit vector field on $Y$ for $M_-$ (the vector $-\vec{n}$ where $\vec{n}$ was the outer unit vector field on $Y$ for $M_+$).
We shall call $R(\lambda)$ the \emph{Dirichlet-to-Neumann map} for the partitioned manifold $M = M_- \cup_Y M_+$. By construction, $R(\lambda)$ is an analytic function of $\lambda \in \C \setminus \big(\mathrm{spec}(\Delta_+) \cup \mathrm{spec}(\Delta_-) \big)$.
One can show that $R(\lambda)$ is invertible if and only if $\lambda \notin \mathrm{spec}(\Delta)$ (cf.\ \cite[Th.\ 2.1]{CarG02}), in which case
\[
R(\lambda)^{-1} = \gamma \Delta(\lambda)^{-1} \gamma^* .
\]
%where $\gamma: C^\infty(M) \to C^\infty(Y)$ is the restriction map and $\gamma^*$ is the
%adjoint of $\gamma$ (given by tensoring with the delta function at $Y$).

The operator $R(\lambda)$ depends on $\lambda$ in a special way, described as follows.
First we define the residue space. Let $\Psi^{-\infty}(Y)$ denote the space of smoothing operators on $Y$. Identifying operators with their Schwartz kernels we consider
\[
\Psi^{-\infty}(Y) \equiv C^\infty(Y \times Y, \Omega_R),
\]
where $\Omega_R$ is the bundle of densities over $Y$ lifted to $Y \times Y$ on the right. With this identification, the space of smoothing operators $\Psi^{-\infty}(Y)$ inherits a natural Fr\'echet topology. We shall call a subset $\Lambda \subseteq \C$ \emph{sectorial} if outside some neighborhood of the origin, $\Lambda$ equals a sector (solid angle). Given a sectorial subset $\Lambda \subseteq \C$, we define $\Psi^{-\infty}_\Lambda(Y)$ as the space of $\Psi^{-\infty}(Y)$-valued Schwartz functions on $\Lambda$,
\[
\Psi^{-\infty}_\Lambda(Y) = \mathcal{S}(\Lambda, \Psi^{-\infty}(Y)).
\]
(The Schwartz functions on $\Lambda$ with values in any Fr\'echet space is well-defined.)
Now we say that a parameter dependent operator $A(\lambda)$, $\lambda \in \Lambda$, has \emph{weight $m \in \R$} if the following conditions are satisfied: $1)$ off a neighborhood of the diagonal in $Y \times Y$, the Schwartz kernel of $A(\lambda)$ is given by a Schwartz kernel of an element of $\Psi^{-\infty}_\Lambda(Y)$. $2)$ Locally on a coordinate patch on $Y$, the Schwartz kernel of $A(\lambda)$ is of the form
\[
A(\lambda,y,y') = \int e^{i(y - y') \cdot \eta} \, a(\lambda,y,\eta) \dbar \eta ,
\]
where $a(\lambda,y,\eta)$ satisfies the estimates: Given $\alpha,\beta,\gamma$ there is a constant $C$ such that
\begin{equation} \label{asymbol}
\big|\partial_\lambda^\alpha \partial_y^\beta \partial_\eta^\gamma a(\lambda,y,\eta)\big|
\leq C (1 + |\lambda| + |\eta|)^{m- |\alpha| - |\gamma|}.
\end{equation}
The following result is not difficult to verify; see \cite{BuD-FrL-KaT92}.
\begin{proposition} \label{prop-R}
For any sectorial $\Lambda \subseteq \C$ such that $R(\lambda)$ is defined for all $\lambda \in \Lambda$, the operator $R(\lambda)$ is analytic in $\lambda$ and parameter dependent of weight $1$.
\end{proposition}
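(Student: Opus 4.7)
The plan is to treat $\mathcal{N}_\pm(\lambda)$ individually and to recover $R(\lambda)=\mathcal N_-(\lambda)+\mathcal N_+(\lambda)$ by sum. Analyticity in $\lambda$ is immediate from the formula $\mathcal N_+(\lambda)\psi = \partial_{\vec n}\bigl(\widetilde\psi - \Delta_+(\lambda)^{-1}\Delta(\lambda)\widetilde\psi\bigr)\big|_Y$ derived just before the definition of $\mathcal N_+$, since $\lambda \mapsto \Delta_+(\lambda)^{-1}$ is holomorphic on $\C\setminus\mathrm{spec}(\Delta_+)$ (and analogously on the minus side), so $R(\lambda)$ is holomorphic on the set specified in the proposition and in particular on $\Lambda$. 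The substantive content is therefore the weight-$1$ symbol estimate \eqref{asymbol}, which I would verify for $\mathcal N_+(\lambda)$ separately (the minus side is identical).

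I would work in a collar $[0,\varepsilon)_x \times Y \subset M_+$ with $x$ the inward normal coordinate, so that $\vec n = -\partial_x$ at $Y$, in which
\[
\Delta - \lambda = -\partial_x^2 + Q(x,y,D_y) - \lambda + xV ,
\]
with $Q(x,y,D_y)$ a smoothly varying family of second-order elliptic operators on $Y$ (principal symbol $|\eta|^2_{g_Y(x)}$) and $V$ a first-order differential operator. The key step is the classical symbolic construction (used in essentially the same form in \cite{BuD-FrL-KaT92,CarG02}) of a parameter-dependent pseudodifferential operator $B(x,\lambda)$ on $Y$, smooth in $x$ and of weight $1$ in the sense of \eqref{asymbol}, whose principal symbol $b_1(x,y,\eta,\lambda)$ is a square root of $|\eta|^2_{g_Y(x)} - \lambda$ with $\RE b_1 \geq 0$ (on a smooth cutoff of the parameter-elliptic region), arranged so that
\[
(-\partial_x + B(x,\lambda))(\partial_x + B(x,\lambda)) \equiv \Delta - \lambda \pmod{\Psi^{-\infty}_\Lambda}.
\]
This reduces to solving $-\partial_x B + B^2 \equiv Q - \lambda + xV$ by a cascade in decreasing weight; parameter-ellipticity at the top makes each algebraic step solvable and delivers \eqref{asymbol} uniformly in $\lambda \in \Lambda$, with errors in the non-elliptic region absorbed into the $\Psi^{-\infty}_\Lambda$ remainder.

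Given the factorization, I would extract $\mathcal N_+(\lambda)$ by producing an outgoing-solution operator: a $\phi_0 \in C^\infty([0,\varepsilon) \times Y)$ with $\phi_0|_{x=0}=\psi$ and $(\partial_x + B(x,\lambda))\phi_0 \equiv 0$, cut off before $x = \varepsilon$, which satisfies $(\Delta - \lambda)\phi_0 \equiv 0 \pmod{\Psi^{-\infty}_\Lambda}$ and has $-\partial_x \phi_0|_{x=0} = B(0,\lambda)\psi$ by construction. The true solution $\phi$ of the Dirichlet problem differs from $\phi_0$ by $\Delta_+(\lambda)^{-1}(\Delta - \lambda)\phi_0$, and applying $\gamma\partial_{\vec n}$ to this correction lands in $\Psi^{-\infty}_\Lambda(Y)$ because the input is Schwartz in $\lambda$ with values in $\Psi^{-\infty}$ and is supported away from $Y$, while parameter-dependent elliptic regularity for $\Delta_+(\lambda)^{-1}$ preserves this Schwartz-in-$\lambda$ smoothing quality. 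Thus $\mathcal N_+(\lambda) = B(0,\lambda) + S_+(\lambda)$ with $S_+(\lambda) \in \Psi^{-\infty}_\Lambda(Y)$, which is of weight $1$; the minus side is identical, and summing completes the argument.

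The main obstacle is the uniform-in-$\lambda$ control of the off-cylinder correction $\gamma\partial_{\vec n}\Delta_+(\lambda)^{-1}$ applied to a Schwartz-in-$\lambda$ smoothing input, since one must ensure that all $(\alpha,\beta,\gamma)$-seminorms in \eqref{asymbol} decay at the Schwartz rate even as $\arg \lambda$ approaches directions where the parameter-ellipticity of $\Delta - \lambda$ degenerates. Modulo this technical but standard parameter-dependent resolvent estimate, the proposition reduces to the symbolic cascade, whose bookkeeping is routine.
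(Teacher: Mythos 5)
Your argument is correct in outline, but it is a genuinely different route from the paper's. The paper disposes of Proposition \ref{prop-R} by going through the \emph{inverse}: one examines the identity $R(\lambda)^{-1}=\gamma\,\Delta(\lambda)^{-1}\gamma^*$ in local coordinates, uses the well-known parameter-dependent structure of the resolvent $\Delta(\lambda)^{-1}$ (integrating out the normal covariable, exactly as in Step 2 of the proof of Theorem \ref{thm-trace}) to see that $R(\lambda)^{-1}$ has weight $-1$, and then inverts to conclude that $R(\lambda)$ has weight $1$, citing \cite{BuD-FrL-KaT92}. You instead build each one-sided map directly: factorize $\Delta-\lambda$ near $Y$, construct an approximate Poisson operator from $B(x,\lambda)$, and conclude $\mathcal{N}_\pm(\lambda)=B(0,\lambda)+\Psi^{-\infty}_\Lambda(Y)$. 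Both are standard; yours has the advantages that it needs only $\lambda\notin\mathrm{spec}(\Delta_\pm)$ (precisely the hypothesis under which $R(\lambda)$ is defined, whereas the formula for $R(\lambda)^{-1}$ also needs $\lambda\notin\mathrm{spec}(\Delta)$ and an extra parameter-elliptic inversion step), that it exhibits the leading term explicitly (consistent with the exact product-case computation $\mathcal{N}_{Z}(\lambda)=\sqrt{\Delta_Y(\lambda)}$ in Proposition \ref{prop-RRinfty}), and that Proposition \ref{prop-RR0} falls out at once, since $R(\lambda)$ and $\widehat{R}(\lambda)$ share the same interior parametrix $B(0,\lambda)$ modulo $\Psi^{-\infty}_\Lambda(Y)$; the price is the symbolic cascade and the uniform control of the interior correction $\gamma\,\partial_{\vec n}\,\Delta_+(\lambda)^{-1}(\Delta-\lambda)\phi_0$, which is indeed standard once one notes that the sectorial regions actually used (Figure \ref{fig-Gamma2}) stay at a positive angle from the positive real axis at infinity, so the resolvent bounds you need do not degenerate. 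One fixable inaccuracy: since Section \ref{sec-glueform} does not assume product structure near $Y$, the collar normal form in general contains a first-order normal term $a(x,y)\partial_x$ with $a(0,y)\neq 0$, not merely an $xV$ correction; you must either carry this term through the cascade (it only alters the lower-order terms of $B$) or conjugate it away, which changes $\mathcal{N}_\pm(\lambda)$ by a $\lambda$-independent order-zero multiplication operator and is harmless for the weight-$1$ statement.
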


Indeed, one proof follows by examining the formula $R(\lambda)^{-1} = \gamma \Delta(\lambda)^{-1} \gamma^*$ in local coordinates and using well-known pseudodifferential facts about $\Delta(\lambda)^{-1}$; this will show that $R(\lambda)^{-1}$ is of weight $-1$, which implies that $R(\lambda)$ is of weight $1$.

Let $\widehat{M} \subseteq M$ be a smooth manifold with boundary containing $Y$ in its interior and suppose that $\widehat{M} = \widehat{M}_- \cup_Y \widehat{M}_+$ where $\widehat{M}_-$ and $\widehat{M}_+$ are manifolds with boundary such that $Y \subseteq \partial \widehat{M}_\pm$ and $\widehat{M}_- \cap \widehat{M}_+ = Y$; see Figure \ref{fig-Mplusminus}.   Later in our proof of the analytic surgery formulas, $\widehat{M}$ will just be a collar $[-1,1] \times Y$ of $Y$ with $\widehat{M}_- = [-1,0]\times Y$ and $\widehat{M}_+ = [0,1] \times Y$.

We now play the same game with $\widehat{M}$ as we did with $M$. Thus, let $\widehat{\Delta}$ be the restriction of $\Delta$ to $\widehat{M}$ where we put Dirichlet boundary conditions at $\partial \widehat{M}$ and let $\widehat{\Delta}_\pm$ be the restriction of $\widehat{\Delta}$ to $\widehat{M}_\pm$ with additional Dirichlet boundary conditions at $Y$.
For $\lambda \in \C \setminus \big(\mathrm{spec}(\widehat{\Delta}_+) \cup \mathrm{spec}(\widehat{\Delta}_-) \big)$, let
\[
\widehat{R}(\lambda) : C^\infty(Y) \to C^\infty(Y)
\]
be the Dirichlet-to-Neumann map for $\widehat{M} = \widehat{M}_- \cup_Y \widehat{M}_+$. As before, $\widehat{R}(\lambda)$ is invertible if and only if $\lambda \notin \mathrm{spec}(\widehat{\Delta})$, in which case
\[
\widehat{R}(\lambda)^{-1} = \gamma \widehat{\Delta}(\lambda)^{-1} \gamma^*.
\]

The following proposition can be seen from the proof of Proposition 4.1 in \cite{loya04-48-1279}.

\begin{proposition}\label{prop-RR0}
For any sectorial $\Lambda \subseteq \C$ not overlapping the spectra of $\Delta_\pm$, $\widehat{\Delta}_\pm$,
\[
R(\lambda) - \widehat{R}(\lambda) \in \Psi^{-\infty}_\Lambda(Y) .
\]
\end{proposition}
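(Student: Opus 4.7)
The plan is to compare the inverses $R(\lambda)^{-1}$ and $\widehat{R}(\lambda)^{-1}$ first, and then transfer the conclusion to $R(\lambda)-\widehat{R}(\lambda)$ via the algebraic identity
\[
R(\lambda)-\widehat{R}(\lambda) = -R(\lambda)\bigl(R(\lambda)^{-1}-\widehat{R}(\lambda)^{-1}\bigr)\widehat{R}(\lambda).
\]
Since Proposition \ref{prop-R} tells us both $R(\lambda)$ and $\widehat{R}(\lambda)$ are parameter dependent of weight $1$, and the class $\Psi^{-\infty}_\Lambda(Y)$ of Schwartz-in-$\lambda$ smoothing operators is preserved under composition with weight-$m$ factors (polynomial growth in $\lambda$ being absorbed by Schwartz decay in $\lambda$), it will suffice to prove the single statement $R(\lambda)^{-1}-\widehat{R}(\lambda)^{-1}\in\Psi^{-\infty}_\Lambda(Y)$.

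For this key step I would use the Poisson representation of the difference. Given $\psi\in C^\infty(Y)$, set $u=\Delta(\lambda)^{-1}\gamma^*\psi$ and $v=\widehat{\Delta}(\lambda)^{-1}\gamma^*\psi$, both viewed as single-layer potentials. The standard jump relations (continuity of value across $Y$, with normal-derivative jump equal to $\psi$) match for the two potentials, so $u-v$ extends smoothly across $Y$ and satisfies $\Delta(\lambda)(u-v)=0$ on the interior of $\widehat{M}$, with $(u-v)\big|_{\partial\widehat{M}} = u\big|_{\partial\widehat{M}}$ because $v$ vanishes on $\partial\widehat{M}$. Denoting by $\mathcal{P}(\lambda)$ the Poisson solution operator on $\widehat{M}$ for $\Delta(\lambda)$ and by $\gamma_{\partial\widehat{M}}$ restriction to $\partial\widehat{M}$, this identifies
\[
R(\lambda)^{-1}-\widehat{R}(\lambda)^{-1} = \gamma\circ\mathcal{P}(\lambda)\circ\gamma_{\partial\widehat{M}}\circ\Delta(\lambda)^{-1}\circ\gamma^*.
\]

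The decisive geometric feature is that $Y$ and $\partial\widehat{M}$ lie at positive distance inside the interior of $M$. Consequently, the middle factor $\gamma_{\partial\widehat{M}}\circ\Delta(\lambda)^{-1}\circ\gamma^*$ simply samples the Schwartz kernel of the resolvent on the off-diagonal product $\partial\widehat{M}\times Y$; and on any sector $\Lambda$ avoiding the spectrum, standard parameter dependent pseudodifferential calculus for the Dirichlet resolvent (as used in \cite{BuD-FrL-KaT92}) shows this off-diagonal restriction is smooth in space and rapidly decreasing in $\lambda$, i.e., Schwartz-valued in the sense of Section \ref{subsec-DN}. Pre-composition with $\gamma\circ\mathcal{P}(\lambda)$, whose kernel on $Y\times\partial\widehat{M}$ is smooth (since $Y\subseteq\widehat{M}^\circ$ is disjoint from $\partial\widehat{M}$) and polynomially bounded in $\lambda$ on the sector, preserves this Schwartz behavior. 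The main technical obstacle will be verifying these parameter-Schwartz off-diagonal estimates uniformly over $\Lambda$, which amounts to controlling the resolvent kernel of $\Delta(\lambda)$ away from the diagonal on sectors disjoint from the spectrum; once that is in hand, reassembling the pieces and invoking the algebraic identity of the first paragraph completes the argument.
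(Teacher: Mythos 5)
Your proposal is sound, and its skeleton coincides with the paper's: the reduction to the inverses uses the very same identity $R(\lambda)-\widehat{R}(\lambda)=R(\lambda)\bigl(\widehat{R}(\lambda)^{-1}-R(\lambda)^{-1}\bigr)\widehat{R}(\lambda)$ together with the ideal property of $\Psi^{-\infty}_\Lambda(Y)$ under composition with parameter dependent factors (the $\lambda$-only analogue of Lemma \ref{lem-ideal}), with Proposition \ref{prop-R} supplying the weight-$1$ factors. Where you genuinely diverge is in proving the key inclusion $R(\lambda)^{-1}-\widehat{R}(\lambda)^{-1}\in\Psi^{-\infty}_\Lambda(Y)$. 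The paper argues interiorly: $\Delta(\lambda)^{-1}$ and $\widehat{\Delta}(\lambda)^{-1}$ have ``identical symbolic structure'' near $Y$, so their difference is in $\Psi^{-\infty}_\Lambda(M)$ near $Y$, and sandwiching with $\gamma$, $\gamma^*$ gives the claim; this intermediate statement about the resolvent difference is moreover reused later (Step 2 of Theorem \ref{thm-Cas}). You instead derive the exact formula $R(\lambda)^{-1}-\widehat{R}(\lambda)^{-1}=\gamma\,\mathcal{P}(\lambda)\,\gamma_{\partial\widehat{M}}\,\Delta(\lambda)^{-1}\,\gamma^*$ by cancelling the $\delta_Y$-singularities of the two single-layer potentials and solving the resulting homogeneous Dirichlet problem on $\widehat{M}$, and then exploit that $Y$ and $\partial\widehat{M}$ are at positive distance, so the middle factor is an off-diagonal restriction of the resolvent kernel, smooth and $\lambda$-Schwartz on the sector; this identity is correct, and the inputs you defer (off-diagonal rapid decay of parameter dependent kernels, polynomial-in-$\lambda$ bounds on the Poisson kernel away from $\partial\widehat{M}$) are standard and no heavier than what the paper itself leaves to the reader in its parametrix comparison. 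What your route buys is explicitness: the formula quantifies the smallness of $R(\lambda)-\widehat{R}(\lambda)$ in terms of the separation between $Y$ and $\partial\widehat{M}$, very much in the spirit of the exponential estimates in the appendix; what the paper's route buys is brevity and the reusable interior fact about $\Delta(\lambda)^{-1}-\widehat{\Delta}(\lambda)^{-1}$. One small caveat, shared with the paper's own sketch (which also invokes $R(\lambda)^{-1}=\gamma\Delta(\lambda)^{-1}\gamma^*$): your representation needs $\lambda\notin\mathrm{spec}(\Delta)\cup\mathrm{spec}(\widehat{\Delta})$ for $\Delta(\lambda)^{-1}$ and $\mathcal{P}(\lambda)$ to exist, which is slightly more than the stated hypothesis on $\Lambda$; since these extra spectra are discrete and $R(\lambda)-\widehat{R}(\lambda)$ is analytic on all of $\Lambda$, the estimates extend across the exceptional points, but you should say so.
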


This proposition is in some sense ``obvious'' because $\Delta(\lambda)^{-1}$ and $\widehat{\Delta}(\lambda)^{-1}$ have the identical symbolic structure near $Y$ and it follows that $\Delta(\lambda)^{-1} - \widehat{\Delta}(\lambda)^{-1} \in \Psi^{-\infty}_\Lambda(M)$ near $Y$ (see the discussion around \eqref{near} for the notion of `near $Y$'). Applying $\gamma$ and $\gamma^*$ to both sides of $\Delta(\lambda)^{-1} - \widehat{\Delta}(\lambda)^{-1}$ implies that $R(\lambda)^{-1} - \widehat{R}(\lambda)^{-1}$ is in $\Psi^{-\infty}_\Lambda(Y)$ and then using that
\[
R(\lambda) - \widehat{R}(\lambda) = R(\lambda) \big( \widehat{R}(\lambda)^{-1} - R(\lambda)^{-1} \big) \widehat{R}(\lambda),
\]
and the fact that $\Psi^{-\infty}_\Lambda(Y)$ is an ideal within the space of all parameter dependent operators of any weight (cf.\ Lemma \ref{lem-ideal} for the proof of a related result) proves the proposition.

%***************************************************************%
\subsection{A zeta function gluing formula for the compact case}
\label{sec-var}
%***************************************************************%

We now prove a `relative gluing formula' for zeta functions. For a similar result for the zeta determinant, see Proposition 4.4 of \cite{loya04-48-1279}. The notation $\Delta, \Delta_\pm,\ldots$ in the following theorem are described in Section \ref{subsec-DN}.

\begin{theorem} \label{thm-Cas}
As meromorphic functions on $\mathbb{C}$ we have
\begin{multline*}
\zeta (\Delta_-, s ) + \zeta(\Delta_+,s) - \zeta ( \Delta,s )     \\ = f(s) + \frac{i}{2\pi} \int_\Gamma \lambda^{-s} \Tr_Y \big[ R(\lambda)^{-1} R'(\lambda) - \widehat{R}(\lambda)^{-1} \widehat{R}'(\lambda) \big] d\lambda ,
\end{multline*}
where
\[
f(s) = \zeta(\widehat{\Delta}_{-},s) + \zeta(\widehat{\Delta}_{+},s) - \zeta(\widehat{\Delta},s),
\]
$\Gamma = \{\lambda \in \C\, ;\, \Re \lambda = c\}$ with $c > 0$ sufficiently small, and where recall that $R(\lambda)$ and $\widehat{R}(\lambda)$ denote, respectively, the Dirichlet-to-Neumann maps for the partitioned manifolds $M = M_- \cup_Y M_+$ and $\widehat{M} = \widehat{M}_- \cup_Y \widehat{M}_+$. (The primes in $R'(\lambda)$ and $\widehat{R}'(\lambda)$ denote differentiation with respect to $\lambda$.)
\end{theorem}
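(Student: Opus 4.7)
The plan is to follow the strategy of \cite{loya04-48-1279}: express each zeta function appearing in the identity as a contour integral of a resolvent trace, derive a Burghelea-Friedlander-Kappeler (BFK)-type gluing identity at the level of resolvents, and convert the resulting bulk trace on $M$ to a boundary trace on $Y$ using the machinery built in Section \ref{sec-trace}. Concretely, for $\Re s$ sufficiently large the contour representation
\[
\zeta(\Delta,s) \;=\; \frac{i}{2\pi}\int_{\Gamma}\lambda^{-s}\,\Tr\bigl(\Delta(\lambda)^{-1}\bigr)\,d\lambda
\]
(and the same with $\Delta$ replaced by $\Delta_{\pm}$, $\widehat{\Delta}$, $\widehat{\Delta}_{\pm}$) reduces the theorem to the resolvent-level identity
\[
\bigl[\Tr\Delta_{-}(\lambda)^{-1}+\Tr\Delta_{+}(\lambda)^{-1}-\Tr\Delta(\lambda)^{-1}\bigr]-\bigl[\text{same with hats}\bigr] \;=\; -\Tr_{Y}\bigl[R(\lambda)^{-1}R'(\lambda)-\widehat{R}(\lambda)^{-1}\widehat{R}'(\lambda)\bigr]
\]
for $\lambda\in\Gamma$, followed by analytic continuation in $s$.

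To obtain this identity I would first write down the Krein/BFK gluing formula expressing $\Delta(\lambda)^{-1}-\bigl[\Delta_{-}(\lambda)^{-1}\oplus\Delta_{+}(\lambda)^{-1}\bigr]$ as an operator factored through $Y$, derived from the definition of $R(\lambda)$ via matching of normal derivatives at $Y$ and the identity $R(\lambda)^{-1}=\gamma\Delta(\lambda)^{-1}\gamma^{*}$. Differentiating this last formula in $\lambda$ yields $(R^{-1})'(\lambda)=\gamma\Delta(\lambda)^{-2}\gamma^{*}$, hence the key algebraic identity
\[
R(\lambda)^{-1}R'(\lambda) \;=\; -\gamma\Delta(\lambda)^{-2}\gamma^{*}R(\lambda).
\]
The would-be cyclicity step
\[
\Tr_{Y}\bigl[\gamma\Delta(\lambda)^{-2}\gamma^{*}R(\lambda)\bigr]\;=\;\Tr_{M}\bigl[\Delta(\lambda)^{-1}\gamma^{*}R(\lambda)\gamma\Delta(\lambda)^{-1}\bigr]
\]
is not directly legal under Theorem \ref{thm-trace}, since $R(\lambda)$ has weight $1$ (Proposition \ref{prop-R}) rather than $-\infty$. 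This is exactly where the relative setup with $\widehat{M}$ earns its keep: Proposition \ref{prop-RR0} provides $R(\lambda)-\widehat{R}(\lambda)\in\Psi^{-\infty}_{\Lambda}(Y)$, and since $\Delta$ and $\widehat{\Delta}$ agree symbolically near $Y$ one has $\Delta(\lambda)^{-1}-\widehat{\Delta}(\lambda)^{-1}\in\Psi^{-\infty}(M)$ near $Y$. Hence Theorem \ref{thm-reltrace} applies with $A_{i}=\Delta(\lambda)^{-1}$, $B_{i}=\widehat{\Delta}(\lambda)^{-1}$, $S=R(\lambda)$, $T=\widehat{R}(\lambda)$, and promotes the previously ill-defined cyclic identity to a genuine equality of continuous-kernel traces.

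Once the relative resolvent-trace identity is in hand, multiplying by $\lambda^{-s}$ and integrating over $\Gamma$ produces the theorem: the right-hand side is the boundary integral displayed in the statement, while the left-hand side assembles into $\bigl[\zeta(\Delta_{-},s)+\zeta(\Delta_{+},s)-\zeta(\Delta,s)\bigr]-f(s)$ after rearrangement. Convergence of the integral uniformly for $\Re s$ large follows from the parameter-dependent symbol estimates \eqref{asymbol} applied to $R(\lambda)$ and $\widehat{R}(\lambda)$, together with the smoothing behavior of their difference, and standard analytic continuation in $s$ promotes the equality to meromorphic functions on $\C$. The main obstacle I anticipate is the detailed bookkeeping that matches the Krein/BFK gluing identity with the precise combination $R(\lambda)^{-1}R'(\lambda)$: correctly tracking Poisson operators, signs of outward normals from the two sides of $Y$, and the $\lambda$-derivative, and then verifying that the relative operator $\Delta(\lambda)^{-1}\gamma^{*}R(\lambda)\gamma\Delta(\lambda)^{-1}-\widehat{\Delta}(\lambda)^{-1}\gamma^{*}\widehat{R}(\lambda)\gamma\widehat{\Delta}(\lambda)^{-1}$ is pseudo continuous in the sense of Section \ref{sec-tracethm}, so that Theorem \ref{thm-reltrace} truly applies.
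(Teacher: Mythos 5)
Your outline follows the paper's proof essentially step for step: the resolvent gluing identity $\Delta_D(\lambda)^{-1} = \Delta(\lambda)^{-1} - \Delta(\lambda)^{-1}\gamma^* R(\lambda)\gamma\Delta(\lambda)^{-1}$ (Step 1 there, with $\Delta_D=\Delta_-\oplus\Delta_+$), the identity $R(\lambda)^{-1}R'(\lambda) = -\gamma\Delta(\lambda)^{-2}\gamma^* R(\lambda)$ coming from $R(\lambda)^{-1}=\gamma\Delta(\lambda)^{-1}\gamma^*$, the use of Theorem \ref{thm-reltrace} with $A_i=\Delta(\lambda)^{-1}$, $B_i=\widehat{\Delta}(\lambda)^{-1}$, $S=R(\lambda)$, $T=\widehat{R}(\lambda)$ to convert the boundary trace into a bulk trace (Step 2), and contour integration against $\lambda^{-s}$ (Step 3). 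Two points need repair. First, the sign in your displayed ``resolvent-level identity'' is wrong and is inconsistent with your own subsequent algebra: combining $R^{-1}R'=-\gamma\Delta(\lambda)^{-2}\gamma^*R$ with the gluing formula gives
\[
\Tr_Y \big[ R(\lambda)^{-1}R'(\lambda) - \widehat{R}(\lambda)^{-1}\widehat{R}'(\lambda) \big]
= \Tr_M \big[ \Delta_D(\lambda)^{-1} - \Delta(\lambda)^{-1} - \widehat{\Delta}_D(\lambda)^{-1} + \widehat{\Delta}(\lambda)^{-1} \big],
\]
with a plus sign on the boundary trace; with your minus sign the final formula would come out negated. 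Second, the individual traces $\Tr\Delta_\pm(\lambda)^{-1}$, $\Tr\Delta(\lambda)^{-1}$ appearing in that display do not exist (resolvents are not trace class in dimension $\geq 2$), so you cannot literally represent each zeta function as $\frac{i}{2\pi}\int_\Gamma\lambda^{-s}\Tr\big(\cdot(\lambda)^{-1}\big)\,d\lambda$ and then rearrange. The paper's order of operations avoids this: it first proves the pointwise-in-$\lambda$ identity above, where the right-hand side means $\Tr_M K(\lambda)$ of the single operator $K(\lambda)$ with continuous, rapidly decreasing Schwartz kernel, and only then multiplies by $\lambda^{-s}$, integrates over $\Gamma$, and uses Fubini (justified by the rapid decay of $K(\lambda)$ on $\Gamma$ in the topology of continuous kernels, which rests on Proposition \ref{prop-RR0}) together with the kernel-level complex-power formula $\frac{i}{2\pi}\int_\Gamma\lambda^{-s}(\Delta-\lambda)^{-1}\,d\lambda=\Delta^{-s}$ to assemble the four zeta functions. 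Note also that, because of Proposition \ref{prop-RR0}, the boundary integral is an entire function of $s$, so no separate continuation argument is needed on that side; the meromorphy is carried by the zeta functions themselves. With these two adjustments your argument coincides with the paper's proof.
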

\begin{proof}
Let $\Delta_D = \Delta_- \oplus \Delta_+$,
which is the Dirichlet Laplacian on the disjoint union $M_- \sqcup M_+$ and let $\widehat{\Delta}_D = \widehat{\Delta}_- \oplus \widehat{\Delta}_+$, the Dirichlet Laplacian on $\widehat{M}_- \sqcup \widehat{M}_+$. Then our theorem is the equality
\begin{multline} \label{Cas}
\zeta (\Delta_D,s) - \zeta ( \Delta,s ) - \zeta(\widehat{\Delta}_D,s) + \zeta(\widehat{\Delta},s)    \\ = \frac{i}{2\pi} \int_\Gamma \lambda^{-s} \Tr_Y \big[ R(\lambda)^{-1} R'(\lambda) - \widehat{R}(\lambda)^{-1} \widehat{R}'(\lambda) \big] d\lambda .
\end{multline}
We break up the proof of this equality in three steps.

{\bf Step 1:}
We claim that
\begin{equation} \label{DeltaD-1}
\Delta_D(\lambda)^{-1} = \Delta(\lambda)^{-1} - \Delta(\lambda)^{-1} \gamma^* R(\lambda) \gamma \Delta(\lambda)^{-1} ,
\end{equation}
where we use $\Delta_D(\lambda)$ and $\Delta(\lambda)$ to denote $\Delta_D - \lambda$ and $\Delta -\lambda$, respectively, and where we assume all operators in \eqref{DeltaD-1} are defined at $\lambda$.
To prove \eqref{DeltaD-1}, let $A(\lambda)$ denote the operator on the right-hand side of \eqref{DeltaD-1}, let $f \in C^\infty(X)$ where $X = M_- \sqcup M_+$, and define
\[
u:= A(\lambda) f \in C^\infty(X).
\]
Then it follows that
\[
\Delta(\lambda) u = f \quad \text{in the interior of $X$} \quad \mbox{and}\quad u|_Y = 0.
\]
Indeed, the first condition is obvious and the second condition is just a computation:
\begin{align*}
u|_Y = \gamma A(\lambda) f & = \gamma \Delta(\lambda)^{-1} f - \gamma \Delta(\lambda)^{-1} \gamma^* R(\lambda) \gamma \Delta(\lambda)^{-1}f \\ & = \gamma \Delta(\lambda)^{-1}f - R(\lambda)^{-1} R(\lambda) \gamma \Delta(\lambda)^{-1}f \\ & = \gamma \Delta(\lambda)^{-1}f - \gamma \Delta(\lambda)^{-1}f = 0.
\end{align*}
Thus, $A(\lambda)$ is indeed the resolvent $\Delta_D(\lambda)^{-1}$ of the Dirichlet Laplacian $\Delta_D(\lambda)$.

Similarly, with
\[
\widehat{\Delta}_D = \widehat{\Delta}_- \oplus \widehat{\Delta}_+,
\]
we have
\[
\widehat{\Delta}_D(\lambda)^{-1} = \widehat{\Delta}(\lambda)^{-1} - \widehat{\Delta}(\lambda)^{-1} \gamma^* \widehat{R}(\lambda) \gamma \widehat{\Delta}(\lambda)^{-1} .
\]

{\bf Step 2:} We now prove that
\begin{multline*}
\Tr_Y \big[ R(\lambda)^{-1}R'(\lambda) - \widehat{R}(\lambda)^{-1} \widehat{R}'(\lambda) \big] \\
= \Tr_M \big[ \Delta_D(\lambda)^{-1} - \Delta(\lambda)^{-1} - \widehat{\Delta}_D(\lambda)^{-1} + \widehat{\Delta}(\lambda)^{-1} \big]
\end{multline*}
and in the process we shall verify that these traces are actually defined. In fact,
$R(\lambda)^{-1}R'(\lambda) - \widehat{R}(\lambda)^{-1} \widehat{R}'(\lambda)$ is smoothing because of Proposition \ref{prop-RR0}, so  we shall consider first the left-hand side. Using that
\[
\frac{d}{d \lambda} \Big( R(\lambda)^{-1} \Big)= - R(\lambda)^{-1} \, R'(\lambda)\, R(\lambda)^{-1}\ \ \Longrightarrow\ \ R(\lambda)^{-1} R'(\lambda) = - \frac{d}{d \lambda} \Big( R(\lambda)^{-1} \Big)
R(\lambda),
\]
with a similar formula for $\frac{d}{d \lambda} \widehat{R}(\lambda)^{-1}$, we see that
\begin{multline*}
\Tr_Y \big[ R(\lambda)^{-1}R'(\lambda) - \widehat{R}(\lambda)^{-1} \widehat{R}'(\lambda) \big]\\
= - \Tr_Y \Big[  \frac{d}{d \lambda} \Big( R(\lambda)^{-1} \Big) R (\lambda) -  \frac{d}{d \lambda} \Big( \widehat{R}(\lambda)^{-1} \Big) \widehat{R}(\lambda) \Big] .
\end{multline*}
Since $R(\lambda)^{-1} = \gamma \Delta(\lambda)^{-1} \gamma^* = \gamma (\Delta - \lambda)^{-1} \gamma^*$, we have
\[
\frac{d}{d\lambda} \Big( R(\lambda)^{-1} \Big) = \gamma \Delta(\lambda)^{-2} \gamma^* .
\]
Thus,
\begin{multline*}
\Tr_Y \big[ R(\lambda)^{-1}R'(\lambda) - \widehat{R}(\lambda)^{-1} \widehat{R}'(\lambda) \big] \\
= \Tr_Y \Big[ \gamma \widehat{\Delta} (\lambda)^{-1} \widehat{\Delta}(\lambda)^{-1} \gamma^* \widehat{R}(\lambda) - \gamma \Delta(\lambda)^{-1} \Delta(\lambda)^{-1} \gamma^* R (\lambda) \Big] .
\end{multline*}
We claim that $A_1 = A_2 = \Delta(\lambda)^{-1}$ and $B_1 = B_2 = \widehat{\Delta}(\lambda)^{-1}$ satisfy the hypotheses of Theorem \ref{thm-reltrace}, where we extend the Schwartz kernel of $\widehat{\Delta}(\lambda)^{-1}$ to $M \times M$ by defining it to be zero off of $\widehat{M} \times \widehat{M}$. Indeed, by properties of pseudodifferential operators, both $\Delta(\lambda)^{-1}$ and $\widehat{\Delta}(\lambda)^{-1}$ are pseudo continuous on $M$ and near $Y$ they define operators in $\Psi^{-2}(M)$ (of course, $\Delta(\lambda)^{-1}$ is an operator in $\Psi^{-2}(M)$ globally). Also recall that $\Delta(\lambda)^{-1} - \widehat{\Delta}(\lambda)^{-1}$ is smoothing near $Y$ and $R(\lambda) - \widehat{R}(\lambda)$ is smoothing (see Proposition \ref{prop-RR0} and its discussion). Thus, the conditions of Theorem \ref{thm-reltrace} are satisfied so
\[
K(\lambda) :=  \widehat{\Delta}(\lambda)^{-1} \gamma^* \widehat{R}(\lambda) \gamma \widehat{\Delta} (\lambda)^{-1} - \Delta(\lambda)^{-1} \gamma^* R (\lambda) \gamma \Delta(\lambda)^{-1}
\]
has a continuous Schwartz kernel on $M \times M$, and
\begin{equation} \label{TYK}
\Tr_Y \big[ R(\lambda)^{-1}R'(\lambda) - \widehat{R}(\lambda)^{-1} \widehat{R}'(\lambda) \big]  = \Tr_M K(\lambda) = \int_M K(\lambda)|_{\mathrm{Diag}}dg,
\end{equation}
where $\mathrm{Diag}$ is the diagonal in $M \times M$ and $dg$ is the Riemannian density.
Recalling the formulas for $\Delta_D(\lambda)^{-1}$ and $\widehat{\Delta}_D(\lambda)^{-1}$ in Step 1 shows that
\[
K(\lambda) = \Delta_D(\lambda)^{-1} - \Delta(\lambda)^{-1} - \widehat{\Delta}_D(\lambda)^{-1} + \widehat{\Delta}(\lambda)^{-1},
\]
which completes the proof of Step 2.
\begin{figure} \centering
\includegraphics{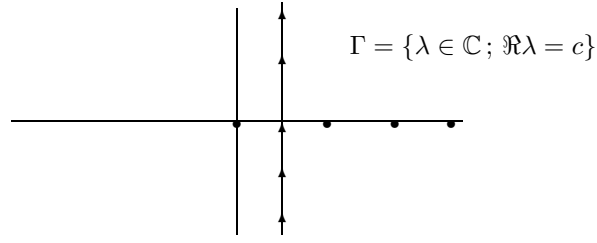} \caption{The dots are the points where $R(\lambda)^{-1}$ and $\widehat{R}(\lambda)^{-1}$ are not defined (the union of the spectra of $\Delta_D$, $\Delta$, $\widehat{\Delta}_D$ and $\widehat{\Delta}$). The constant $c > 0$ is any positive real number such that $R(\lambda)^{-1}$ and $\widehat{R}(\lambda)^{-1}$ are defined for $\lambda \in (0,c]$.} \label{fig-Gamma}
\end{figure}

{\bf Step 3:} We can now prove our result. Let $\Gamma = \{\lambda \in \C\, ;\, \Re \lambda = c\}$ be as in Figure \ref{fig-Gamma}. Multiplying both sides of the equation \eqref{TYK} by $\frac{i}{2\pi} \lambda^{-s}$ and then integrating over $\Gamma$, for any $s \in \C$ we have
\[
\frac{i}{2\pi} \int_\Gamma \lambda^{-s} \Tr_Y \big[ R(\lambda)^{-1}R'(\lambda) - \widehat{R}(\lambda)^{-1} \widehat{R}'(\lambda) \big] d\lambda
= \frac{i}{2\pi} \int_\Gamma \int_M \lambda^{-s} K(\lambda)|_{\mathrm{Diag}} dg\, d\lambda;
\]
by analyticity these integrals are defined independent of $c > 0$ chosen as in Figure \ref{fig-Gamma}. Due to Proposition \ref{prop-RR0} the integrand on the left is rapidly decreasing as $|\lambda| \to \infty$, $\lambda \in \Gamma$, so the integral is an entire function of $s \in \C$. Moreover, since the proof of Theorem \ref{thm-trace} was explicitly given in terms of the Schwartz kernels of the operators, it is not difficult to see that the function $K(\lambda)$ is rapidly decreasing as $|\lambda| \to \infty$, $\lambda \in \Gamma$, within the topology of continuous functions on $M \times M$. Hence by Fubini's theorem,
\[
\frac{i}{2\pi} \int_\Gamma \int_M \lambda^{-s} K(\lambda)|_{\mathrm{Diag}} dg\, d\lambda = \frac{i}{2\pi}  \int_M \int_\Gamma \lambda^{-s} K(\lambda)|_{\mathrm{Diag}}   d\lambda\, dg .
\]
Now
\begin{align*}
\frac{i}{2\pi} \int_\Gamma \lambda^{-s} K(\lambda)  d\lambda  & =  \frac{i}{2\pi} \int_\Gamma \lambda^{-s} \Big( \Delta_D(\lambda)^{-1} - \Delta(\lambda)^{-1} - \widehat{\Delta}_D(\lambda)^{-1} + \widehat{\Delta}(\lambda)^{-1}\Big) d\lambda\\
& =  \Delta_D^{-s} - \Delta^{-s} - \widehat{\Delta}_D^{-s} + \widehat{\Delta}^{-s} ,
\end{align*}
where the last equality holds by definition of the complex powers (here we use that $c > 0$ is to the left of all the positive eigenvalues of $\Delta_D,\Delta,\widehat{\Delta}_D,  \widehat{\Delta}$). Thus,
\begin{align*}
\frac{i}{2\pi}  \int_M \int_\Gamma \lambda^{-s} K(\lambda)|_{\mathrm{Diag}}   d\lambda\, dg & =  \int_M \Big( \Delta_D^{-s} - \Delta^{-s} - \widehat{\Delta}_D^{-s} + \widehat{\Delta}^{-s} \Big) \Big|_{\mathrm{Diag}}\, dg\\
&  = \zeta(\Delta_D,s) - \zeta(\Delta,s) - \zeta(\widehat{\Delta}_D,s) + \zeta(\widehat{\Delta},s)
\end{align*}
by definition of the zeta functions.
We conclude that
\begin{multline*}
\frac{i}{2\pi} \int_\Gamma \lambda^{-s} \Tr_Y \big[ R(\lambda)^{-1}R'(\lambda) - \widehat{R}(\lambda)^{-1} \widehat{R}'(\lambda) \big] d\lambda \\ = \zeta(\Delta_D,s) - \zeta(\Delta,s) - \zeta(\widehat{\Delta}_D,s) + \zeta(\widehat{\Delta},s)
\end{multline*}
as required.
\end{proof}

%***************************************************************%
\subsection{A zeta function gluing formula for the $b$-case}
%***************************************************************%

Theorem \ref{thm-Cas} extends in a direct manner to manifolds with cylindrical ends ---  with no changes \emph{except} replacing zeta functions with $b$-zeta functions! Let $M$ be a manifold with cylindrical end decomposed as a union $M = M_- \cup M_+$ of smooth manifolds with boundary and possible cylindrical ends that intersect in a compact codimension one manifold $Y \subseteq M$ such that $Y = \partial M_-  = \partial M_+$; see Figure \ref{fig-mwcyl}. Although both manifolds $M_\pm$ in Figure \ref{fig-mwcyl} have cylindrical ends, one of them could in fact be compact (later in our proofs of the analytic surgery formulas, we shall consider exactly this situation).

\begin{figure} \centering
\includegraphics[width=120mm]{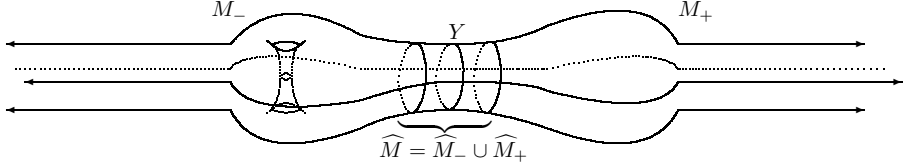} \caption{The partition of the manifold with cylindrical end $M$.} \label{fig-mwcyl}
\end{figure}

Let $\widehat{M} \subseteq M$ be a smooth manifold with boundary, and possibly containing a cylindrical end, containing $Y$ in its interior and suppose that $\widehat{M} = \widehat{M}_- \cup_Y \widehat{M}_+$ where $\widehat{M}_-$ and $\widehat{M}_+$ are manifolds with compact boundaries and possible cylindrical ends such that $Y \subseteq \partial \widehat{M}_\pm$ and $\widehat{M}_- \cap \widehat{M}_+ = Y$. Although both manifolds $\widehat{M}_\pm$ in Figure \ref{fig-mwcyl} are compact, they could in fact have cylindrical ends; however, later in our proof of the analytic surgery formulas $\widehat{M}$ will just be a collar $[-1,1] \times Y$ of $Y$ with $\widehat{M}_- = [-1,0]\times Y$ and $\widehat{M}_+ = [0,1] \times Y$.

Let $\Delta$ be a Laplace-type operator over $M$ that is of product-type over the cylinder, where product-type means that over the cylinder $[0,\infty)_x \times X$ where $X$ is the cross-section (possibly disconnected --- in Figure \ref{fig-mwcyl}, $X$ has two components) of the cylinder, we have
\[
\Delta =  - \partial_x^2 + \Delta_X
\]
where $\Delta_X$ is a Laplace-type operator over $X$.
We assume that the induced cross-sectional Laplace-type operator $\Delta_X$ is invertible. We use $\Delta_\pm$, $\widehat{\Delta}$, $\widehat{\Delta}_\pm$, $R(\lambda)$, $\widehat{R}(\lambda)$, to denote the analogous operators as studied in Section \ref{subsec-DN} but now in the cylindrical end case. Under the invertibility assumption on $\Delta_X$, it is well-known (e.g.\ using the `large' $b$-pseudodifferential calculus of Melrose, also called the `calculus with bounds' \cite[Sec.\ 5.16]{BMeR93}) that the operators $\Delta_\pm$, $\widehat{\Delta}$, $\widehat{\Delta}_\pm$, $R(\lambda)$, $\widehat{R}(\lambda)$ have much of the same properties as in the closed case. For example, each operator $\Delta$, $\Delta_\pm$, $\widehat{\Delta}$, $\widehat{\Delta}_\pm$ is Fredholm and has spectrum consisting only of a set of nonnegative real numbers that is discrete near $0$ and continuous outside of some neighborhood of $0$. (See \cite[Prop.\ 6.27]{BMeR93} --- the bottom of the continuous spectrum of each operator begins at the smallest positive eigenvalue of the cross-sectional Laplace-type operator on the cylindrical end of the manifold over which the operator is defined.) The Dirichlet-to-Neumann map $R(\lambda)$ is defined and analytic for $\lambda \in \C \setminus \big( \mathrm{spec}(\Delta_-) \cup \mathrm{spec}(\Delta_+)\big)$ and is invertible if and only if $\lambda \notin \mathrm{spec}(\Delta)$, in which case
\begin{equation} \label{R-1}
R(\lambda)^{-1} = \gamma \Delta(\lambda)^{-1} \gamma^*.
\end{equation}
In particular, $R(\lambda)^{-1}$ exists for all $\lambda \in \C \setminus [a,\infty)$ for some $a > 0$ except for some discrete subset of $[0,a)$.  A similar statement holds for $\widehat{R}(\lambda)$, and Proposition \ref{prop-RR0} holds. We can now follow the proof of Theorem \ref{thm-Cas}  \emph{word-for-word} in this cylindrical end setting.  We remark that in Section \ref{sec-trace} we emphasized `rapidly decreasing Schwartz kernels' when defining traces and pseudo continuity; the reason for doing so is that $b$-pseudodifferential operators (in the calculus with bounds) are pseudo continuous  (in fact, they are `exponentially pseudo smooth' in the sense that the Schwartz kernel $A(z,z')$ of a $b$-pseudodifferential operator $A$ is smooth and is exponentially decreasing, with all derivatives, off the diagonal in $M \times M$ --- this can be seen by translating the language of the $b$-stretched product in \cite[Sec.\ 5.16]{BMeR93} into variables on infinite cylinders.) Because the trace theorems in Section \ref{sec-trace} were stated for pseudo continuous operators, Step 2 of Theorem \ref{thm-Cas} goes through in the cylindrical end setting without change. The only change in the proof is that wherever there is a zeta function we have to replace it by a $b$-zeta function if the operator is on a manifold with cylindrical end. We summarize our discussion in the following theorem.

\begin{theorem} \label{thm-Cas2}
As meromorphic functions on $\mathbb{C}$ we have
\begin{multline*}
\bz (\Delta_-, s ) + \bz(\Delta_+,s) - \bz ( \Delta,s )     \\ = f(s) + \frac{i}{2\pi} \int_\Gamma \lambda^{-s} \Tr_Y \big[ R(\lambda)^{-1} R'(\lambda) - \widehat{R}(\lambda)^{-1} \widehat{R}'(\lambda) \big] d\lambda ,
\end{multline*}
where
\[
f(s) = \bz(\widehat{\Delta}_-,s) + \bz(\widehat{\Delta}_+,s) - \bz(\widehat{\Delta},s)
\]
and $\Gamma = \{\lambda \in \C\, ;\, \Re \lambda = c\}$ with $c > 0$ sufficiently small.
If any of the manifolds $M_\pm, \widehat{M}, \widehat{M}_\pm$ is compact, we replace $\bz$ with $\zeta$.
\end{theorem}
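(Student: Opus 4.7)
The plan is to follow the three-step structure of Theorem \ref{thm-Cas} verbatim, replacing ordinary zeta functions by $b$-zeta functions on any factor carrying a cylindrical end. Under the hypothesis that the cross-sectional Laplacian $\Delta_X$ is invertible, all the operators $\Delta, \Delta_\pm, \widehat{\Delta}, \widehat{\Delta}_\pm, R(\lambda), \widehat{R}(\lambda)$ enjoy the same algebraic identities needed in the compact case: in particular, $R(\lambda)^{-1} = \gamma \Delta(\lambda)^{-1}\gamma^*$ holds via \eqref{R-1}, the analogues of Propositions \ref{prop-R} and \ref{prop-RR0} hold (the Melrose calculus with bounds gives the same parameter-dependent symbolic structure for $\Delta(\lambda)^{-1}$ near $Y$, and $\widehat{\Delta}(\lambda)^{-1}$ differs from it only by a smoothing operator near $Y$), and $R(\lambda)$ remains invertible and analytic off the union of the relevant spectra.

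With these tools in hand, Step~1 of Theorem \ref{thm-Cas} goes through unchanged: the formal computation
\[
\Delta_D(\lambda)^{-1} = \Delta(\lambda)^{-1} - \Delta(\lambda)^{-1} \gamma^* R(\lambda) \gamma \Delta(\lambda)^{-1},
\]
with the analogous identity for $\widehat{\Delta}_D(\lambda)^{-1}$, is purely algebraic and only uses $R(\lambda)^{-1} = \gamma \Delta(\lambda)^{-1}\gamma^*$. For Step~2, the key point is that $b$-pseudodifferential operators in the calculus with bounds are pseudo continuous in the sense of Section \ref{sec-tracethm} — their Schwartz kernels are smooth and exponentially rapidly decreasing off the diagonal in $M\times M$. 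Consequently $A_1=A_2=\Delta(\lambda)^{-1}$, $B_1=B_2=\widehat{\Delta}(\lambda)^{-1}$ (with the kernel of the latter extended by zero off $\widehat{M}\times\widehat{M}$) satisfy the hypotheses of Theorem \ref{thm-reltrace}. Applying that theorem yields, exactly as before, that
\[
K(\lambda):=\Delta_D(\lambda)^{-1}-\Delta(\lambda)^{-1}-\widehat{\Delta}_D(\lambda)^{-1}+\widehat{\Delta}(\lambda)^{-1}
\]
has a continuous, rapidly decreasing Schwartz kernel on $M\times M$, and
\[
\Tr_Y\!\big[R(\lambda)^{-1}R'(\lambda) - \widehat{R}(\lambda)^{-1}\widehat{R}'(\lambda)\big] = \Tr_M K(\lambda),
\]
the right-hand trace being the genuine integral of $K(\lambda)|_{\mathrm{Diag}}$ against the Riemannian density (which converges because of the rapid decay along the cylindrical ends).

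For Step~3, choose $c>0$ so that $\Gamma=\{\Re\lambda=c\}$ lies to the left of the positive discrete spectra and of the bottom of the continuous spectra of $\Delta,\Delta_\pm,\widehat{\Delta},\widehat{\Delta}_\pm$. By Proposition \ref{prop-RR0} (and its $b$-version) and by the rapid-decay estimates on $K(\lambda)$, Fubini lets us interchange the $\lambda$- and $M$-integrals. The resulting kernel identity
\[
\frac{i}{2\pi}\int_\Gamma \lambda^{-s} K(\lambda)\, d\lambda = \Delta_D^{-s}-\Delta^{-s}-\widehat{\Delta}_D^{-s}+\widehat{\Delta}^{-s}
\]
holds in the sense of Schwartz kernels on the diagonal. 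Integrating over $M$ with respect to $dg$ then gives precisely $\bz(\Delta_D,s)-\bz(\Delta,s)-\bz(\widehat{\Delta}_D,s)+\bz(\widehat{\Delta},s)$ (and a plain $\zeta$ in place of $\bz$ on any compact factor), which is the desired identity with $f(s)=\bz(\widehat{\Delta}_-,s)+\bz(\widehat{\Delta}_+,s)-\bz(\widehat{\Delta},s)$.

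The main obstacle is the last identification: on noncompact factors neither $\Delta^{-s}$ nor $\Delta_D^{-s}$ is separately trace class, and the integrated diagonal must be interpreted as a $b$-integral. The point to verify carefully is that the cylindrical-end divergences of the four complex powers cancel pairwise — $\Delta_D$ and $\Delta$ agree to all orders in the normal direction at infinity (and similarly for the hatted pair) — so that the diagonal of $\Delta_D^{-s}-\Delta^{-s}-\widehat{\Delta}_D^{-s}+\widehat{\Delta}^{-s}$ is genuinely integrable, and its ordinary integral equals the corresponding difference of $b$-traces. This is a standard consequence of the $b$-calculus but is the one place where one must go beyond a literal word-for-word rewrite of the proof of Theorem \ref{thm-Cas}.
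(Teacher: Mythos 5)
Your proposal is correct and follows essentially the same route as the paper: the authors likewise run the proof of Theorem \ref{thm-Cas} word-for-word, invoking the pseudo continuity (exponential off-diagonal decay) of $b$-pseudodifferential operators in the calculus with bounds so that the trace theorems of Section \ref{sec-trace} apply in Step 2, and then replace $\zeta$ by $\bz$ on the noncompact factors. Your closing remark about interpreting the integrated diagonal as a $b$-integral, with the cylindrical-end divergences cancelling pairwise, is precisely the point the paper leaves implicit in that replacement.
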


Theorem \ref{thm-Cas2} fails if we drop the invertibility assumptions on the cross-sectional operators; for example, the operators on the manifolds with cylindrical ends where the cross-sectional Laplacian is not invertible would have continuous spectrum down to the origin so the integral in the gluing formula would not make sense.

%%%%%%%%%%%%%%%%%%%%%%%%%%%%%%%%%%%%%%%%%%%%%%%%%%%%%%%%%%%%%%%%%
\section{Analytic surgery} \label{sec-anasurg}
%%%%%%%%%%%%%%%%%%%%%%%%%%%%%%%%%%%%%%%%%%%%%%%%%%%%%%%%%%%%%%%%%

Using Theorems \ref{thm-Cas} and \ref{thm-Cas2} we prove the analytic surgery theorems in the introduction, modulo some details on Dirichlet-to-Neumann maps which we will present in the Appendix.

%***************************************************************%
\subsection{Stretching a manifold with boundary} \label{sec-stretch1}
%***************************************************************%

We begin with Theorem \ref{thm-main0}. Let $\Delta$ be a Laplace-type operator on $M_0$, a compact Riemannian manifold with boundary $Y : = \partial M_0$. We assume that $M_0$ has a collar neighborhood
\[
M_0 \cong [-1,0]_x \times Y
\]
over which $\Delta = -\D_x^2 + \Delta_Y$ where $\Delta_Y$  \emph{is an invertible  Laplace-type operator on $Y$}. Let $N_r = [0,r] \times Y$ with $r > 1$ and let $M_r$ be the manifold obtained from $M_0$ by attaching the cylinder $N_r$ to $\partial M_0$,
\[
M_r = M_0 \cup_Y N_r ;
\]
see Figure \ref{fig-gluemwb}. $\Delta$ extends in a natural way to $M_r$ and we denote the corresponding Dirichlet Laplacian by $\Delta_{M_r}$.

\begin{figure}[h!] \centering
\includegraphics{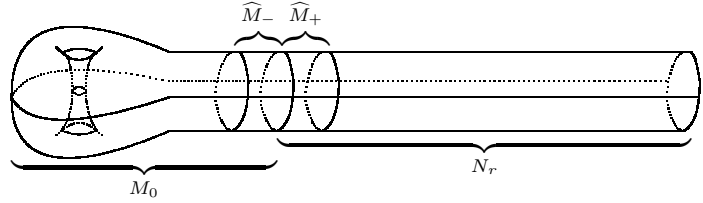} \caption{$M_r = M_- \cup_Y M_+$ where $M_- = M_0$ and $M_+ = N_r$. The submanifold $\widehat{M}_-$ is the original collar $[-1,0] \times Y$ of $M_0$ and $\widehat{M}_+ = [0,1] \times Y \subseteq N_r$.} \label{fig-gluemwb}
\end{figure}

We now use Theorem \ref{thm-Cas} with the partitions shown in Figure \ref{fig-gluemwb}
and get
\begin{multline} \label{lem-01}
\zeta (\Delta_{M_0}, s )  + \zeta (\Delta_{N_r},s )
- \zeta (\Delta_{M_r}, s )  \\ = f(s) + \frac{i}{2\pi} \int_\Gamma \lambda^{-s} \Tr_Y \big[ R_r(\lambda)^{-1}R_r'(\lambda) - \widehat{R}(\lambda)^{-1} \widehat{R}'(\lambda) \big] d\lambda,
\end{multline}
where
\[
f(s) = \zeta(\widehat{\Delta}_{-},s) + \zeta(\widehat{\Delta}_{+},s) - \zeta(\widehat{\Delta},s),
\]
$\Gamma = \{\lambda \in \C\, ;\, \Re \lambda = c\}$ with $c > 0$ small, and where $R_r(\lambda)$ and $\widehat{R}(\lambda)$ denote the Dirichlet-to-Neumann maps for the partitioned manifolds $M = M_- \cup_Y M_+$ and $\widehat{M} = \widehat{M}_- \cup_Y \widehat{M}_+$, respectively,
and, finally, where the subscript $r$ in $R_r(\lambda)$ is to emphasize that the manifold $M_r$ depends on $r$.

\begin{figure}[h!] \centering
\includegraphics{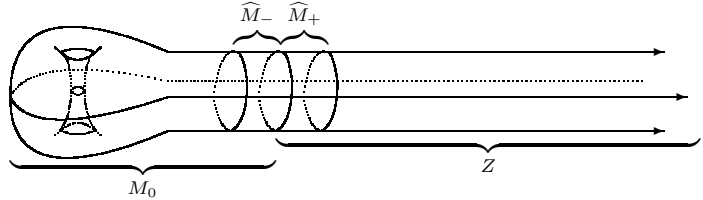} \caption{$M_{\infty} = M_- \cup M_+$ where $M_- = M_0$ and $M_+ = Z$ is a manifold with cylindrical end, and the partitions $\widehat{M}_\pm$ are the same as in Figure \ref{fig-gluemwb}.} \label{fig-gluemwc}
\end{figure}

Let $M_\infty$ denote the manifold $M_0$ with the infinite cylinder $Z := [0,\infty) \times Y$ attached and let $\Delta_{\infty}$ be the canonical extension of $\Delta$ to $M_{\infty}$. We apply Theorem \ref{thm-Cas2} with the decompositions in Figure \ref{fig-gluemwc} and get
\begin{multline} \label{lem-02}
\zeta (\Delta_{M_0}, s )  + \bz (\Delta_{Z},s )
- \bz (\Delta_{M_\infty}, s )  \\ = f(s) + \frac{i}{2\pi} \int_\Gamma \lambda^{-s} \Tr_Y \big[ R_\infty(\lambda)^{-1}R_\infty'(\lambda) - \widehat{R}(\lambda)^{-1} \widehat{R}'(\lambda) \big] d\lambda,
\end{multline}
where $R_\infty(\lambda)$ is the Dirichlet-to-Neumann map for $M_\infty = M_0 \cup_Y Z$ and where the other notations are the same as above.

We now take the combination
$-$\eqref{lem-01} $+$ \eqref{lem-02} and get

\begin{equation} \label{lem-03}
\zeta (\Delta_{M_r}, s ) - \zeta (\Delta_{N_r},s ) + \bz (\Delta_{Z},s )
- \bz (\Delta_{M_\infty}, s ) = \rho(r,s),
\end{equation}
where
\begin{equation} \label{rho}
\rho(r,s) =  \frac{i}{2\pi} \int_\Gamma \lambda^{-s} \Tr_Y \Big(  R_{\infty}(\lambda)^{-1} R_{\infty}'(\lambda) - R_r(\lambda)^{-1} R_r'(\lambda) \Big) \, d\lambda .
\end{equation}

As mentioned earlier, in \cite[Sec.\ 2]{loya04-48-1279} it was proved that $ \bz( \Delta_{Z} ,s) = - \frac14 \zeta(
\Delta_Y ,s)$ and from Proposition \ref{prop-cylinder} we know that
\[
\zeta(\Delta_{N_r},s) = \frac{r}{2} \xi_Y(s) - \frac12 \zeta(\Delta_Y,s) + \kappa (r,s) ,
\]
where $\kappa (r,s)$ is an entire function of $s$ that vanishes exponentially fast as $r \to \infty$ uniformly on compact subsets of $\C$. Thus, \eqref{lem-03} can be written as
\[
\zeta(\Delta_{M_r}, s) - \frac{r}{2} \xi_Y(s) = \bz(\Delta_{M_{\infty}},s) - \frac14 \zeta(\Delta_Y,s) + \kappa (r,s) + \rho(r,s).
\]
So far we have not made the assumption $\ker \Delta_\infty = \{0\}$ that is made in Theorem \ref{thm-main0}; we do so now in order to analyze $\rho(r,s)$.

\begin{proposition} \label{lem-rho}
Assuming $\ker \Delta_\infty = \{0\}$, there is an $r_0 > 0$ such that $\rho(r,s) \in C^\infty((r_0 , \infty) \times \C)$ and is an entire function of $s \in \C$ that vanishes exponentially fast as $r \to \infty$ uniformly on compact subsets of $\C$.
\end{proposition}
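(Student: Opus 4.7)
The plan is to extract the $r$-dependence of the integrand of $\rho(r,s)$ by comparing the Dirichlet-to-Neumann maps $R_r(\lambda)$ and $R_\infty(\lambda)$ directly, showing that their difference (and that of their $\lambda$-derivatives) decays exponentially in $r$, and then propagating this through the $\lambda$-integral.

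First I would observe that both $R_r$ and $R_\infty$ share the same ``$M_0$-side'' contribution $\mathcal{N}_{M_0,-}(\lambda)$, so that
\[
R_r(\lambda) - R_\infty(\lambda) = \mathcal{N}_{N_r,+}(\lambda) - \mathcal{N}_{Z,+}(\lambda),
\]
where the two operators on the right are the Dirichlet-to-Neumann maps at $\{0\}\times Y$ for $-\partial_x^2 + \Delta_Y - \lambda$ on $N_r = [0,r]\times Y$ (with Dirichlet condition at $\{r\}\times Y$) and on $Z = [0,\infty)\times Y$, respectively. Separation of variables in the orthonormal basis $\{\varphi_k\}$ of $\Delta_Y$-eigenfunctions, with eigenvalues $\mu_k^2 > 0$ (using the invertibility of $\Delta_Y$), shows that both maps are simultaneously diagonal with
\[
\mathcal{N}_{N_r,+}(\lambda)\varphi_k = \nu_k \coth(\nu_k r)\varphi_k, \qquad \mathcal{N}_{Z,+}(\lambda)\varphi_k = \nu_k \varphi_k,
\]
where $\nu_k = \nu_k(\lambda) = \sqrt{\mu_k^2 - \lambda}$ (principal branch). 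The difference has eigenvalues
\[
\nu_k\bigl(\coth(\nu_k r) - 1\bigr) = \frac{2\nu_k e^{-2\nu_k r}}{1 - e^{-2\nu_k r}},
\]
which for $\lambda \in \Gamma$ decay exponentially in $r\mu_k$. Differentiation in $\lambda$ gives analogous explicit formulas for $R_r'(\lambda) - R_\infty'(\lambda)$.

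Next I would apply the algebraic identity
\[
R_r^{-1}R_r' - R_\infty^{-1}R_\infty' = R_r^{-1}(R_r' - R_\infty') - R_r^{-1}(R_r - R_\infty)R_\infty^{-1}R_\infty'
\]
to reduce the integrand of $\rho(r,s)$ to the exponentially small differences above, multiplied by the auxiliary factors $R_r^{-1}$ and $R_\infty^{-1}R_\infty'$. The invertibility assumption $\ker \Delta_\infty = \{0\}$ ensures that $R_\infty(\lambda)^{-1} = \gamma \Delta_\infty(\lambda)^{-1}\gamma^*$ extends analytically across $\lambda = 0$ and is uniformly bounded on $\Gamma$, with $R_\infty^{-1}R_\infty'$ rapidly decreasing as $|\lambda| \to \infty$ along $\Gamma$ (in the spirit of Proposition \ref{prop-RR0}). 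The invertibility of $\Delta_Y$ produces a uniform spectral gap below $\mu_1^2$ for the Dirichlet Laplacian on $N_r$, and a perturbation argument based on $R_r = R_\infty + (R_r - R_\infty)$ with exponentially small perturbation yields uniform boundedness of $R_r(\lambda)^{-1}$ for $\lambda \in \Gamma$ and $r \geq r_0$; these quantitative Dirichlet-to-Neumann bounds are the technical details the paper defers to the Appendix. Combining everything, the smoothing operator $R_r^{-1}R_r' - R_\infty^{-1}R_\infty'$ satisfies, for $\lambda \in \Gamma$ and $r \geq r_0$, a bound of the form
\[
\bigl|\Tr_Y\bigl(R_r^{-1}R_r' - R_\infty^{-1}R_\infty'\bigr)\bigr| \leq C_N (1 + |\lambda|)^{-N} e^{-cr}
\]
for every $N$, with $c > 0$ independent of $\lambda$ (one can take $c$ slightly less than $2\mu_1$). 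Multiplying by $\lambda^{-s}$ and integrating along $\Gamma$, the $|\lambda|$-decay absorbs $|\lambda^{-s}|$ on compact $s$-sets while the exponential decay in $r$ is preserved, so $\rho(r,s)$ is entire in $s$ with $|\rho(r,s)| \leq C_K e^{-cr}$ uniformly for $s \in K \subseteq \C$ compact. Smoothness in $r$ follows from the smoothness of the integrand in $r$.

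The main obstacle is the uniform-in-$r$ invertibility of $R_r(\lambda)$ along $\Gamma$ for $r$ large. One must verify that no Dirichlet eigenvalues of $\Delta_{M_r}$ (nor of $N_r$) cross $\Gamma$ as $r \to \infty$: invertibility of $\Delta_Y$ forces the spectrum of $\Delta_{N_r}$ to lie above $\mu_1^2$, while invertibility of $\Delta_\infty$ gives a spectral gap at the origin for $\Delta_{M_r}$ in the limit; combining these via a careful perturbation argument to obtain the requisite uniform operator bound is the delicate piece relegated to the Appendix.
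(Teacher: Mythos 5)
Your proposal is correct and takes essentially the same route as the paper: the paper likewise computes $R_r(\lambda)-R_\infty(\lambda)=C_r(\lambda)$ explicitly by separation of variables on the cylinder, shows $C_r(\lambda)$ is an exponentially small (in $r$) smoothing family that is Schwartz in $\lambda$ on a sectorial region, uses $\ker\Delta_\infty=\{0\}$ to invert $R_\infty(\lambda)$ there and then $R_r(\lambda)=(\Id+G_r(\lambda))R_\infty(\lambda)$ by a Neumann-series perturbation with $G_r=C_rR_\infty^{-1}$ of small norm for $r>r_0$, and concludes from the trace of $(\Id+G_r)^{-1}G_r'$ exactly the entire-in-$s$, $O(e^{-cr})$ behavior you describe. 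The one inaccurate side remark is that $R_\infty(\lambda)^{-1}R_\infty'(\lambda)$ is rapidly decreasing along $\Gamma$ --- it is only parameter-dependent of weight $-2$ --- but this is harmless, since in your identity the rapid decay in $\lambda$ and the factor $e^{-cr}$ come from the differences $R_r-R_\infty$ and $R_r'-R_\infty'$ together with the ideal property of the exponentially small smoothing families, which is precisely how the paper argues.
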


This proposition completes the proof of Theorem \ref{thm-main0}. Because the proof of Proposition \ref{lem-rho} is somewhat technical we leave the details to the appendix.

%***************************************************************%
\subsection{Stretching a closed manifold} \label{sec-stretch2}
%***************************************************************%

\begin{figure}[h!] \centering
\includegraphics{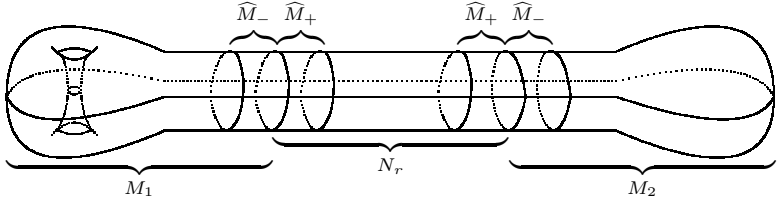} \caption{$M_r= M_- \cup_{\widetilde{Y}} M_+$ where $M_- = M_1 \sqcup M_2$, $M_+ = N_r$ and $\widetilde{Y} = \big(\{-r\} \times Y \big) \sqcup \big( \{r\} \times Y \big) \cong Y \sqcup Y$. The manifolds $\widehat{M}_\pm$ are collar neighborhoods of $\partial M_\pm$.} \label{fig-glueMr}
\end{figure}

We now prove Theorem \ref{thm-main}. Let $\Delta$ be a Laplace-type operator on a closed Riemannian manifold $M$ and let $Y \subseteq M$ be a codimension one submanifold of $M$ decomposing $M$ into two manifolds $M_1$ and $M_2$ with a common boundary. We assume a collar neighborhood
\[
M \cong [-1,1]_x \times Y
\]
over which $\Delta = -\D_x^2 + \Delta_Y$ where $\Delta_Y$ \emph{is an invertible  Laplace-type operator on $Y$}. Let $N_r = [-r,r] \times Y$ where $r > 2$, and split $M$ into the two halves $M_1$ and $M_2$ placing $N_r$ in between:
\[
M_r = M_1 \cup_{\{-r\} \times Y} N_r \cup_{\{r\} \times Y} M_2;
\]
see Figure \ref{fig-glueMr}. Observe that
\[
\widehat{M} \cong [-1,1] \times (Y \sqcup Y) \ , \ \ \widehat{M}_- \cong [-1,0] \times (Y \sqcup Y) \ , \ \ \widehat{M}_+ \cong [0,1] \times (Y \sqcup Y) .
\]

The operator $\Delta$ extends in a natural way to an operator $\Delta_{M_r}$ on $M_r$ and we use Theorem \ref{thm-Cas} with the partitions shown in Figure \ref{fig-glueMr} to obtain
\begin{multline} \label{lem-1}
\zeta (\Delta_{M_1}, s ) +
\zeta (\Delta_{M_2}, s ) - \zeta (\Delta_{M_r},s ) \\ = f(s) + \frac{i}{2\pi} \int_\Gamma \lambda^{-s} \Tr_Y \big[ R_r(\lambda)^{-1}R_r'(\lambda) - \widetilde{R}(\lambda)^{-1} \widetilde{R}'(\lambda) \big] d\lambda,
\end{multline}
where
$f(s) = \zeta(\widehat{\Delta}_{-},s) + \zeta(\widehat{\Delta}_{+},s) - \zeta(\widehat{\Delta},s)$
and
$R_r(\lambda), \widetilde{R}(\lambda)$ are the Dirichlet-to-Neumann maps defined on the dividing hypersurface $Y \sqcup Y$ for $M_r$ and $\widehat{M}$, respectively. Let $N = [-1,1] \times Y$, and split $N$ into two halves $N_- = [-1,0] \times Y$ and $N_+ = [0,1] \times Y$ so that $N = N_- \cup_Y N_+$. Let $\Delta_N$, $\Delta_{N_-}$, $\Delta_{N_+}$ denote the corresponding Dirichlet Laplacians and let $\widehat{R}(\lambda)$ be the Dirichlet-to-Neumann map. Then it follows that
\[
f(s) = 2 g(s)\ ,
\]
where
\[
g(s) = \zeta(\Delta_{N_-},s) + \zeta(\Delta_{N_+},s) - \zeta(\Delta_N,s),
\]
and acting on $C^\infty(Y \sqcup Y) = C^\infty(Y) \oplus C^\infty(Y)$, we have
\[
\widetilde{R}(\lambda) =  \begin{pmatrix}
\widehat{R} (\lambda) & 0 \\
0 & \widehat{R} (\lambda)
\end{pmatrix} .
\]

Now for $i = 1,2$, we put
\[
M_{i,\infty} = M_i \cup_Y Z_i \ , \ \ Z_1 = [0,\infty) \times Y \ , \ \ Z_2 = (-\infty,0] \times Y,
\]
which is a manifold with cylindrical end, and we let $\Delta_{i,\infty}$ denote the canonical extension of $\Delta|_{M_i}$ to $M_{i,\infty}$. Then according to \eqref{lem-02} we have
\begin{multline} \label{lem-2}
\zeta (\Delta_{M_1}, s )  + \bz (\Delta_{Z_1},s )
- \bz (\Delta_{M_{1,\infty}}, s )  \\ = g(s) + \frac{i}{2\pi} \int_\Gamma \lambda^{-s} \Tr_Y \big[ R_{1,\infty}(\lambda)^{-1}R_{1,\infty}'(\lambda) - \widehat{R}(\lambda)^{-1} \widehat{R}'(\lambda) \big] d\lambda
\end{multline}
and
\begin{multline} \label{lem-3}
\zeta (\Delta_{M_2}, s )  + \bz (\Delta_{Z_2},s )
- \bz (\Delta_{M_{2,\infty}}, s )  \\ = g(s) + \frac{i}{2\pi} \int_\Gamma \lambda^{-s} \Tr_Y \big[ R_{2,\infty}(\lambda)^{-1}R_{2,\infty}'(\lambda) - \widehat{R}(\lambda)^{-1} \widehat{R}'(\lambda) \big] d\lambda ,
\end{multline}
where $R_{i,\infty}(\lambda)$ is the Dirichlet-to-Neumann map for $M_{i,\infty} = M_i \cup_Y Z_i$. Recalling that $f(s) = 2 g(s)$,
when we take the combination $-$ \eqref{lem-1} $+$ \eqref{lem-2} $+$ \eqref{lem-3} we obtain
\[
\zeta (\Delta_{M_r}, s ) - \bz (\Delta_{M_{1,\infty}},s ) - \bz (\Delta_{M_{2,\infty}},s )
+ \bz (\Delta_{Z_1}, s ) + \bz(\Delta_{Z_2},s) \\ = \varrho(r,s),
\]
where
\begin{equation} \label{varrho}
\varrho(r,s) = \frac{i}{2\pi} \int_\Gamma \lambda^{-s} \Big(- R_r(\lambda)^{-1} R_r'(\lambda) +R_\infty(\lambda)^{-1} R_\infty ' (\lambda) \Big) d\lambda
\end{equation}
with
\[
R_\infty(\lambda) = \begin{pmatrix}
R_{1,\infty}(\lambda) & 0 \\
0 & R_{2,\infty}(\lambda) \end{pmatrix}.
\]
By \cite[Sec.\ 2]{loya04-48-1279},
\[
\bz (\Delta_{Z_1},s ) = \bz (\Delta_{Z_2},s ) = - \frac14 \zeta(\Delta_Y,s),
\]
so
\[
\zeta (\Delta_{M_r}, s ) - \bz (\Delta_{M_{1,\infty}},s ) - \bz (\Delta_{M_{2,\infty}},s )
-\frac12 \zeta(\Delta_Y,s) \\ = \varrho(r,s).
\]
This formula plus the following theorem, where we now impose the condition $\ker \Delta_{i,\infty} = \{0\}$ for $i = 1,2$, imply Theorem \ref{thm-main}.

\begin{proposition} \label{lem-rho2}
Assuming $\ker \Delta_{i,\infty} = \{0\}$ for $i = 1,2$, there is an $r_0 > 0$ such that $\varrho(r,s)$ belongs to $C^\infty((r_0 , \infty) \times \C)$ and is an entire function of $s \in \C$ such that given any compact subset $K \subseteq \C$ there are constants $c,C > 0$ such that for all $r > r_0$ and $s \in K$,
\[
|\varrho(r,s)| \leq C e^{-cr}.
\]
\end{proposition}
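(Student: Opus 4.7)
The plan is to mirror the proof of Proposition \ref{lem-rho}, which establishes the analogous claim in the one-boundary setting (its proof is the content of the appendix). The essential reduction is to establish a uniform exponential-decay estimate
\[
\bigl\| \partial_\lambda^j \bigl(R_r(\lambda) - R_\infty(\lambda)\bigr) \bigr\|_{\Psi^{-\infty}(Y\sqcup Y)} \leq C_j\, e^{-c r}, \qquad \lambda \in \Gamma,\ r \geq r_0,
\]
in every seminorm on the smoothing operators and for every $j\geq 0$. Granted this, the three claimed properties of $\varrho(r,s)$ follow routinely: smoothness in $r$ by differentiating under the integral, analyticity in $s$ by Morera's theorem applied under the integral (using that $|\lambda^{-s}|$ is integrable on $\Gamma$ against a Schwartz function in $\lambda$, uniformly for $s$ in any fixed compact set), and the bound $|\varrho(r,s)|\leq Ce^{-cr}$ by pulling the uniform estimate outside the $\lambda$-integral.

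The key ingredient is therefore the exponential comparison $R_r\approx R_\infty$. Via the identity $R(\lambda)^{-1}=\gamma\,\Delta(\lambda)^{-1}\gamma^*$ (applied separately near each of the two dividing hypersurfaces $\{\pm r\}\times Y$), this is equivalent to an exponential comparison of $\Delta_{M_r}(\lambda)^{-1}$ with $\Delta_{M_{1,\infty}}(\lambda)^{-1}\oplus\Delta_{M_{2,\infty}}(\lambda)^{-1}$, after restriction to the pair of dividing hypersurfaces. One builds a parametrix for $\Delta_{M_r}(\lambda)^{-1}$ by gluing the two cylindrical-end resolvents across the middle of the long cylinder $N_r=[-r,r]\times Y$. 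Since $\ker\Delta_Y=\{0\}$, the model resolvent of $-\partial_x^2+\Delta_Y$ on $\R\times Y$ has an off-diagonal kernel decaying exponentially at a rate bounded below by $\sqrt{\mu_1^2-|\lambda|}$, where $\mu_1^2>0$ is the smallest eigenvalue of $\Delta_Y$; this forces the parametrix error to be $O(e^{-cr})$ in the smoothing-operator topology near each dividing hypersurface, uniformly in $\lambda\in\Gamma$ and with all $\lambda$-derivatives. The hypothesis $\ker\Delta_{i,\infty}=\{0\}$ is then used to guarantee that $R_\infty(\lambda)=R_{1,\infty}(\lambda)\oplus R_{2,\infty}(\lambda)$ is invertible with uniformly bounded inverse on the line $\Gamma=\{\Re\lambda=c\}$ for sufficiently small $c>0$; a Neumann-series perturbation then transfers both the invertibility and the exponential estimate to $R_r(\lambda)^{-1}$ for all $r\geq r_0$, and multiplying by $R_r'(\lambda)$ and $R_\infty'(\lambda)$ (which are parameter-dependent of weight $0$ by Proposition \ref{prop-R}) yields the desired estimate for the integrand in \eqref{varrho}, together with the Schwartz decay in $|\lambda|\to\infty$ along $\Gamma$ coming from Proposition \ref{prop-RR0}.

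The principal technical obstacle is precisely the $b$-calculus parametrix construction underlying the exponential estimate for $R_r-R_\infty$, which is the same construction carried out in the appendix for Proposition \ref{lem-rho}. The modifications needed for the present two-boundary setting are essentially cosmetic: one replaces $Y$ throughout by $Y\sqcup Y$ and performs the gluing symmetrically at both $\{-r\}\times Y$ and $\{r\}\times Y$. Because the two dividing hypersurfaces in $M_r$ interact only through Schwartz kernels that propagate across at least a distance $2r$ along the cylinder axis, this cross-interaction is itself $O(e^{-2cr})$ in smoothing norm and therefore contributes to, rather than obstructs, the desired exponential bound, so that the two sides essentially decouple in the relevant smoothing topology.
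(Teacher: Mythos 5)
Your overall skeleton is the right one, and it is the same as the paper's: reduce everything to showing that $R_r(\lambda)-R_\infty(\lambda)$ is exponentially small in $r$, with all $\lambda$-derivatives, in the smoothing topology on a sectorial region containing $\Gamma$; then use $\ker\Delta_{i,\infty}=\{0\}$ to invert $R_\infty(\lambda)$ there (with weight $-1$ parameter-dependent bounds, Proposition \ref{prop-R}), run a Neumann series to invert $R_r(\lambda)$ for $r>r_0$ on an $r$-independent contour, and conclude that the integrand in \eqref{varrho} equals $-\Tr_Y\big((\Id+G_r(\lambda))^{-1}G_r'(\lambda)\big)$ with $G_r$ in the rapidly decreasing ideal, from which smoothness, entireness, and the $O(e^{-cr})$ bound follow.

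The gap is in how you establish the key exponential comparison. You outsource it to a $b$-calculus parametrix for $\Delta_{M_r}(\lambda)^{-1}$ obtained by gluing the two cylindrical-end resolvents, and you assert that this "is the same construction carried out in the appendix for Proposition \ref{lem-rho}." It is not: the appendix contains no resolvent parametrix at all. Because the middle piece $N_r=[-r,r]\times Y$ is an exact product, the paper simply solves the Dirichlet problem on $N_r$ by separation of variables and computes the Dirichlet-to-Neumann map in closed form (Proposition \ref{propRrinfty}), obtaining $R_r(\lambda)-R_\infty(\lambda)=T_r(\lambda)$ with $T_r(\lambda)=\frac{\sqrt{\Delta_Y(\lambda)}}{\sinh(2r\sqrt{\Delta_Y(\lambda)})}\begin{pmatrix} e^{-2r\sqrt{\Delta_Y(\lambda)}} & -\Id\\ -\Id & e^{-2r\sqrt{\Delta_Y(\lambda)}}\end{pmatrix}$; membership $T_r(\lambda)\in\widetilde{\Psi}^{-\infty}_\Lambda(Y)$ then follows from the eigenfunction expansion and the elementary estimate of Lemma \ref{lem-estimate} (Propositions \ref{propCr}, \ref{propKr}), and no comparison of resolvents on $M_r$ versus $M_{i,\infty}$ is ever needed, since $\mathcal{N}_{M_1},\mathcal{N}_{M_2}$ cancel exactly. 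Your alternative route via $R(\lambda)^{-1}=\gamma\Delta(\lambda)^{-1}\gamma^*$ and resolvent gluing could in principle be completed (one would also need Lemma \ref{lem-ideal} and the weight-$1$ bounds on $R_r,R_\infty$ to convert smallness of $R_r^{-1}-R_\infty^{-1}$ into smallness of $R_r-R_\infty$), but as written it is only a sketch, and the uniformity claim rests on an incorrectly stated decay rate: "$\sqrt{\mu_1^2-|\lambda|}$" is not even real for $|\lambda|$ large on $\Gamma$; the correct quantity is $\Re\sqrt{\mu_k^2-\lambda}\geq b\big(\sqrt{|\lambda|}+\sqrt{\mu_k^2}\big)$ on the sector, which is exactly what Lemma \ref{lem-estimate} supplies. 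Likewise, the Schwartz decay in $\lambda$ of the integrand comes from $G_r\in\widetilde{\Psi}^{-\infty}_\Lambda(Y)$, not from Proposition \ref{prop-RR0}. So the statement is true and your strategy is the intended one, but the central estimate needs to be proved — most easily by the explicit cylinder computation above — rather than deferred to a construction that is neither carried out nor present in the paper.
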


This result is proved in the appendix.

%***************************************************************%
%\subsection{The non-invertible case} \label{sec-noninvert}
%***************************************************************%

\section*{Acknowledgements}
Our work is supported by National Science Foundation Grants
PHY-0757791 (Baylor) and PHY-0757795 (Binghamton). PL wholeheartedly thanks Jinsung Park for his friendship and teaching him the beauty of gluing formulas.
Part of the work was done while
KK enjoyed the hospitality and partial support of the
Department of Physics and Astronomy of
the University of Oklahoma. Thanks go in particular to Kimball Milton who
made this very pleasant and exciting visit possible.

%%%%%%%%%%%%%%%%%%%%%%%%%%%%%%%%%%%%%%%%%%%%%%%%%%%%%%%%%%%%%%%%%%%
\appendix
%***************************************************************%
\section{Analysis of Dirichlet-to-Neumann maps} \label{sec-zetacyl}
%***************************************************************%
%%%%%%%%%%%%%%%%%%%%%%%%%%%%%%%%%%%%%%%%%%%%%%%%%%%%%%%%%%%%%%%%%%%

In this appendix we prove Propositions \ref{lem-rho} and \ref{lem-rho2}.

%***************************************************************%
\subsection{Dirichlet-to-Neumann maps}
%***************************************************************%

We begin by computing the Dirichlet-to-Neumann maps appearing in \eqref{rho}; for the notation in the following proposition see Section \ref{sec-stretch1}.

\begin{proposition} \label{prop-RRinfty}
We have
\begin{enumerate}
\item $R_{\infty}(\lambda) = \mathcal{N}_{M_0}(\lambda) + \sqrt{\Delta_Y(\lambda)}$ where $\mathcal{N}_{M_0}(\lambda)$ is the Dirichlet-to-Neumann map for $M_0$.
\item $R_r(\lambda) =
R_{\infty}(\lambda) + C_r(\lambda)$, where
\[
C_r(\lambda) =  \frac{2 \sqrt{\Delta_Y(\lambda)}}{\Id - e^{-2 r \sqrt{\Delta_Y(\lambda)}}} \, e^{-2 r \sqrt{\Delta_Y(\lambda)}}.
\]
\end{enumerate}
\end{proposition}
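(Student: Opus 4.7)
The starting point is the additive structure $R_\infty(\lambda)=\mathcal N_-(\lambda)+\mathcal N_+(\lambda)$ and $R_r(\lambda)=\mathcal N_-(\lambda)+\mathcal N_+^r(\lambda)$ from the definition of the Dirichlet-to-Neumann map in \eqref{defR}. Since the minus side $M_-=M_0$ is the same in both decompositions (see Figures \ref{fig-gluemwc} and \ref{fig-gluemwb}), the two $\mathcal N_-$'s coincide, and it is nothing but $\mathcal N_{M_0}(\lambda)$ by definition. So the whole proposition reduces to explicit computations of the plus-side Dirichlet-to-Neumann maps on the cylinders $Z=[0,\infty)\times Y$ and $N_r=[0,r]\times Y$, where $\Delta=-\partial_x^2+\Delta_Y$ is exactly of product form. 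This is a standard separation-of-variables exercise, legitimized by the functional calculus for $\Delta_Y$.

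For $Z$, I would solve the boundary value problem \eqref{DP} for $(-\partial_x^2+\Delta_Y-\lambda)\phi=0$ with $\phi|_{x=0}=\psi$ and $\phi\to0$ as $x\to\infty$. Setting $\mu(\lambda):=\sqrt{\Delta_Y-\lambda}$ via the functional calculus of the positive operator $\Delta_Y$ (the assumption $\ker\Delta_Y=\{0\}$ together with $\lambda$ chosen off the relevant spectrum guarantees that $\Delta_Y-\lambda$ is invertible and that $\RE\mu(\lambda)>0$ is well defined via the principal branch), the unique decaying solution is $\phi(x,y)=e^{-x\mu(\lambda)}\psi(y)$. The outer unit normal to $Z$ along $Y$ is $-\partial_x$, so
\[
\mathcal N_Z(\lambda)\psi=-\partial_x\phi\big|_{x=0}=\mu(\lambda)\psi=\sqrt{\Delta_Y(\lambda)}\,\psi,
\]
which gives part $(1)$.

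For $N_r$, the general solution is $\phi(x,y)=Ae^{-x\mu}\psi+Be^{x\mu}\psi$ with $A,B$ determined by $\phi|_{x=0}=\psi$ and the Dirichlet condition $\phi|_{x=r}=0$. Solving the elementary $2\times2$ linear system (legitimized again by functional calculus in $\mu$, which commutes with everything in sight) yields
\[
A=(\Id-e^{-2r\mu})^{-1},\qquad B=-e^{-2r\mu}(\Id-e^{-2r\mu})^{-1},
\]
and a direct computation of $-\partial_x\phi|_{x=0}$ gives
\[
\mathcal N_{N_r}(\lambda)=\mu(\lambda)\,\frac{\Id+e^{-2r\mu(\lambda)}}{\Id-e^{-2r\mu(\lambda)}}=\sqrt{\Delta_Y(\lambda)}+\frac{2\sqrt{\Delta_Y(\lambda)}}{\Id-e^{-2r\sqrt{\Delta_Y(\lambda)}}}\,e^{-2r\sqrt{\Delta_Y(\lambda)}}.
\]
Adding $\mathcal N_{M_0}(\lambda)$ to both sides yields part $(2)$ with the stated $C_r(\lambda)$.

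There is essentially no conceptual obstacle here; the only thing to be careful about is ensuring that the formal Ansatz $e^{\pm x\mu(\lambda)}$ is genuinely well defined as a bounded operator-valued function of $x$ and that the boundary-value problem has a \emph{unique} solution in the required class. The uniqueness on $Z$ requires picking the branch of the square root with $\RE\mu(\lambda)>0$ so that $e^{x\mu(\lambda)}$ is discarded by the decay condition; on $N_r$ no such issue arises because both exponentials are acceptable and the two boundary conditions pin them down. One should also check that the $\lambda$'s for which all objects in the statement are defined (i.e., off the spectra of the various Dirichlet Laplacians) form a dense set where the identities hold, and then invoke meromorphic continuation in $\lambda$—but since both sides are built from the functional calculus of the single operator $\Delta_Y-\lambda$, this is automatic.
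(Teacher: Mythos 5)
Your proposal is correct and follows essentially the same route as the paper: reduce to the plus-side Dirichlet-to-Neumann maps on the model cylinders $Z$ and $N_r$ (the $M_0$ contributions coinciding) and compute them by solving the product boundary value problem via the functional calculus of $\Delta_Y$. The only cosmetic difference is that you use the exponential Ansatz $Ae^{-x\mu}+Be^{x\mu}$ on $N_r$ where the paper writes the solution with $\cosh$ and $\sinh$, which yields the same $\mathcal{N}_{N_r}(\lambda)=\sqrt{\Delta_Y(\lambda)}+C_r(\lambda)$.
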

\begin{proof}
To prove \textit{(1)} note that by definition \eqref{defR} of the Dirichlet-to-Neumann map,  we have
\[
R_\infty(\lambda) = \mathcal{N}_{M_0}(\lambda) + N_{Z}(\lambda),
\]
so we just have to verify that $N_Z(\lambda) = \sqrt{\Delta_Y(\lambda)}$.
To this end, observe that if $\psi \in C^\infty(Y)$, then
\[
\varphi =  e^{-\sqrt{\Delta_Y(\lambda)}\, x}\, \psi \in C^\infty(Z)
\]
solves $(-\D_x^2 + \Delta_Y(\lambda)) \varphi = 0$ and $\varphi|_{x = 0} = \psi$. The outer unit normal is $-\partial_x$, so
\[
N_Z(\lambda) \psi := - \D_x \varphi \Big|_{x = 0}  = \sqrt{\Delta_Y(\lambda)} \psi,
\]
which completes the proof of \textit{(1)}.

To prove \textit{(2)}, we note that
\begin{align*}
R_r(\lambda) & : = \mathcal{N}_{M_0}(\lambda) + \mathcal{N}_{N_r}(\lambda) \\
& = R_\infty(\lambda) - \sqrt{\Delta_Y(\lambda)} + \mathcal{N}_{N_r}(\lambda)  ,
\end{align*}
so we just have to prove that $\mathcal{N}_{N_r}(\lambda) = \sqrt{\Delta_Y(\lambda)} +  C_r(\lambda)$.

Thus, focusing on $N_r = [0,r] \times Y$ we observe that given $\psi \in C^\infty(Y)$,
\[
\varphi :=  \cosh\big(x \sqrt{\Delta_Y(\lambda)}\big)\, \psi - \frac{\sinh\big( x \sqrt{\Delta_Y(\lambda)} \big)}{\sinh \big(r \sqrt{\Delta_Y(\lambda)}\big)} \cosh\big(r \sqrt{\Delta_Y(\lambda)} \big) \psi
\]
solves $(-\D_x^2 + \Delta_Y(\lambda)) \varphi = 0$ and $\varphi|_{x = 0} = \psi$ and $\varphi|_{x = r} = 0$.
Since
\[
-\D_x \varphi|_{x = 0} = \sqrt{\Delta_Y(\lambda)}  \frac{\cosh\big( r \sqrt{\Delta_Y(\lambda)} \big)}{\sinh \big(r \sqrt{\Delta_Y(\lambda)}\big)} \psi
\]
and $\cosh z = \sinh z + (\cosh z - \sinh z)= \sinh z + e^{-z}$ it follows that
\begin{align*}
\mathcal{N}_{N_r}(\lambda) & = \sqrt{\Delta_Y(\lambda)} +  \sqrt{\Delta_Y(\lambda)}  \frac{e^{- r \sqrt{\Delta_Y(\lambda)}}}{\sinh \big(r \sqrt{\Delta_Y(\lambda)}\big)} \\
& = \sqrt{\Delta_Y(\lambda)} +  C_r  (\lambda),
\end{align*}
exactly what we wanted.
\end{proof}

Using \emph{(2)} we can prove Proposition \ref{lem-rho} but in order to do so we need to understand the $r,|\lambda| \to \infty$ behavior of $C_r(\lambda)$, which we consider next.

%***************************************************************%
\subsection{Rapidly decreasing parameter-dependent operators}
%***************************************************************%

Let $\Lambda \subseteq \C$ be a sectorial region, which recall means that  outside some neighborhood of the origin, $\Lambda$ equals a sector (solid angle). We define the space $\widetilde{\Psi}^{-\infty}_{\Lambda}(Y)$ as the space of $\Psi^{-\infty}_\Lambda(Y)$-valued Schwartz functions on $[1,\infty)_r$ that are exponentially decreasing in $r$; more precisely, $K_r(\lambda) \in \widetilde{\Psi}^{-\infty}_{\Lambda}(Y)$ means that for some $\varepsilon > 0$, we have
\[
e^{\varepsilon r} K_r(\lambda) \in \mathcal{S}([1,\infty)_r  ; \Psi^{-\infty}_\Lambda(Y)),
\]
where the space on the right consists of $\Psi^{-\infty}_\Lambda(Y)$-valued Schwartz functions on $[1,\infty)_r$. Equivalently, we can put $\Lambda$ as another parameter space and write
\[
e^{\varepsilon r} K_r(\lambda) \in \mathcal{S}([1,\infty)_r \times \Lambda ; \Psi^{-\infty} (Y)).
\]
Here is a useful lemma concerning this space of operators.

\begin{lemma} \label{lem-ideal}
The space $\widetilde{\Psi}^{-\infty}_{\Lambda}(Y)$ is an ideal in the space of parameter dependent operators in the sense that if $K_r(\lambda) \in \widetilde{\Psi}^{-\infty}_{\Lambda}(Y)$ and $A(\lambda)$ is parameter dependent in $\lambda \in \Lambda$ of any given weight, then $A(\lambda) K_r(\lambda), K_r(\lambda) A(\lambda) \in \widetilde{\Psi}^{-\infty}_{\Lambda}(Y)$.
\end{lemma}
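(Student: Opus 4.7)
The plan is to reduce the claim to the well-known ideal property of $\Psi^{-\infty}_\Lambda(Y)$ inside the algebra of parameter-dependent pseudodifferential operators on $Y$, by treating the extra $r$-variable as a passive parameter. By hypothesis there is an $\varepsilon > 0$ such that $F_r(\lambda) := e^{\varepsilon r} K_r(\lambda)$ lies in $\mathcal{S}([1,\infty)_r \times \Lambda\,;\,\Psi^{-\infty}(Y))$. Since $A(\lambda)$ is independent of $r$, one has $A(\lambda) K_r(\lambda) = e^{-\varepsilon r} A(\lambda) F_r(\lambda)$ and similarly for the other order, so the exponential factor $e^{-\varepsilon r}$ in $r$ comes along for the ride. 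Thus it suffices to show that $A(\lambda) F_r(\lambda)$ and $F_r(\lambda) A(\lambda)$ belong to $\mathcal{S}([1,\infty)_r \times \Lambda\,;\,\Psi^{-\infty}(Y))$.

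For each fixed $(r,\lambda)$ the operator $F_r(\lambda)$ is smoothing on $Y$, and composition of a pseudodifferential operator of any order with a smoothing operator is smoothing; hence $A(\lambda) F_r(\lambda)$ and $F_r(\lambda) A(\lambda)$ take values in $\Psi^{-\infty}(Y)$ for every $(r,\lambda)$. The remaining content is the Schwartz bounds in $(r,\lambda)$. By the Leibniz rule,
\begin{equation*}
\partial_r^k \partial_\lambda^\alpha \bigl( A(\lambda) F_r(\lambda) \bigr) = \sum_{\beta \leq \alpha} \binom{\alpha}{\beta} \bigl( \partial_\lambda^\beta A(\lambda) \bigr)\bigl( \partial_r^k \partial_\lambda^{\alpha - \beta} F_r(\lambda) \bigr).
\end{equation*}
The factor $\partial_\lambda^\beta A(\lambda)$ is parameter-dependent of weight $m - |\beta|$, hence has norm $O((1+|\lambda|)^{m-|\beta|})$ between any fixed pair of Sobolev spaces on $Y$, while $\partial_r^k \partial_\lambda^{\alpha-\beta} F_r(\lambda)$ is rapidly decreasing in $(r,\lambda)$ in every $\Psi^{-\infty}(Y)$-seminorm. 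A polynomial-growth factor times a Schwartz factor is Schwartz, which produces the required estimates on all kernel seminorms of $A(\lambda) F_r(\lambda)$. The argument for $F_r(\lambda) A(\lambda)$ is identical.

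The one bookkeeping point to monitor is that $\Psi^{-\infty}(Y)$-seminorms are expressed as uniform bounds on kernel derivatives in the $Y$-variables, and that left or right composition with $\partial_\lambda^\beta A(\lambda)$ maps these seminorms to seminorms of the same kind, with only a factor $(1+|\lambda|)^{m-|\beta|}$ and a shift in the order of the seminorm on $F_r(\lambda)$ used on the right-hand side. This is standard continuity of parameter-dependent pseudodifferential operators between Sobolev spaces combined with the fact that smoothing operators map into $C^\infty(Y)$ with all seminorms controlled. I do not anticipate a serious obstacle: the lemma is formal in nature and amounts to organizing these uniform bounds together with the Leibniz rule above.
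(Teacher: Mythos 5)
Your argument is correct, but it proceeds along a genuinely different route than the paper's proof. The paper works at the level of Schwartz kernels in local coordinates: after reducing $K_r(\lambda)A(\lambda)$ to $A(\lambda)^*K_r(\lambda)^*$ by adjoints and splitting off the part of $A(\lambda)$ supported away from the diagonal (which already lies in $\Psi^{-\infty}_\Lambda(Y)$), it writes $A(\lambda)$ as an oscillatory integral with a symbol satisfying \eqref{asymbol} and computes the kernel of the composition explicitly, namely $B_r(\lambda,y,z)=\int e^{iy\cdot\eta}\,a(\lambda,y,\eta)\,\widehat{K}_r(\lambda,\eta,z)\,\dbar\eta$, using rapid decay of the Fourier transform of $K_r$ in $\eta$ to conclude. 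You instead factor out the exponential weight, $K_r(\lambda)=e^{-\varepsilon r}F_r(\lambda)$, and argue functional-analytically: Leibniz in $(r,\lambda)$, polynomially bounded ($\lambda$-uniform after the correct Sobolev shift) mapping properties of the parameter-dependent factor, and the standard equivalence between smoothing-operator seminorms and operator norms $H^{-N}\to H^{N}$. This buys a coordinate-free treatment that handles both orders of composition symmetrically and avoids the symbol computation, at the cost of invoking uniform Sobolev boundedness of parameter-dependent operators; the paper's kernel computation is more elementary and self-contained, which matters since the same explicit style is reused in Proposition \ref{propCr}. One small imprecision to fix: a parameter-dependent operator of weight $m-|\beta|$ is not bounded with norm $O((1+|\lambda|)^{m-|\beta|})$ between \emph{any} fixed pair of Sobolev spaces; the correct (and sufficient) statement is boundedness $H^{s}\to H^{s-m+|\beta|}$ (or into any weaker space) with norm growing at most polynomially in $\lambda$, which, paired with the rapid decay of all seminorms of $F_r(\lambda)$, still yields the Schwartz estimates you need.
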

\begin{proof} Since the spaces of parameter dependent operators are closed under taking adjoints, we just have to check the result for $A(\lambda) K_r(\lambda)$ (since the result for $K_r(\lambda) A(\lambda)$ would follow from the result for $A(\lambda)^* K_r(\lambda)^*$ by taking adjoints).  By definition of parameter dependent operators (see the discussion around \eqref{asymbol}), if the Schwartz kernel of $A(\lambda)$ vanishes near the diagonal in $Y \times Y$, then $A(\lambda)$ is an element of $\Psi^{-\infty}_\Lambda(Y)$ and in this case it is easy to prove that $A(\lambda) K_r(\lambda) \in \widetilde{\Psi}^{-\infty}_{\Lambda}(Y)$. Thus, we may assume that the Schwartz kernel of $A(\lambda)$ (and $K_r(\lambda)$) is supported on some coordinate patch $\U \times \U$ where $\U$ is a coordinate patch on $Y$, in which case we can write
\[
A(\lambda,y,y') = \int e^{i(y - y') \cdot \eta} \, a(\lambda,y,\eta) \dbar \eta ,
\]
where $a(\lambda, y,\eta)$ satisfies the symbol estimates \eqref{asymbol}.
Therefore with $B_r(\lambda) : = A(\lambda)K_r(\lambda)$, we have
\[
B_r(\lambda,y,z) = \int e^{i y \cdot \eta} \, b(r,\lambda,y,z,\eta) \dbar \eta
\]
where
\[
b(r,\lambda,y,z,\eta) = a(\lambda,y,\eta) \int e^{- i y' \cdot \eta} K_r(\lambda,y',z)\, dy'.
\]
Since $K_r(\lambda,y',z)$ is smooth in all variables and rapidly decreasing as $r,|\lambda| \to \infty$, $b(r,\lambda,y,z,\eta)$ has the same properties and, by well-known results on the Fourier transform, is rapidly decreasing as $|\eta| \to \infty$. It follows that $B_r(\lambda,y,z)$ has the same properties as $K_r(\lambda,y',z)$. This completes our proof.
\end{proof}

\begin{figure} \centering
\includegraphics{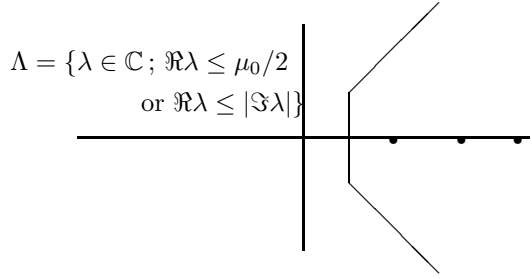} \caption{The sectorial region $\Lambda$. The dots are the eigenvalues of $\Delta_Y$ and $\mu_0 > 0$ is the smallest one.} \label{fig-Gamma2}
\end{figure}

\begin{lemma} \label{lem-estimate}
If $\Lambda$ is the region in Figure \ref{fig-Gamma2} and $a_0 > 0$, then there is a constant $c > 0$ such that for any $a \in [a_0,\infty)$, $\lambda  \in \Lambda$ and $r \geq 1$ we have
\[
\left| \frac{1}{1 - e^{-2 r \sqrt{a - \lambda}}} \right| \leq \frac{1}{1 - e^{- c \sqrt{a_0}}}\ \ , \quad \left| e^{-2 r \sqrt{a - \lambda}}  \right| \leq e^{- cr} \cdot e^{-c \sqrt{|\lambda|}} \cdot  e^{- c \sqrt{a}}.
\]
\end{lemma}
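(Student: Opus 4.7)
The plan is to derive both inequalities from a single lower bound on $\RE\sqrt{a-\lambda}$. The geometric input, read off from Figure \ref{fig-Gamma2}, is that $\Lambda$ stays a uniform angular distance from the ray $[a_0,\infty)$: there is a $\delta\in(0,\pi)$ such that for every $\lambda\in\Lambda$ and every real $a\geq a_0$, the number $w:=a-\lambda$ satisfies $|\arg w|\leq \pi-\delta$ in the principal branch. Far from the origin this is because $\Lambda$ is a sector avoiding a small open cone about the positive real axis, and near the origin because $\Lambda$ meets the positive real axis only in $(0,a_0)$; in the intermediate compact range one uses that $\Lambda$ is closed and disjoint from $[a_0,\infty)$.

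Next I would establish the size comparison $|a-\lambda|\geq c_0(a+|\lambda|)$ for some $c_0>0$. In the regimes $|\lambda|\leq a/2$ and $|\lambda|\geq 2a$ the triangle inequality suffices; in the intermediate range $a/2\leq|\lambda|\leq 2a$ with $|\lambda|$ large, the law of cosines together with $|\arg\lambda|\geq\delta$ gives
\[
|a-\lambda|^2 \geq (a-|\lambda|)^2 + 2a|\lambda|\big(1-\cos(\arg\lambda)\big) \geq 4a|\lambda|\sin^2(\delta/2),
\]
which controls $a+|\lambda|$ up to a constant since $a|\lambda|\geq a^2/2$; the remaining compact case is handled by continuity and compactness on $\{(a,\lambda):a\in[a_0,M],\ \lambda\in\Lambda\cap\overline{B(0,2M)}\}$ for any fixed $M$.

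Combining these two facts yields the key estimate. Writing $w=\rho e^{i\theta}$ with $\rho=|a-\lambda|$ and $|\theta|\leq\pi-\delta$, the principal square root satisfies $\RE\sqrt{w}=\sqrt{\rho}\cos(\theta/2)\geq\sqrt{\rho}\sin(\delta/2)$, whence
\[
\RE\sqrt{a-\lambda} \geq c_1\sqrt{a+|\lambda|} \geq \frac{c_1}{2}\big(\sqrt{a}+\sqrt{|\lambda|}\big),
\]
with $c_1=\sqrt{c_0}\sin(\delta/2)$; the last step uses $\sqrt{x+y}\geq(\sqrt{x}+\sqrt{y})/2$. For the first inequality of the lemma, since $r\geq 1$ and $\RE\sqrt{a-\lambda}\geq c_1\sqrt{a_0}$,
\[
\big|1-e^{-2r\sqrt{a-\lambda}}\big| \geq 1-e^{-2r\,\RE\sqrt{a-\lambda}} \geq 1-e^{-2c_1\sqrt{a_0}}.
\]
For the second, $2r\,\RE\sqrt{a-\lambda}$ exceeds each of $rc_1\sqrt{a_0}$, $c_1\sqrt{a}$, $c_1\sqrt{|\lambda|}$ (using $r\geq 1$), hence one-third of their sum, so exponentiation gives $|e^{-2r\sqrt{a-\lambda}}|\leq e^{-cr}e^{-c\sqrt{a}}e^{-c\sqrt{|\lambda|}}$ for $c=\min\{c_1\sqrt{a_0},c_1\}/3$. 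Replacing $c$ by the smaller of the two constants produced makes both bounds hold with a single $c$.

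The principal obstacle is really just the first step: correctly reading off the sectoriality of $\Lambda$ from Figure \ref{fig-Gamma2} and encoding it as the angular bound $|\arg(a-\lambda)|\leq\pi-\delta$. Once this geometric fact is fixed, the remaining argument is elementary complex analysis, and both inequalities reduce to the single quantitative estimate $\RE\sqrt{a-\lambda}\geq c_1(\sqrt{a}+\sqrt{|\lambda|})/2$.
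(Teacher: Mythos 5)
Your proof is correct and takes essentially the same route as the paper's: both reduce the two inequalities to the single lower bound $\RE\sqrt{a-\lambda}\geq c\big(\sqrt{a}+\sqrt{|\lambda|}\,\big)$, obtained from an angular bound on $\arg(a-\lambda)$ (the paper's $\cos\theta\geq -1/\sqrt{2}$ is your $|\arg(a-\lambda)|\leq\pi-\delta$) combined with the modulus comparison $|a-\lambda|\geq c_0(a+|\lambda|)$, and then conclude by elementary exponential manipulations. The only differences are cosmetic: you verify the modulus comparison by a case analysis (triangle inequality, law of cosines, compactness) where the paper leaves it as an ``elementary exercise,'' and you use an averaging trick in place of the paper's inequality $c(u+v)\leq 2buv$ for the final factorization.
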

\begin{proof}  We first claim there is a $b > 0$ such that for any $a \in (0,\infty)$ and $\lambda  \in \Lambda$, we have
\begin{equation}
\Re \sqrt{a - \lambda} \geq b |\lambda| + b \sqrt{a}, \label{starstar}
\end{equation}
where we will see that $b = \frac{1}{4} \sqrt{1 - \frac{1}{\sqrt{2}}}$ works.
Let $\lambda = x + iy \in \Lambda$ and write $a - \lambda =  |a - \lambda| \, e^{\pm i\theta}$ where $\cos \theta = (a - x)/\sqrt{(a - x)^2 + y^2}$ and where the $\pm$ depends on the sign of $y$. Thus, $\sqrt{a - \lambda} = \sqrt{|a - \lambda|}\, e^{\pm i \theta/2}$, so
\[
\Re \sqrt{a - \lambda} = \sqrt{|a - \lambda|}\, \cos (\theta/2).
\]
By the half-angle formula,
\[
\cos (\theta/2) = \frac{1}{\sqrt{2}}\sqrt{1 + \cos \theta} = \frac{1}{\sqrt{2}} \sqrt{1 + \frac{a - x}{\sqrt{(a - x)^2 + y^2}}}.
\]
An elementary exercise shows that for $\lambda = x + iy \in \Lambda$,
\[
\frac{a - x}{\sqrt{(a - x)^2 + y^2}} \geq - \frac{1}{\sqrt{2}} ,
\]
and $\sqrt{2}\sqrt{|a - \lambda|} \geq \sqrt{|\lambda|}$ and $\sqrt{2} \sqrt{|a - \lambda|} \geq \sqrt{a}$.
Thus,
\begin{align*}
\Re \sqrt{a - \lambda} \geq  \sqrt{|a - \lambda|}\, \frac{1}{\sqrt{2}} \sqrt{1 - \frac{1}{\sqrt{2}}} & = 2^{3/2} b \sqrt{|a - \lambda|}\\
& = 2^{1/2} b \sqrt{|a - \lambda|} + 2^{1/2} b \sqrt{|a - \lambda|}\\ & \geq b \sqrt{|\lambda|} + b \sqrt{a}.
\end{align*}
This proves our claim.

Using (\ref{starstar}), one can verify that for all $a \in [a_0,\infty)$, $\lambda  \in \Lambda$ and $r \geq 1$ we have
\[
\left| \frac{1}{1 - e^{-2 r \sqrt{a - \lambda}}} \right| \leq \frac{1}{1 - e^{-2 b \sqrt{a_0}}}\ \ , \quad \left| e^{-2 r \sqrt{a - \lambda}}  \right| \leq e^{-2 r b (\sqrt{|\lambda|} + \sqrt{a})} .
\]
Note that there is a constant $c \in \R$ with $0 < c \leq 2 b$ such that for all $u \geq 1$ and $v \geq \sqrt{a_0}$ we have $c(u + v) \leq 2 b u v$ (just take $c = 2b/(1 + 1/\sqrt{a_0})$). Putting $u = r \geq 1$ and $v = \sqrt{|\lambda|} + \sqrt{a} \geq \sqrt{a_0}$ shows that
\[
e^{-2 r b (\sqrt{|\lambda|} + \sqrt{a})}  \leq e^{-c r} \cdot e^{-c \sqrt{|\lambda|}} \cdot e^{-c \sqrt{a}}.
\]
This completes our proof.
\end{proof}

\begin{proposition} \label{propCr} If $\Lambda$ is the region in Figure \ref{fig-Gamma2}, then
$C_r(\lambda) \in \widetilde{\Psi}^{-\infty}_{\Lambda}(Y)$.
\end{proposition}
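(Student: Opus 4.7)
The plan is to use the spectral decomposition of $\Delta_Y$ to reduce the operator identity to a pointwise bound on an explicit scalar kernel. Since $\Delta_Y$ is an invertible nonnegative self-adjoint Laplace-type operator on the closed manifold $Y$, it has a complete orthonormal $L^2$-basis of eigenfunctions $\{\phi_k\}$ with eigenvalues $\mu_0 \leq \mu_1 \leq \cdots \to \infty$, all strictly positive. For $\lambda \in \Lambda$ (which avoids the spectrum of $\Delta_Y$), the functional calculus gives the Schwartz kernel
\[
C_r(\lambda, y, y') \;=\; \sum_{k} c_k(r,\lambda)\, \phi_k(y)\, \overline{\phi_k(y')},
\qquad
c_k(r,\lambda) \;=\; \frac{2\sqrt{\mu_k - \lambda}\, e^{-2 r \sqrt{\mu_k - \lambda}}}{1 - e^{-2 r \sqrt{\mu_k - \lambda}}} .
\]
Applying Lemma~\ref{lem-estimate} with $a = \mu_k$ and $a_0 = \mu_0$ gives immediately that
\[
|c_k(r,\lambda)| \;\leq\; C \bigl( \sqrt{\mu_k} + \sqrt{|\lambda|} \bigr)\, e^{-cr}\, e^{-c\sqrt{|\lambda|}}\, e^{-c\sqrt{\mu_k}},
\]
for all $r \geq 1$ and $\lambda \in \Lambda$, with $c > 0$ independent of $k$. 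This is the only genuine analytic input; everything else is bookkeeping.

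Next I would pass to derivatives. For each $\ell, m \in \N$, the chain rule yields
\[
\partial_r^\ell \partial_\lambda^m c_k(r,\lambda) \;=\; Q_{\ell,m}\bigl(r,\, \sqrt{\mu_k - \lambda},\, (\mu_k-\lambda)^{-1/2}\bigr)\, \cdot \bigl(\text{sum of powers of } e^{-2 r \sqrt{\mu_k - \lambda}}\bigr)
\]
for some polynomial $Q_{\ell,m}$. Since $\Re\sqrt{\mu_k-\lambda} \geq b(\sqrt{|\lambda|} + \sqrt{\mu_k})$ (the bound proved inside Lemma~\ref{lem-estimate}) and $\mu_k \geq \mu_0 > 0$, the inverse factor $(\mu_k-\lambda)^{-1/2}$ is uniformly bounded, while all positive powers of $\sqrt{\mu_k-\lambda}$ and of $r$ are absorbed by half of the exponential decay $e^{-cr} e^{-c\sqrt{|\lambda|}} e^{-c\sqrt{\mu_k}}$. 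Hence for any $\ell, m, N$ there is a constant $C_{\ell,m,N}$ such that
\[
|\partial_r^\ell \partial_\lambda^m c_k(r,\lambda)| \;\leq\; C_{\ell,m,N}\, e^{-(c/2) r}\, (1+|\lambda|)^{-N}\, e^{-(c/2)\sqrt{\mu_k}},
\qquad r \geq 1,\ \lambda \in \Lambda.
\]
This handles rapid decay in $\lambda$ (including Schwartz decay at infinity) together with genuine exponential decay $e^{\varepsilon r}$ in $r$ for any $0 < \varepsilon < c/2$.

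Finally I would promote this to the required smoothing statement on $Y \times Y$. Standard elliptic regularity for the Laplace-type operator $\Delta_Y$ gives, for every multi-index $\alpha$, a polynomial estimate $\|\partial^\alpha \phi_k\|_{L^\infty(Y)} \leq P_\alpha(\mu_k)$ (via Sobolev embedding combined with $\|\Delta_Y^j \phi_k\|_{L^2} = \mu_k^j$). Combined with Weyl's law and the exponential factor $e^{-(c/2)\sqrt{\mu_k}}$ from step two, the series
\[
\partial_r^\ell \partial_\lambda^m \partial_y^\alpha \partial_{y'}^\beta C_r(\lambda,y,y') \;=\; \sum_k \partial_r^\ell \partial_\lambda^m c_k(r,\lambda) \, \partial_y^\alpha \phi_k(y) \, \overline{\partial_{y'}^\beta \phi_k(y')}
\]
converges absolutely and uniformly in $(r,\lambda,y,y') \in [1,\infty) \times \Lambda \times Y \times Y$, with a bound of the form $\widetilde{C}_{\ell,m,\alpha,\beta,N}\, e^{-(c/2)r}\, (1+|\lambda|)^{-N}$. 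By definition of $\widetilde{\Psi}^{-\infty}_\Lambda(Y)$ this establishes $C_r(\lambda) \in \widetilde{\Psi}^{-\infty}_\Lambda(Y)$. The only point that requires care is in the second step: keeping track of the polynomial factors produced by $\partial_\lambda$ and $\partial_r$ while verifying that a fixed fraction of the exponential decay survives — which is nothing more than the inequality $t^k e^{-ct} \leq C_{k,c'}\, e^{-c't}$ for $0 < c' < c$, applied to $t \in \{r,\, \sqrt{|\lambda|},\, \sqrt{\mu_k}\}$.
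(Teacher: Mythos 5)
Your proof is correct and follows essentially the same route as the paper: expand $C_r(\lambda)$ in the eigenbasis of $\Delta_Y$, bound the scalar coefficients and all their $(r,\lambda)$-derivatives via Lemma \ref{lem-estimate}, and conclude membership in $\widetilde{\Psi}^{-\infty}_{\Lambda}(Y)$. The only differences are cosmetic: you make explicit the passage from coefficient decay to the $\Psi^{-\infty}(Y)$-valued Schwartz property via eigenfunction sup-norm estimates and Weyl's law (the paper simply invokes the topology on $\Psi^{-\infty}(Y)$), and your derivative formula should also record the powers of $\bigl(1-e^{-2r\sqrt{\mu_k-\lambda}}\bigr)^{-1}$ that differentiation produces, which are harmless because the first estimate of Lemma \ref{lem-estimate} bounds them uniformly.
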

\begin{proof}
Let $\{\mu_k \}$ be the eigenvalues of $\Delta_Y$, with $\mu_0$ the smallest one, and let $\{\varphi_k\}$ be the corresponding orthonormal eigenvectors. Then it follows directly from the properties of the topology on $\Psi^{-\infty}(Y)$ that a parameter dependent operator $K_r(\lambda)$ on $L^2(M)$ defines an element $K_r(\lambda) \in \mathcal{S}([1,\infty)_r \times \Lambda ; \Psi^{-\infty} (Y))$ if and only if for each $k,\ell$, the function
\[
\langle K_r(\lambda) \varphi_k, \varphi_\ell \rangle
\]
is smooth in $(r,\lambda) \in [1,\infty)_r \times \Lambda$ and rapidly decreasing, with all derivatives in $(r,\lambda)$, as $r,|\lambda|,k,\ell \to \infty$. Indeed, one can prove this from the Fourier series representation of the Schwartz kernel of $K_r(\lambda)$:
\[
K_r(\lambda) = \sum_{k,\ell} \langle K_r(\lambda) \varphi_k, \varphi_\ell \rangle\, \varphi_k \otimes \varphi_\ell.
\]
In our situation the Schwartz kernel $C_r(\lambda)$ is of the form
\[
C_r(\lambda) = \sum_k \frac{2 \sqrt{\mu_k - \lambda}}{1 - e^{-2 r \sqrt{\mu_k - \lambda}}} \, e^{-2 r \sqrt{\mu_k - \lambda}}\, \varphi_k \otimes \varphi_k,
\]
so we just have to prove that
\[
\frac{2 \sqrt{\mu_k - \lambda}}{1 - e^{-2 r \sqrt{\mu_k - \lambda}}} \, e^{-2 r \sqrt{\mu_k - \lambda}}
\]
is rapidly decreasing, with all derivatives in $(r,\lambda) \in [1,\infty) \times \Lambda$, as $r,|\lambda|,k \to \infty$. Observe that any such derivative is a linear combination of terms of the form
\begin{equation} \label{fcoeff}
\frac{ \big(\sqrt{\mu_k - \lambda} \big)^x}{\big(1 - e^{-2 r \sqrt{\mu_k - \lambda}}\big)^y} \, e^{-2 z r \sqrt{\mu_k - \lambda}} ,
\end{equation}
where $x \in \Z$ and $y,z \in \N$.

Now according to Lemma \ref{lem-estimate} there is a constant $c > 0$ such that for all $\lambda \in \Lambda$, $r \geq 1$, and all $k$,
\[
\left| \frac{1}{1 - e^{-2 r \sqrt{\mu_k - \lambda}}}  \right| \leq C\ , \quad \text{where}\ \ C : = \frac{1}{1 - e^{-c \sqrt{\mu_0}}},
\]
and
\[
\left| e^{-2 r \sqrt{\mu_k - \lambda}} \right| \leq e^{- cr} \cdot e^{-c \sqrt{|\lambda|}} \cdot  e^{- c \sqrt{\mu_k}}.
\]
Hence, for all $\lambda \in \Lambda$, $r \geq 1$, and all $k$,
\[
\left| \eqref{fcoeff} \right| \leq C^y \big( \sqrt{|\mu_k - \lambda|} \big)^x e^{- czr} \cdot e^{-c z \sqrt{|\lambda|}} \cdot  e^{- c z \sqrt{\mu_k}} ,
\]
which is rapidly decreasing as $r,|\lambda|,k \to \infty$. Thus, $C_r(\lambda) \in \mathcal{S}([1,\infty)_r \times \Lambda ; \Psi^{-\infty} (Y))$. Moreover, because of the $e^{-czr}$ term in the previous inequality it follows that
\[
e^{\varepsilon r} C_r(\lambda) \in \mathcal{S}([1,\infty)_r \times \Lambda ; \Psi^{-\infty} (Y))
\]
where $\varepsilon = c/2$. This completes our proof.
\end{proof}

%***************************************************************%
\subsection{Proof of Proposition \ref{lem-rho}.}
%***************************************************************%

Assuming that $\ker \Delta_Y = \{0\}$ and $\ker \Delta_\infty = \{0\}$, we need to prove that
there is an $r_0 > 0$ such that
\[
\rho(r,s) =  \frac{i}{2\pi} \int_\Gamma \lambda^{-s} \Tr_Y \Big(  R_{\infty}(\lambda)^{-1} R_{\infty}'(\lambda) - R_r(\lambda)^{-1} R_r'(\lambda) \Big) \, d\lambda
\]
belongs to $C^\infty((r_0 , \infty) \times \C)$ and is an entire function of $s \in \C$ that vanishes exponentially fast as $r \to \infty$ uniformly on compact subsets of $\C$. Recall that $\Gamma = \{\lambda \in \C\, ;\, \Re \lambda = c\}$ where $c > 0$ is chosen such that $R_{\infty}(\lambda)^{-1}$ and $R_r(\lambda)^{-1}$ are defined for $\lambda \in (0,c]$ (see Figure \ref{fig-Gamma}).

\begin{proof}
We know that $R_r(\lambda) =
R_{\infty}(\lambda) + C_r(\lambda)$ where $C_r(\lambda)$ is the operator given in Proposition \ref{prop-RRinfty}. Hence,
\begin{equation} \label{RrRinfty}
R_r(\lambda) = (\Id + G_r(\lambda)) R_\infty(\lambda),
\end{equation}
where $G_r(\lambda) = C_r(\lambda) R_\infty(\lambda)^{-1}$, provided that $R_\infty(\lambda)$ is invertible. Now comes the assumption $\ker \Delta_\infty = \{0\}$. Since $\Delta_\infty$ is invertible it follows that (see the discussion around \eqref{R-1}) $R_\infty(0)^{-1}$ exists and even more,  $R_\infty(\lambda)^{-1}$ exists for all $\lambda \in \C \setminus (a,\infty)$ for some $a > 0$. Let
\[
\Lambda = \{ \lambda \in \mathbb{C}\, ;\, \Re \lambda \leq \varepsilon \ \text{or}\ \Re \lambda \leq |\Im \lambda|\},
\]
where $\varepsilon$ is the minimum of $a$ or $\mu_0/2$ with $\mu_0$ is the smallest positive eigenvalue of $\Delta_Y$. Then $R_\infty(\lambda)^{-1}$ is analytic for all $\lambda \in \Lambda$. Moreover, we already know that $R_\infty(\lambda)^{-1}$ is parameter dependent of weight $-1$ (cf.\ Proposition \ref{prop-R}), so by Proposition \ref{propCr}, $C_r(\lambda) \in \widetilde{\Psi}^{-\infty}_{\Lambda}(Y)$. Hence, by Lemma \ref{lem-ideal}, we have $G_r(\lambda) \in \widetilde{\Psi}^{-\infty}_{\Lambda}(Y)$. Thus there is an $r_0 > 0$ such that for all $r > r_0$, the operator norm of $G_r(\lambda)$ on $L^2(Y)$ is less then $1/2$ for all $r > r_0$ and $\lambda \in \Lambda$. Thus, from \eqref{RrRinfty} it follows that $R_r(\lambda)^{-1}$ exists for all $r > r_0$ and $\lambda \in \Lambda$, and
\[
R_r(\lambda)^{-1} = R_\infty(\lambda)^{-1} (\Id + G_r(\lambda))^{-1}.
\]
Now
\[
R_r'(\lambda) = G_r'(\lambda) R_\infty(\lambda) + (\Id + G_r(\lambda)) R_\infty'(\lambda)
\]
so
\[
R_{\infty}(\lambda)^{-1} R_{\infty}'(\lambda)
-  R_r(\lambda)^{-1} R_r'(\lambda)= - R_\infty(\lambda)^{-1} (\Id + G_r(\lambda))^{-1} G_r'(\lambda) R_\infty(\lambda).
\]
From this we see that for all $r > r_0$,
\begin{equation} \label{rhors}
\rho(r,s) = - \frac{i}{2\pi} \int_\Gamma \lambda^{-s} \Tr_Y \Big((\Id + G_r(\lambda))^{-1} G_r'(\lambda) \Big)   \, d\lambda,
\end{equation}
where we can put $\Gamma = \{ \lambda \in \C\, ;\, \Re \lambda = \varepsilon\}$, a contour which is independent of $r > r_0$. Finally, recalling that $G_r(\lambda) \in \widetilde{\Psi}^{-\infty}_\Lambda(Y)$ and the definition of the space $\widetilde{\Psi}^{-\infty}_\Lambda(Y)$ it follows immediately that $\rho(r,s)$ is, for $r > r_0$, an entire function of $s \in \C$ that vanishes exponentially fast as $r \to \infty$ uniformly on compact subsets of $\C$.
\end{proof}

The assumption $\ker \Delta_\infty = \{0\}$ is important for the following reason. If $\ker \Delta_\infty \ne \{0\}$ then $R_\infty(\lambda)^{-1}$ would have a pole at $\lambda = 0$. Thus, $G_r(\lambda) = C_r(\lambda) R_\infty(\lambda)^{-1}$ may have, for any $r > 0$ no matter how large, an arbitrary large norm for small $\lambda > 0$. Hence for any $r > 0$, $R_r(\lambda)^{-1}$ may fail to exist for some $\lambda > 0$ sufficiently small. Now recall that $\Gamma = \{\lambda \in \C\, ;\, \Re \lambda = c\}$ is such that  $R_r(\lambda)^{-1}$ must be defined for $\lambda \in (0,c]$. Thus, if $\ker \Delta_\infty \ne \{0\}$, then it is possible that $c$ would depend on $r$. This would make the analysis of \eqref{rhors} highly nontrivial.

%***************************************************************%
\subsection{Proof of Proposition \ref{lem-rho2}}
%***************************************************************%

To prove Proposition \ref{lem-rho2} we first compute the Dirichlet-to-Neumann maps appearing in \eqref{varrho}; for the notation in the following proposition see Section \ref{sec-stretch2}.

\begin{proposition} \label{propRrinfty}
We have
$\displaystyle
R_r(\lambda) =
R_\infty(\lambda) + T_r(\lambda)
$, where
\[
R_\infty(\lambda) = \begin{pmatrix}
R_{1,\infty}(\lambda) & 0 \\
0 & R_{2,\infty}(\lambda)
\end{pmatrix}
\]
and
\[
T_r(\lambda) = \frac{\sqrt{\Delta_Y(\lambda)}}{\sinh (2r \sqrt{\Delta_Y(\lambda)})} \begin{pmatrix} e^{-2r \sqrt{\Delta_Y(\lambda)}} & - \Id \\
- \Id & e^{-2r\sqrt{\Delta_Y(\lambda)}}\end{pmatrix} .
\]
\end{proposition}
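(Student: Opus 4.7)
My plan is to compute $R_r(\lambda)$ directly from its definition as the sum of the two Dirichlet-to-Neumann maps coming from the ``outside'' manifold $M_- = M_1 \sqcup M_2$ and the ``inside'' cylinder $M_+ = N_r = [-r,r] \times Y$ (these being the pieces into which the double hypersurface $\widetilde Y = \{-r,r\} \times Y \cong Y \sqcup Y$ partitions $M_r$). The outside contribution is manifestly the block diagonal operator $\mathrm{diag}(\mathcal{N}_{M_1}(\lambda), \mathcal{N}_{M_2}(\lambda))$ on $C^\infty(Y) \oplus C^\infty(Y)$, so by applying Proposition \ref{prop-RRinfty}\textit{(1)} separately to each $M_{i,\infty} = M_i \cup_Y Z_i$ I can write this as $R_\infty(\lambda) - \sqrt{\Delta_Y(\lambda)}\,\mathrm{Id}_2$, where $\mathrm{Id}_2$ denotes the block identity. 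What remains is to show that $\mathcal{N}_{N_r}(\lambda) = \sqrt{\Delta_Y(\lambda)}\,\mathrm{Id}_2 + T_r(\lambda)$.

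For this cylinder computation, given $(\psi_1,\psi_2) \in C^\infty(Y) \oplus C^\infty(Y)$ I would solve the Dirichlet problem
\[
(-\partial_x^2 + \Delta_Y(\lambda)) \varphi = 0, \qquad \varphi|_{x=-r} = \psi_1, \quad \varphi|_{x=r} = \psi_2,
\]
on $[-r,r] \times Y$ using the ansatz $\varphi(x,\cdot) = A \cosh(xS) + B \sinh(xS)$ with $S := \sqrt{\Delta_Y(\lambda)}$. The boundary conditions give
\[
A = \frac{\psi_1+\psi_2}{2\cosh(rS)}, \qquad B = \frac{\psi_2-\psi_1}{2\sinh(rS)}.
\]
The outer unit normal to $N_r$ is $-\partial_x$ at $x=-r$ and $+\partial_x$ at $x=r$, so $\mathcal{N}_{N_r}(\lambda)(\psi_1,\psi_2) = (-\partial_x\varphi|_{x=-r}, \partial_x\varphi|_{x=r})$.

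Reading off these normal derivatives and using the identities $\tfrac{\sinh(rS)}{2\cosh(rS)} + \tfrac{\cosh(rS)}{2\sinh(rS)} = \coth(2rS)$ and $\tfrac{\sinh(rS)}{2\cosh(rS)} - \tfrac{\cosh(rS)}{2\sinh(rS)} = -\tfrac{1}{\sinh(2rS)}$ yields
\[
\mathcal{N}_{N_r}(\lambda) = S \begin{pmatrix} \coth(2rS) & -1/\sinh(2rS) \\ -1/\sinh(2rS) & \coth(2rS) \end{pmatrix}.
\]
The final step is to invoke the identity $\coth z = 1 + e^{-z}/\sinh z$ (which is just $\cosh z - \sinh z = e^{-z}$) to split off the constant $\sqrt{\Delta_Y(\lambda)}\,\mathrm{Id}_2$ on the diagonal, leaving exactly the operator $T_r(\lambda)$ stated in the proposition. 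Adding to the outside contribution produces $R_r(\lambda) = R_\infty(\lambda) + T_r(\lambda)$.

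There is no real obstacle here: once one recognizes that the problem decomposes through the spectral decomposition of $\Delta_Y$ (justifying the functional calculus expressions $\cosh(xS)$, $\sinh(xS)$), the argument is a direct two-sided generalization of Proposition \ref{prop-RRinfty}\textit{(2)}. The only mildly subtle bookkeeping is the sign convention for the outer normal at the two boundary components of $N_r$, which must be handled consistently so that the off-diagonal entries of $T_r(\lambda)$ appear with the correct sign.
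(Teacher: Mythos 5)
Your proposal is correct and follows essentially the same route as the paper: both solve the cylinder Dirichlet problem on $N_r=[-r,r]\times Y$ explicitly via the functional calculus in $\Delta_Y(\lambda)$, read off the two outward normal derivatives to get $\mathcal{N}_{N_r}(\lambda)=\sqrt{\Delta_Y(\lambda)}\,\mathrm{Id}+T_r(\lambda)$, and then combine with part \textit{(1)} of Proposition \ref{prop-RRinfty} applied to each $M_{i,\infty}$. The only difference is the cosmetic choice of ansatz (your symmetric $A\cosh(xS)+B\sinh(xS)$ versus the paper's solution written in terms of $x-r$), and your sign bookkeeping at the two boundary components is handled correctly.
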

\begin{proof}
We first claim that for $N_r = [-r,r] \times Y$, we have
\[
\mathcal{N}_{N_r} =
\begin{pmatrix}
\sqrt{\Delta_Y(\lambda)} & 0 \\ 0 & \sqrt{\Delta_Y(\lambda)}
\end{pmatrix}
+
T_r(\lambda).
\]
To prove this we first note that given $(\psi_1 , \psi_2) \in C^\infty(Y \sqcup Y)$,
\begin{multline*}
\varphi = \cosh\big((x - r) \sqrt{\Delta_Y(\lambda)}\big)\, \psi_2\\
- \frac{\sinh\big( (x - r) \sqrt{\Delta_Y(\lambda)} \big)}{\sinh \big(2 r \sqrt{\Delta_Y(\lambda)}\big)} \Big(\psi_1 - \cosh\big(2r \sqrt{\Delta_Y(\lambda)} \psi_2 \big) \Big)
\end{multline*}
solves $(-\D_x^2 + \Delta_Y(\lambda)) \varphi = 0$ and $\varphi|_{x = -r} = \psi_1$ and $\varphi|_{x = r} = \psi_2$.
Algebra shows that
\[
-\D_x \varphi|_{x = -r} = \sqrt{\Delta_Y(\lambda)} \psi_1 + \frac{\sqrt{\Delta_Y(\lambda)}}{\sinh (2r \sqrt{\Delta_Y(\lambda)})} \Big( - \psi_2 + e^{-2r \sqrt{\Delta_Y(\lambda)}} \psi_1 \Big)
\]
and
\[
\D_x \varphi|_{x = r} = \sqrt{\Delta_Y(\lambda)} \psi_1 + \frac{\sqrt{\Delta_Y(\lambda)}}{\sinh (2r \sqrt{\Delta_Y(\lambda)})} \Big( - \psi_1 + e^{-2r \sqrt{\Delta_Y(\lambda)}} \psi_2 \Big).
\]
Hence,
\[
\mathcal{N}_{N_r} =
\begin{pmatrix}
\sqrt{\Delta_Y(\lambda)} & 0 \\ 0 & \sqrt{\Delta_Y(\lambda)}
\end{pmatrix}
+
\frac{\sqrt{\Delta_Y(\lambda)}}{\sinh (2r \sqrt{\Delta_Y(\lambda)})}
\begin{pmatrix} e^{-2r \sqrt{\Delta_Y(\lambda)}} & - \Id \\
-\Id & e^{-2r \sqrt{\Delta_Y(\lambda)}}
\end{pmatrix},
\]
which proves our claim.

We now prove the proposition. Indeed, by definition of $R_r(\lambda)$ we have
\[
R_r(\lambda) =
\begin{pmatrix}
\mathcal{N}_{M_1} & 0 \\ 0 & \mathcal{N}_{M_2}
\end{pmatrix} + \mathcal{N}_{N_r},
\]
and by our formula for $\mathcal{N}_{N_r}$ computed above, we see that
\[
R_r(\lambda) =  \begin{pmatrix}
\mathcal{N}_{M_1} + \sqrt{\Delta_Y(\lambda)} & 0 \\ 0 & \mathcal{N}_{M_2} + \sqrt{\Delta_Y(\lambda)}
\end{pmatrix} +
T_r(\lambda),
\]
which when combined with \textit{(1)} of Proposition \ref{prop-RRinfty} proves the result.
\end{proof}

The proof of Proposition \ref{propCr} can be adapted to prove the following.

\begin{proposition} \label{propKr} If $\Lambda$ is the region in Figure \ref{fig-Gamma2}, then
$T_r(\lambda) \in \widetilde{\Psi}^{-\infty}_{\Lambda}(Y)$.
\end{proposition}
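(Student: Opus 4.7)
The plan is to adapt the proof of Proposition \ref{propCr} virtually verbatim, exploiting the fact that every entry of $T_r(\lambda)$ is a smooth function of the single self-adjoint operator $\sqrt{\Delta_Y(\lambda)}$. Since $\Delta_Y$ is invertible, its eigenvalues $\{\mu_k\}$ are all positive, with smallest $\mu_0 > 0$, and $T_r(\lambda)$ is simultaneously diagonalized by the orthonormal eigenbasis $\{\varphi_k\}$. Arguing as in Proposition \ref{propCr}, it suffices to show that each of the four scalar-valued matrix entries, as a function of $(r,\lambda,k) \in [1,\infty) \times \Lambda \times \N$, is smooth in $(r,\lambda)$ and rapidly decreasing, together with all derivatives, as $r, |\lambda|, k \to \infty$, with exponential decay in $r$.

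First, using $\sinh(z) = (e^{z} - e^{-z})/2$, I rewrite
\[
\frac{\sqrt{\mu_k-\lambda}}{\sinh(2r\sqrt{\mu_k-\lambda})} = \frac{2\sqrt{\mu_k-\lambda}\, e^{-2r\sqrt{\mu_k-\lambda}}}{1 - e^{-4r\sqrt{\mu_k-\lambda}}}.
\]
Each off-diagonal entry of $T_r(\lambda)$ equals the negative of this, while each diagonal entry acquires an additional factor $e^{-2r\sqrt{\mu_k-\lambda}}$ in the numerator. Differentiating any number of times in $r$ or $\lambda$ then produces a finite linear combination of terms of the same shape as \eqref{fcoeff}, namely
\[
\frac{(\sqrt{\mu_k-\lambda})^x}{(1 - e^{-4r\sqrt{\mu_k-\lambda}})^y}\, e^{-2zr\sqrt{\mu_k-\lambda}}, \qquad x \in \Z,\ y \in \N,\ z \in \N,\ z \geq 1,
\]
the only structural change from the $C_r(\lambda)$ analysis being a $4$ in place of a $2$ in the denominator exponent.

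Second, Lemma \ref{lem-estimate} applies unchanged after the replacement $r \leadsto 2r$, because its proof depends only on the lower bound $\Re\sqrt{a - \lambda} \geq b(\sqrt{|\lambda|} + \sqrt{a})$ for $a \geq \mu_0$ and $\lambda \in \Lambda$, an estimate independent of $r$. It therefore provides a constant $C$, uniform in $r \geq 1$, $\lambda \in \Lambda$, and $k$, with $|1 - e^{-4r\sqrt{\mu_k-\lambda}}|^{-1} \leq C$, together with a constant $c > 0$ giving
\[
\bigl|e^{-2zr\sqrt{\mu_k - \lambda}}\bigr| \leq e^{-czr}\cdot e^{-cz\sqrt{|\lambda|}}\cdot e^{-cz\sqrt{\mu_k}}.
\]
Combined with the at-most polynomial growth of $(\sqrt{\mu_k - \lambda})^x$, these bounds deliver the required rapid decay of every derivative in $(r,|\lambda|,k)$, and the uniform factor $e^{-czr}$ with $z \geq 1$ secures the exponential decay in $r$ needed to place $T_r(\lambda)$ in $\widetilde{\Psi}^{-\infty}_\Lambda(Y)$.

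I do not anticipate any real obstacle: the only genuine analytic input, namely the $\Lambda$-uniform control of $\Re\sqrt{\mu_k - \lambda}$, has already been extracted in Lemma \ref{lem-estimate}, and the matrix nature of $T_r(\lambda)$ is superficial because its four entries are simultaneously diagonalized and estimated one-by-one in exactly the form to which the proof of Proposition \ref{propCr} has already reduced $C_r(\lambda)$.
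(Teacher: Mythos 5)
Your proposal is correct and follows exactly the route the paper intends: the paper omits the details with the remark that the proof of Proposition \ref{propCr} can be adapted, and your adaptation (expanding in the eigenbasis of $\Delta_Y$, rewriting $\sinh(2r\sqrt{\Delta_Y(\lambda)})^{-1}$ as $2e^{-2r\sqrt{\Delta_Y(\lambda)}}/(1-e^{-4r\sqrt{\Delta_Y(\lambda)}})$, and invoking Lemma \ref{lem-estimate} with $r$ replaced by $2r$) is precisely that adaptation.
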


Now that we have Propositions \ref{propRrinfty} and \ref{propKr} we can use them
to prove Proposition \ref{lem-rho2} in an almost identical way as we proved Proposition \ref{lem-rho}. Thus, we omit the similar details.

% ------------------------------------------------------------------------
%\bibliographystyle{amsplain}
%\bibliography{addarchive,addbook,archive,addunpub,Bib}

% ------------------------------------------------------------------------
\def\cprime{$'$} \def\polhk#1{\setbox0=\hbox{#1}{\ooalign{\hidewidth
  \lower1.5ex\hbox{`}\hidewidth\crcr\unhbox0}}}
\providecommand{\bysame}{\leavevmode\hbox to3em{\hrulefill}\thinspace}
\providecommand{\MR}{\relax\ifhmode\unskip\space\fi MR }
% \MRhref is called by the amsart/book/proc definition of \MR.
\providecommand{\MRhref}[2]{%
  \href{http://www.ams.org/mathscinet-getitem?mr=#1}{#2}
}
\providecommand{\href}[2]{#2}

\end{document}